\newcommand{\etaspec}{\bar\eta}
\newcommand{\RR}{\mathbb{R}}
\newcommand{\norm}[1]{\| #1 \|}
\newcommand{\Id}{{\mathrm{Id}}}
\newcommand{\prox}{\mathrm{prox}}
\newcommand{\Ii}{\mathcal{I}}
\newcommand{\x}{x}
\newcommand{\xt}{\tilde x}
\newcommand{\p}{p}
\newcommand{\Ee}{\mathcal{E}}
\DeclareMathOperator{\argmin}{argmin}
\DeclareMathOperator{\supp}{supp}
\DeclareMathOperator{\sign}{sign}
\DeclareMathOperator{\Span}{Span}
\DeclareMathOperator{\diverg}{div}
\newcommand{\uargmin}[1]{\underset{#1}{\argmin}\;}
\newcommand{\snorm}[1]{\left\|#1\right\|}
\newcommand{\normb}[1]{\Big|\!\Big| #1 \Big |\!\Big|}
\newcommand{\enscond}[2]{ \left\{ #1 \;:\; #2 \right\} }
\newcommand{\dotp}[2]{\left\langle #1,\,#2 \right\rangle}
\newcommand{\abs}[1]{\lvert #1 \rvert}
\newcommand{\choice}[1]{ %
	\left\{ %
		\begin{array}{ll} #1 \end{array} %
	\right. }
\newcommand{\ifq}{{\text{if} \quad}}
\newcommand{\otherwise}{{\text{otherwise}}}
\newtheorem{defn}{Definition}
\newtheorem{prop}{Proposition}
\newcommand{\qandq}{ \quad \text{and} \quad }
\newcommand{\ie}{{\em i.e.,~}}
\newcommand{\eg}{{\em e.g.,~}}
\begin{document}

\title{Block based refitting in $\ell_{12}$ sparse regularisation\\
}

\date{}


\author{Charles-Alban Deledalle\\
  Department  of  Electrical  and  Computer  Engineering\\   University  of  California\\   San  Diego,   La  Jolla,   USA
  \and
  Nicolas Papadakis\\
  CNRS, Univ. Bordeaux, IMB, UMR 5251,\\ F-33400 Talence, France
  \and
  Joseph Salmon\\
  IMAG, Univ.   Montpellier,    CNRS,   Montpellier,    France
  \and
  Samuel Vaiter\\
  CNRS \& IMB, UMR 5584\\ Univ. Bourgogne Franche-Comté, F-21000 Dijon, France
}


\maketitle

\begin{abstract}
  In many linear regression problems, including ill-posed inverse problems in image restoration,
  the data exhibit some sparse structures that can be used to regularize the inversion.
  To this end, a classical path is to use $\ell_{12}$ block based regularization.
  While efficient at retrieving
  the inherent sparsity patterns of the data -- the support -- the estimated
  solutions are known to suffer from a systematical bias.
  We propose a general framework for removing this artifact
  by refitting the solution towards the data
  while preserving key features of its structure such as the support.
  This is done through the use of refitting block penalties that
  only act on the support of the estimated solution. Based on an analysis of
  related works in the literature, we introduce a new penalty that is well
  suited for refitting purposes.
  We also present a new algorithm to obtain the refitted
  solution along with the original (biased) solution for any convex refitting
  block penalty. Experiments illustrate the good behavior of the proposed
  block penalty for refitting solutions of Total Variation and Total Generalized
  Variation models.

\end{abstract}

\section{Introduction}\label{sec:intro}

We consider linear inverse problems of the form $y = \Phi x + w$,
where $y \in \RR^p$ is an observed degraded image, $x \in \RR^n$ the unknown clean image, $\Phi: \RR^{n} \to \RR^p$
a linear operator and $w \in \RR^p$ a noise component, typically a zero-mean
white Gaussian random vector with standard deviation $\sigma > 0$.
To reduce the effect of noise and the potential ill-conditioning of $\Phi$,
we consider a regularized least squares problem with a sparse analysis
regularization term based on an $\ell_{12}$ block pernalty of the form
\begin{align}\label{isotv}
  \hat{x} \in
  \uargmin{x \in \RR^n} \tfrac12 \norm{\Phi x - y }_2^2 + \lambda \norm{\Gamma x}_{1,2}\enspace,
\end{align}
where $\lambda > 0$ is a regularization parameter, $\Gamma : \RR^{n} \to \RR^{m \times b}$ is a linear analysis operator
mapping an image over $m$ blocks of size $b$, and for $z \in \RR^{m \times b}$
\begin{align}
  \norm{z}_{1,2}=\sum_{i=1}^m \norm{z_i}_2=\sum_{i=1}^m \Big( \sum_{j=1}^b z_{i,j}^2 \Big)^{1/2}~,
\end{align}
with $z_i=(z_{i,j})_{j=1}^b\in\RR ^b$.
Note that the terminology {\it sparse analysis} is used here by
opposition to {\it sparse synthesis} models as discussed in the seminal work of
\cite{elad2007analysis}.
The first term in \eqref{isotv} is a data fidelity term enforcing
$x$ to be close to $y$ through $\Phi$, while the second term enforces the so-called
group sparsity on $x$ (sometimes refered to as joint sparsity, block sparsity or structured sparsity)
capturing the organization of the data as encoded by $\Gamma$, see for instance
\cite{bach2012optimization}.

\subsection{Related examples}

A typical example is the Lasso (Least absolute shrinkage and selection operator) \cite{tibshirani1996regression}.
The Lasso is a statistical procedure used for variable selection and
relying on regularized linear least square regression as
expressed in eq.~\eqref{isotv} in which $\Gamma = \Id$, $m = n$ and $b = 1$.
The Lasso is known to promote sparse solutions, \ie such that $\hat x_k = 0$ for
most indices $1 \leq k \leq n$.
Since blocks are of size $b=1$, the regularization term boils down to the classical
$\ell_1$ sparsity term that is unstructured as no interactions between the elements of $\hat x$ are considered.
In this paper, we will focus instead on cases of block penalties
where $b > 1$.  The group Lasso \cite{Yuan_Lin06,Lin_Zhang06} is one of them,
for which $\Gamma$ is designed to reorganize the elements $\hat x_k$
into groups $\hat z_i = (\Gamma \hat x)_i$ supposedly meaningful
according to some prior knowledge on the data.
The group Lasso is known to promote block sparse solutions,
\ie such that $\hat z_i = 0_b$ for most of the groups $1 \leq i \leq m$.
Note that elements within a non-zero group $\hat z_i \ne 0_b$ are not required to be sparse.

Regarding image restoration applications,
the authors of \cite{peyre2011group} use an $\ell_{12}$ regularization term
where $\Gamma$ extracts $m=n$ overlapping blocks of wavelet coefficients of size $b$ or where $\Gamma$ use a dyadic decomposition of the wavelet coefficients into blocks of variable size but non-overlapping~\cite{peyre2011dyadic}.
Such strategy was also used in audio processing~\cite{yu2008audio} for denoising.
Another example that we will investigate here is the one of the total-variation (TV) \cite{rudin1992nonlinear}.
We can distinguish two different forms of TV models.
Anisotropic total-variation (TVaniso) \cite{esedoglu2004decomposition}, considers
$\Gamma$ the operator which concatenates the vertical and horizontal
components of the discrete gradients into a vector of size $m=2n$, hence $b=1$.
Isotropic total-variation (TViso) considers instead
$\Gamma = \nabla$ being the operator which extracts $m=n$
discrete image gradient vectors of size $b = 2$.
Unlike TVaniso, TViso jointly enforces vertical and horizontal components of
the gradient to be simultaneously zero.
Since TVaniso does not take into account interactions between both directions,
it over favors vertical and horizontal structures while TViso behaves similarly
in all directions, hence their name \cite{esedoglu2004decomposition}.
Both models promote sparsity of the discrete gradient field of the image,
and, as a result, their solutions are piece-wise constant.
A major difference is that
TVaniso favors constant regions that are rectangular-like shaped and separated
by sharp edges. On the other hand,  TViso favors constant regions that are rounded-like shaped, as it has been shown by
 studying the properties of TV reconstructions in terms of extreme points of the level sets of the regularisation functional  \cite{chambolle2010introduction,bredies2020sparsity,boyer2019representer} .
In practice, when considering classical discretizations,  constant areas of TViso solutions  are separated by fast but gradual transitions. These smooth transitions only appear in  particular directions that depend on the considered discretization scheme on the regular image grid (see \cite{Caillaud} for a detailed analysis). These numerical artifacts can be solved with more advanced schemes \cite{condat2017discrete,chambolle2020crouzeix,Caillaud}.

Nevertheless, as TV is designed to promote piece-wise constant solutions, it is known to produce {\it staircasing} artifacts
that are all the more harmful as the images contain shaded objects
\cite{dobson1996recovery,chan2000high}. To reduce this effect,
the authors of \cite{liu2015image} suggested using an $\ell_{12}$ block sparsity
term not only by grouping vertical and horizontal components of the gradient,
but by grouping neighboring gradients in overlapping patches.
An alternative to reduce staircasing, that we will also investigate here, is
the second order Total Generalized Variation (TGV) model \cite{bredies2010total}
that promotes piece-wise affine solutions. As we will see,
TGV is another example of models that falls into this type of least squares problems regularized
with a sparse analysis term based on $\ell_{12}$ block penalties.
Sparsity in that case encodes that sought images are composed of
few shaded regions with few variations of slopes and separated by edges.

\subsection{Support of the solution}

Solutions of the sparse analysis regularization
model in \eqref{isotv} are known to be sparse
\cite{elad2007analysis,nam2013cosparse}, \ie such that
$(\Gamma x)_i = 0_b$ for most blocks $1 \leq i \leq m$.
It results that a key notion, central to all of these estimators, is
the one of support, \ie the set of non-zero blocks in $\Gamma \hat x$, defined as
\begin{align}
  \hat{\Ii} = \supp(\Gamma \hat x) &= \enscond{1 \leq i \leq m}{(\Gamma \hat x)_i \ne 0_b}~.
\end{align}
For the group Lasso, the support is typically used to identify groups of
covariates (columns of $\Phi$) being explanatory variables
for the dependent variable $y$ (\ie significantly correlated with $y$).
For TViso, the support is the set of pixel indices where
transitions occur in the restored image. In general, the support plays
an important role as it captures the intrinsic structural information underlying the data.
While being biased, in practice,
the estimate $\hat{x}$ obtained by sparse analysis regularization \eqref{isotv}
recover quite correctly the support $\supp(\Gamma x)$ of the underlying sparse signal $x$.
Under some additional assumptions, support recovery is
even proven to be exact as proved in \cite{vaiter2012robust} for $b=1$ (anisotropic case) and \cite{vaiter2017model} for $b \geq 1$.

\subsection{Bias of the solution}

Though the support $\hat \Ii$ of $\hat{x}$ can be very close to the one
of the sought image $x$,
the estimated amplitudes $\hat{x}_i$ suffers from a systematical bias.
When $\Phi = \Id$, the Lasso corresponds to the soft-thresholding (ST) operator
\begin{align}
  \hat{x}_i = {\rm ST}(y_i, \lambda) = \max(y_i - \lambda \sign y_i, 0)
\end{align}
for which all non-zero elements of the solutions are shrinked towards $0$ by a shift $\pm \lambda$
resulting to under- and over-estimated values.
With TViso, this bias is reflected by a loss of contrast in the image
since the amplitudes of some regions are regressed towards
the mean of the image \cite{strong2003edge,vaiter2013local,vaiter2017degrees}.
In TGV, not only a loss of contrast results from this bias,
but we observe that the slopes in areas of transitions
are often mis-estimated~\cite{kongskov2019directional}.

\subsection{Boosting approaches}

Given the artifacts induced by the $\ell_{12}$ sparse regularization,
many approaches have been developed to re-enhance the quality of
the solutions, {\it e.g.}, to reduce the loss of contrast and staircasing of TViso.
We refer to these approaches as boosting.
Most of them consist in solving \eqref{isotv}
iteratively based on the residue $\Phi \hat{x} - y$, or a related quantity,
obtained during the previous iterations. Among them, the well-known Bregman
iterations \cite{Osher} is often considered to recover part of
the loss of contrast for TViso. Other related procedures
are twicing \cite{Tukey77}, boosting with the $\ell_2$ loss
\cite{Buhlmann_Yu03},
unsharp residual iteration \cite{charest2008iterative},
SAIF-boosting \cite{milanfar2013tour,talebi2013saif},
ideal spectral filtering in the analysis sense \cite{gilboa2014total}
and SOS-boosting \cite{romano2015boosting}. While these approaches reduce
the bias in the estimated amplitudes, the support $\hat \Ii$ of the
original solution is not guaranteed to be preserved in the boosted solution,
even though this one may correspond to the support of the sought image $x$.

\subsection{Projection on the support}

Given the key role of the support of solutions of \eqref{isotv},
we believe that it is of main importance that a re-enhanced solution $\tilde x$ preserves it, \ie
such that $\supp(\Gamma x) \subseteq \hat{\Ii}$. For this reason, we focus
on re-fitting strategies that, unlike boosting, reduce the bias while
preserving the support of the original solution.
In the Lasso ($\Gamma=\Id$, $m=n$ and $b=1$) \cite{tibshirani1996regression},
a well known re-fitting scheme consists in performing {\it a posteriori}
a least-square re-estimation of the non-zero coefficients of the solution.
This post re-fitting technique became popular under various names in the
statistical literature:
Hybrid Lasso \cite{efron2004least}, Lasso-Gauss \cite{Rigollet_Tsybakov11},
OLS post-Lasso \cite{Belloni_Chernozhukov13},
Debiased Lasso (see \cite{Lederer13,Belloni_Chernozhukov13} for extensive
details on the subject).
Such approaches consists in approximating $y$ through $\Phi$
by an image sharing the same support as $\hat{x}$:
\begin{equation}\label{proj_supp}
  \tilde{x}^{\supp} \in \uargmin{x; \; \supp(\Gamma x) \subseteq \hat{\Ii}}
  \tfrac12 \norm{\Phi x - y}_2^2 \enspace,
\end{equation}
where $\hat{\Ii} = \supp(\Gamma \hat{x})$.
While this strategy works well for blocks of size $b=1$,
\eg for the Lasso or TVaniso,
it suffers from an excessive increase of variance whenever $b\geq 2$
(see \cite{deledalle2016clear} for illustrations on TViso).
This is due to the fact that solutions do not only present sharp edges, but may involve larger regions of gradual transitions where the debiasing may be too strong by re-introducing too much noise.
To cope with this issue, additional features of $\hat x$ than its support
must be also preserved by a refitting procedure.

\subsection{Advanced refitting strategies}
For the Lasso, it has been observed that  a pointwise preservation of the sign of $\hat x_i$ onto the support improves the numerical performances of the refitting \cite{chzhen2019}.
For $b=2$ and TViso like models, the joint projection on the  support with conservation of the  direction (or orientation) of $(\Gamma \hat x)_i$ has  been proposed in  \cite{brinkmann2016bias}.
Extension to second order regularization such as TGV \cite{bredies2010total} are   investigated in \cite{burger2019convergence} in the context of partially order spaces and approximate operators $\Phi$.
In a parallel line of research, it has been proposed in \cite{weiss2019contrast} to respect the inclusion of the level lines of $\hat x$ in the refitting by solving an isotonic regression problem.
All these models are constrained to respect exactly  the orientation $(\Gamma \hat x)_i$ of the biased solution on elements of the support, \ie when $i \in \hat \Ii$.
In \cite{deledalle2016clear,pierre2017luminance}, an alternative approach, based on the
preservation of covariant information between $\hat x$ and $y$,
aims only at preserving the orientation $(\Gamma \hat x)_i$ to some extent.
While also respecting the support of $\hat x$, this gives more flexibility for the refitted solution
to correct $\hat x$ and adapt to the data content $y$. This model is nevertheless insensitive to the
direction and it involves a quadratic penalty that tends to promote over smoothed refittings.

\subsection{Outline and contributions}%

In Section \ref{sec:refitting}, we present a general framework for refitting solutions promoted by $\ell_{12}$ sparse regularization \eqref{isotv}  that extends a preliminary version of this work \cite{deledalle2019refitting}.
Our variational refitting method relies on the use of block penalty  functions that act on the support of the biased solution $\hat x$.
We introduce the Soft-penalized Direction model (SD), while discussing suitable properties a refitting block penalty should satisfy.

In Section \ref{sec:practice}, we propose stable algorithms to compute our refitting strategy for any convex refitting block penalty.

We show in Section \ref{sec:related} how our model relates and inherits the advantages of other methods such as Bregman iterations \cite{Osher} or de-biasing approaches \cite{brinkmann2016bias,deledalle2016clear}.

Experiments in Section \ref{sec:results} exhibit  the practical benefits for the SD refitting for imaging problems involving TViso, a variant of TViso for color images and TGV based regularization.

With respect to the short paper \cite{deledalle2019refitting}, the contributions are as follows. We propose new  block penalties and a deep analysis of their properties. While providing technical information on the implementation of particular block penalties, we also detail  how the process can be generalized to arbitrary convex  functions.  The re-fitting scheme has been extended to another optimization algorithm, and a TGV based regularization model  is finally proposed (\ref{sec:tgv})  and experimented (Fig. \ref{fig:tgv}).

\section{Refitting with block penalties}\label{sec:refitting}
The refitting procedure of a biased solution $\hat x$ of \eqref{isotv} is expressed
in the following general framework
\begin{align}\label{general_refit}
  &
  \tilde{x}^{\phi} \in \!\! \uargmin{x; \; \supp(\Gamma x) \subseteq \hat{\Ii}} \!\!\!
  \frac{\norm{\Phi x - y}_2^2}{2}  +
  \sum_{i \in \hat{\Ii}} \phi((\Gamma x)_i, (\Gamma \hat x)_i)\enspace,
\end{align}
where $\phi : \RR^b \times \RR^b \to \RR$ is a block penalty
($b \geq 1$ is the size of the blocks)
promoting $\Gamma x$ to share information with $\Gamma \hat{x}$ in some sense to be specified.
To compute global optimum of the refitting model \eqref{general_refit},
we only consider in this paper
refitting block penalties such that $z \mapsto \phi(z, \hat{z})$
is convex.

To refer to some features of the vector $\Gamma \hat{x}$, let us first define properly the notions of
relative orientation, direction and projection between two vectors.
\begin{defn}\label{def:cos_and_proj}
  Let $z$ and $\hat z$ be vectors in $\RR^b$, we define
  \begin{align}
    \cos(z, \hat{z})
    =
    \dotp{\tfrac{z}{\norm{z}_2}}{\tfrac{\hat z}{\norm{\hat z}_2}}=\tfrac{1}{\norm{z}_2\norm{\hat z}_2}\sum_{j=1}^b z_j\hat z_j \enspace,
    \\
    \qandq P_{\hat z}(z)= \dotp{z}{\tfrac{\hat{z}}{\norm{\hat{z}}_2}}\tfrac{\hat{z}}{\norm{\hat{z}}_2}=\tfrac{\norm{z}_2}{\norm{\hat{z}}_2} \cos(z, \hat{z})\hat{z}
    \enspace,
  \end{align}
  where $P_{\hat z}(z)$ is the orthogonal projection of $z$ onto $\Span(\hat z)$ (\ie the orientation axis of $\hat z$).
  We  say that
  $z$ and $\hat z$ share the same orientation (resp.~direction), if $\abs{\cos(z, \hat{z})} = 1$ (resp.~$\cos(z, \hat{z}) = 1$).
  We also consider that $ \cos(z, \hat{z})=1$ in case of null vectors $z=0_b$ and/or $\hat z=0_b$.
\end{defn}
Thanks to Definition~\ref{def:cos_and_proj}, we can now introduce our refitting block penalty
designed to preserve the desired features of $\hat{z}=\Gamma \hat x$ in a simple way.
We call our block penalty the {\em Soft-penalized Direction} (SD) penalty which reads as
\begin{align}\label{pen:SD}
  \phi^{}_{\mathrm{SD}}(z, \hat{z}) =
  \lambda \norm{z}_2(1 - \cos (z, \hat{z}))\enspace.
\end{align}
We also introduce five other alternatives,
the Hard constrained Orientation (HO) penalty
\begin{align}
  \label{pen:HO}
  \phi^{}_{\mathrm{HO}}(z, \hat{z})
  &={\iota_{\{z \in \RR^b : \;\abs{\cos(z, \hat{z})}= 1\}}}(z) \enspace,
\end{align}
where $\iota_{\mathcal{C}}$ is the $0$/$+\infty$ indicator function of a set $\mathcal{C}$,
the Hard-constrained Direction (HD) penalty
\begin{align}
  \label{pen:HD}
  \phi_{\mathrm{HD}}(z, \hat{z})
  &={\iota_{\{z \in \RR^b : \; \cos(z, \hat{z}) = 1\}}}(z) \enspace,
\end{align}
the Quadratic penalized Orientation (QO) penalty
\begin{align}
  \label{pen:QO}
  \phi_{\mathrm{QO}}(z, \hat z)
  &
  =
  \tfrac{
    \lambda {\norm{z}^2_2}}{2\norm{\hat{z}}_2
  }{
    (1 - \cos^2(z, \hat{z}))
  }\enspace,
\end{align}
the Quadratic penalized  Direction (QD) penalty
\begin{align}
  \label{pen:QD}
  \phi_{\mathrm{QD}}(z, \hat z)
  & = \choice{ \frac{\lambda}{2}
    \frac{\norm{z}^2_2}{\norm{\hat{z}}_2}(1 -
    \cos^2 (z, \hat{z})) & \text{if } \cos(z, \hat{z}) \geq 0\\
    \frac{\lambda}{2}
    \frac{\norm{z}^2_2}{\norm{\hat{z}}_2}&\otherwise}
\end{align}
and the Soft-constrained Orientation (SO) penalty
\begin{align}
  \label{pen:SO}
  \phi_{\mathrm{SO}}(z, \hat z)
  &
  =
  \lambda\norm{z}_2 \sqrt{1 - \cos^2 (z, \hat z)}\enspace.
\end{align}

We will see in Section \ref{sec:related} that HD, HO and QO lead us to retrieve existing refitting models
known respectively in the litterature as
ICB (Infimal Convolution betweem Bregman distances) debiasing \cite{brinkmann2016bias},
Bregman debiaising \cite{brinkmann2016bias},
and CLEAR (Covariant LEAst square Refitting) \cite{deledalle2016clear}.

\subsection{Desired properties of refitting block penalties}
We now introduce properties a block penalty $\phi$ should satisfy for refitting purposes,
for any $\hat{z}$:
\begin{itemize}\setlength{\itemindent}{.1in}
\item [(P1)] $\phi$ is convex, non negative and $\phi(z, \hat{z}) = 0$, if $\cos(z, \hat z) = 1$ or $\norm{z}_2=0$,
\item [(P2)] $\phi(z', \hat{z}) \geq \phi(z'', \hat{z})$
  if $\norm{z'}_2 = \norm{z''}_2$ and $\cos (z'', \hat{z}) \geq \cos (z', \hat{z})$,
\item [(P3)] $z \mapsto \phi(z, \hat{z})$ is continuous,
\item [(P4)] $
  \phi(z, \hat{z}) \leq
  C \norm{z}_2, \text{ for } C > 0.
  $
\end{itemize}
Property (P1) stipulates that no configuration can be more favorable than $z$ and $\hat z$ having the same direction.
Hence, the direction of the refitted solution should be encouraged to follow the one of the biased solution.
Property (P2) imposes that for a fixed amplitude, the penalty should be increasing w.r.t. the angle formed with $\hat z$.
Property (P3) enforces refitting that can continuously adapt to the data and be robust to small perturbations.
Property (P4) claims that a refitting block penalty should not penalize more some configurations
than the original penalty $\norm{.}_{1,2}$, at least up to some multiplicative factor $C > 0$.


\begin{prop}\label{prop:block_properties}
Properties of block penalties lead to the following implications.
\begin{itemize}
\item[(a)] (P1)   $\Rightarrow \norm{ \Phi \tilde{x}^{\phi} - y }^2_2\leq \norm{\Phi \hat x - y}^2_2$.
\item[(b)] (P1)  $\Rightarrow\phi$ is non decreasing with respect to $\norm{z}_2$ for a fixed angle $(z,\hat z)$,
\item[(c)] (P2) $\Rightarrow$ $\phi$ is symmetric with respect to the orientation axis induced by $\hat z$,
\item[(d)] (P1) + (P2)  $\Rightarrow \phi(z', \hat{z}) \geq \phi(z, \hat{z})$,
  if $\norm{z'}_2 \geq  \norm{z}_2$ and $\cos (z, \hat{z}) =\cos (z', \hat{z})$,
\item[(e)] (P1) + (P3) $\Rightarrow\phi(z,\hat z)\to 0$ when  $\cos (z, \hat{z})\to 1$.
\end{itemize}
\end{prop}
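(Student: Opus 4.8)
The plan is to handle the five implications one at a time, using that (P1) packages together convexity, non-negativity and vanishing on the orientation axis, while (P2) is a pure monotonicity-in-the-angle statement. For (a), I would simply feed $\hat x$ into \eqref{general_refit} as a competitor: it is feasible since $\supp(\Gamma\hat x)=\hat\Ii$, and each active block contributes $\phi((\Gamma\hat x)_i,(\Gamma\hat x)_i)=0$ because $\cos((\Gamma\hat x)_i,(\Gamma\hat x)_i)=1$, so by (P1) the whole penalty sum vanishes at $\hat x$. Optimality of $\tilde x^{\phi}$ then gives $\tfrac12\norm{\Phi\tilde x^{\phi}-y}_2^2+\sum_{i\in\hat\Ii}\phi((\Gamma\tilde x^{\phi})_i,(\Gamma\hat x)_i)\le\tfrac12\norm{\Phi\hat x-y}_2^2$, and discarding the non-negative penalty sum (again (P1)) yields the claim. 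For (b), I would fix a unit direction $u$ and set $g(t)=\phi(tu,\hat z)$ for $t\ge0$; this is the restriction of the convex map $\phi(\cdot,\hat z)$ to a ray, hence convex, with $g(0)=0$ and $g\ge0$ by (P1). For $0\le t_1\le t_2$ (with $t_2>0$), writing $t_1=\tfrac{t_1}{t_2}t_2+(1-\tfrac{t_1}{t_2})\cdot0$ and using $g(0)=0$, $g(t_2)\ge0$ gives $g(t_1)\le\tfrac{t_1}{t_2}g(t_2)\le g(t_2)$, i.e. monotonicity along a fixed angle.

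For (c) and (d) the key observation is that (P2), applied in both directions, forces $\phi(z,\hat z)$ to depend on $z$ only through the pair $(\norm{z}_2,\cos(z,\hat z))$: whenever $\norm{z'}_2=\norm{z''}_2$ and $\cos(z',\hat z)=\cos(z'',\hat z)$, the two inequalities of (P2) collapse to $\phi(z',\hat z)=\phi(z'',\hat z)$. Property (c) is then the special case $z''=2P_{\hat z}(z')-z'$, the reflection of $z'$ across $\Span(\hat z)$, which is orthogonal and fixes $\hat z$, hence preserves both $\norm{\cdot}_2$ and $\cos(\cdot,\hat z)$. For (d), I would first rescale: given $z,z'$ with equal cosine and $\norm{z'}_2\ge\norm{z}_2$ (the case $z=0$ being trivial by (P1)), set $z''=(\norm{z'}_2/\norm{z}_2)z$, a positive multiple of $z$ so that $\cos(z'',\hat z)=\cos(z,\hat z)$ and $\norm{z''}_2=\norm{z'}_2$. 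Then (b) gives $\phi(z'',\hat z)\ge\phi(z,\hat z)$, and the equal-norm–equal-cosine identity above gives $\phi(z'',\hat z)=\phi(z',\hat z)$; chaining these yields $\phi(z',\hat z)\ge\phi(z,\hat z)$.

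For (e), the natural argument is compactness plus continuity: since $\phi$ vanishes on the orientation axis by (P1), any limit point of a sequence $z_n$ with $\cos(z_n,\hat z)\to1$ that lies on the ray $\enscond{t\hat z}{t\ge0}$ forces, through (P3), $\phi(z_n,\hat z)\to0$. The delicate point — and what I expect to be the \emph{main obstacle} — is that $\cos(z_n,\hat z)\to1$ by itself does not pin down $z_n$: a sequence whose angle shrinks while its norm grows can keep $\phi$ bounded away from $0$. Indeed $z\mapsto\mathrm{dist}(z,\enscond{t\hat z}{t\ge0})$ satisfies (P1) and (P3), yet along $z_n=n\hat z/\norm{\hat z}_2+\sqrt{n}\,v$ with $v\perp\hat z$ one has $\cos(z_n,\hat z)\to1$ while $\phi(z_n,\hat z)=\sqrt{n}\,\norm{v}_2\to\infty$. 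I would therefore state (e) with $z$ restricted to a bounded set (equivalently, at fixed $\norm{z}_2=r$): on the sphere of radius $r$ the condition $\cos(z,\hat z)\to1$ forces $z\to r\,\hat z/\norm{\hat z}_2$ by compactness, and (P3) together with (P1) then gives $\phi(z,\hat z)\to0$. Lifting (e) to unbounded $z$ would additionally require the linear growth control supplied by (P4), so the cleanest formulation is the fixed-norm one.
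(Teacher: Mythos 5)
Your arguments for (a)--(d) are correct and are essentially the paper's own proof, written out more carefully. For (a) the paper uses the same competitor argument: plug $\hat x$ into \eqref{general_refit}, note $\phi(\hat z,\hat z)=0$ and non-negativity from (P1). For (b) it uses the same one-dimensional convexity argument along a ray through the origin. For (c)--(d) your version is the correct fleshing-out of what the paper states tersely: the paper invokes pairs $(z,-z)$ for (c) and declares (d) a ``direct consequence of (b) and (c)'', whereas your key observation -- that (P2) applied in both directions forces $\phi$ to depend on $z$ only through $(\norm{z}_2,\cos(z,\hat z))$ -- is exactly what is needed, and is in fact indispensable when $b>2$, where equal norm and equal cosine no longer determine $z$ up to the single reflection used in (c).

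On (e) you go beyond the paper, and your caveat is genuine. Read literally (arbitrary sequences $z_n$ with $\cos(z_n,\hat z)\to 1$), the implication is false: your counterexample $\phi(z,\hat z)=\mathrm{dist}\left(z,\enscond{t\hat z}{t\geq 0}\right)$ satisfies (P1) and (P3) yet blows up along $z_n=n\hat z/\norm{\hat z}_2+\sqrt{n}\,v$, and even the paper's own SD penalty fails the unrestricted reading (take $\cos(z_n,\hat z)=1-1/n$ and $\norm{z_n}_2=n^2$, giving $\phi_{\mathrm{SD}}=\lambda n\to\infty$), so the paper cannot have intended it. The paper's one-line proof -- continuity plus vanishing of $\phi(\cdot,\hat z)$ on the ray $[0,\hat z)$ -- implicitly keeps $z$ in a bounded set, which is precisely your fixed-norm formulation, and your compactness proof of that version is correct. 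One correction to your closing remark, however: (P4) would \emph{not} rescue the unbounded case, since your distance-to-ray counterexample already satisfies (P4) with $C=1$ (indeed $\mathrm{dist}(z,R)\leq\norm{z-0}_2$ because $0\in R$); the fixed-norm or bounded-$z$ formulation is not merely the cleanest one, it is the only correct one obtainable from (P1)--(P4).
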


\begin{proof}

\noindent
  $\bullet$ (a) As $\tilde x^\phi$ is solution of \eqref{general_refit}, the relation  is obtained by observing that  $\phi(\hat z,\hat z)=0$ with (P1).
  \medskip

  \noindent
  $\bullet$ (b) Looking at a ray $[0,z)$ for any vector $z$, this is a consequence of the convexity of $\phi$ and the fact that $\phi(z,\hat z)=0$ for $\norm{z}_2=0$.
  \medskip

  \noindent
  $\bullet$ (c) Since $\cos(z,\hat z)=\cos(-z,\hat z)$, one can combine  $(z',z'')=(z,-z)$ and  $(z',z'')=(-z,z)$  in (P2) to obtain (c).
  \medskip

  \noindent
  $\bullet$ (d) Direct consequence of points (b) and (c).
  \medskip

  \noindent
  $\bullet$ (e) This is due to the  continuity of the function $\phi(.,\hat z)$ that is $0$ on the ray $[0,\hat z)$ from (P1).
\end{proof}
\subsection{Properties of considered block penalties}

The properties of the previously introduced refitting block penalties are synthesized in Table \ref{tab:properties}.
The proposed SD model is the only one satisfying  all the desired properties.
Figure \ref{fig:BP} gives an illustration of the evolution of the different
refitting block penalties as a function of its arguments (first column). Il also exhibits the influence of the angle between $z$ and $\hat z$ (second column) and the norm of $z$ (third column) on the penalization value. From this figure,
SD is shown to be a continuous penalization
that increases continuously with respect to the absolute angle between $z$ and $\hat z$.
In addition of satisfying the desired properties, one can also observe from this figure that
unlike some other alternatives, SD SD has a unique minimizer as a function of the angle and the amplitude
of $z$.

\begin{table}[!t]
  \centering
  \caption{\label{tab:properties}Properties satisfied by the considered block penalties.}%
  \begin{tabular}{c@{\hspace{.5cm}}c@{\hspace{.5cm}}c@{\hspace{.5cm}}c@{\hspace{.5cm}}c@{\hspace{.5cm}}c@{\hspace{.5cm}}cc}
    \hline
    Properties & HO & HD & QO& QD&SO  &SD
    \\
    \hline
    \hline
    1 & $\surd$ & $\surd$ & $\surd$  &  $\surd$  & $\surd$ & $\surd$ 
    \\
    2 &  &  $\surd$ & & $\surd$ && $\surd$ 
    \\
    3 &        &          & $\surd$ & $\surd$ & $\surd$ & $\surd$
    \\
    4 &  &         &        && $\surd$ &      $\surd$
    \\
    \hline
    a & $\surd$ & $\surd$ & $\surd$  &  $\surd$  & $\surd$ & $\surd$ 
        \\
    b & $\surd$ & $\surd$ & $\surd$  &  $\surd$  & $\surd$ & $\surd$ 
    \\
    c &  &  $\surd$ & & $\surd$ && $\surd$ 
    \\
    d &  &  $\surd$ & & $\surd$ && $\surd$ 
    \\
    e &        &          & $\surd$ & $\surd$ & $\surd$ & $\surd$
    \\
    \hline
  \end{tabular}
\end{table}

Other block penalties are insensitive to directions (HO, QO and SO), completely intolerant (HD and HO) or too tolerant (QD) to small changes of orientations,
hence not satisfying. These drawbacks will be illustrated in our experiments conducted in Section \ref{sec:results}.

When $b=1$, the orientation-based penalties (QO, HO and SO) have absolutely no effect while the direction-based penalties HD and QD preserve the sign of $(\Gamma \hat x)_i$.
In this paper, when $b \geq 1$, we argue that the direction of the block
$(\Gamma \hat x)_i$ carries important information that is worth preserving when refitting,
at least to some extent.

\newcommand{\scaleX}{0.19\linewidth}
\newcommand{\scaleXB}{0.19\linewidth}
\newcommand{\scaleY}{0.205\linewidth}
\newcommand{\scaleZ}{0.21\linewidth}
\newcommand{\sidecapY}[1]{{\begin{sideways}\parbox{\scaleY}{\centering #1}\end{sideways}}}
\begin{figure*}[!t]
  \begin{center}%
    \hfill%
\sidecapY{HO}\hfill\hspace{0.00\linewidth}%
{\includegraphics[width=\scaleX]{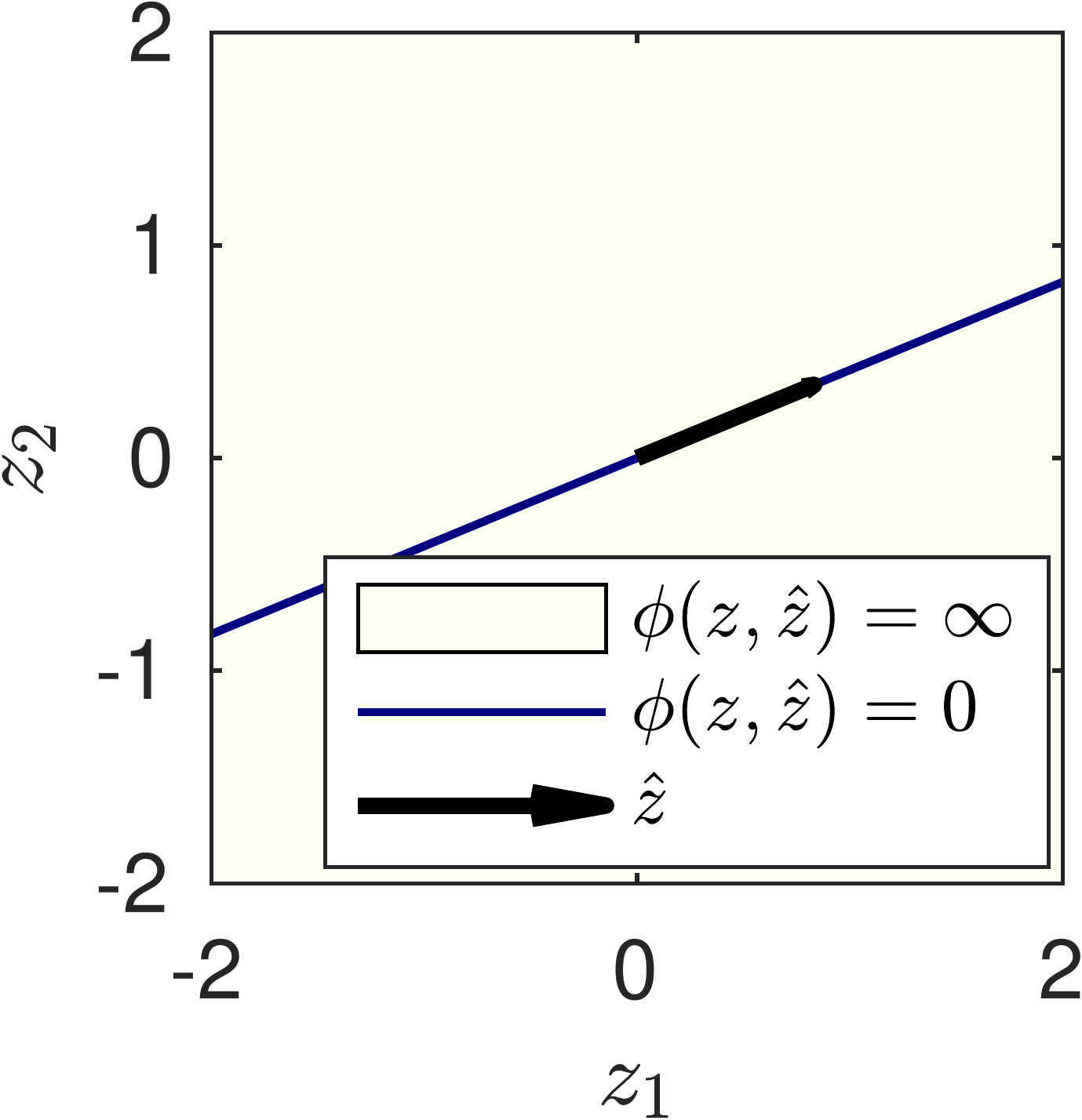}}\hfill%
{\includegraphics[width=\scaleY]{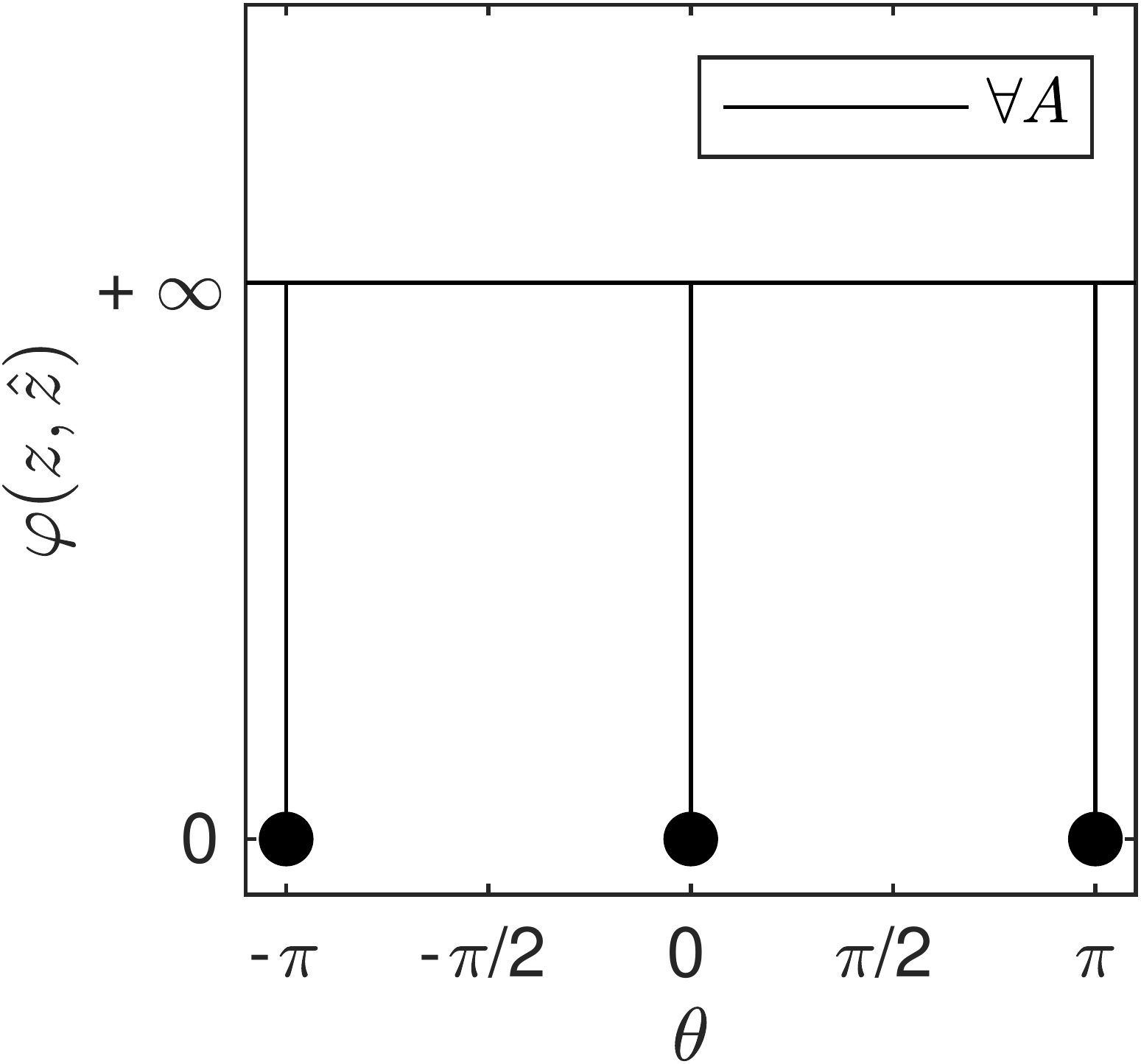}}\hfill%
{\includegraphics[width=\scaleZ]{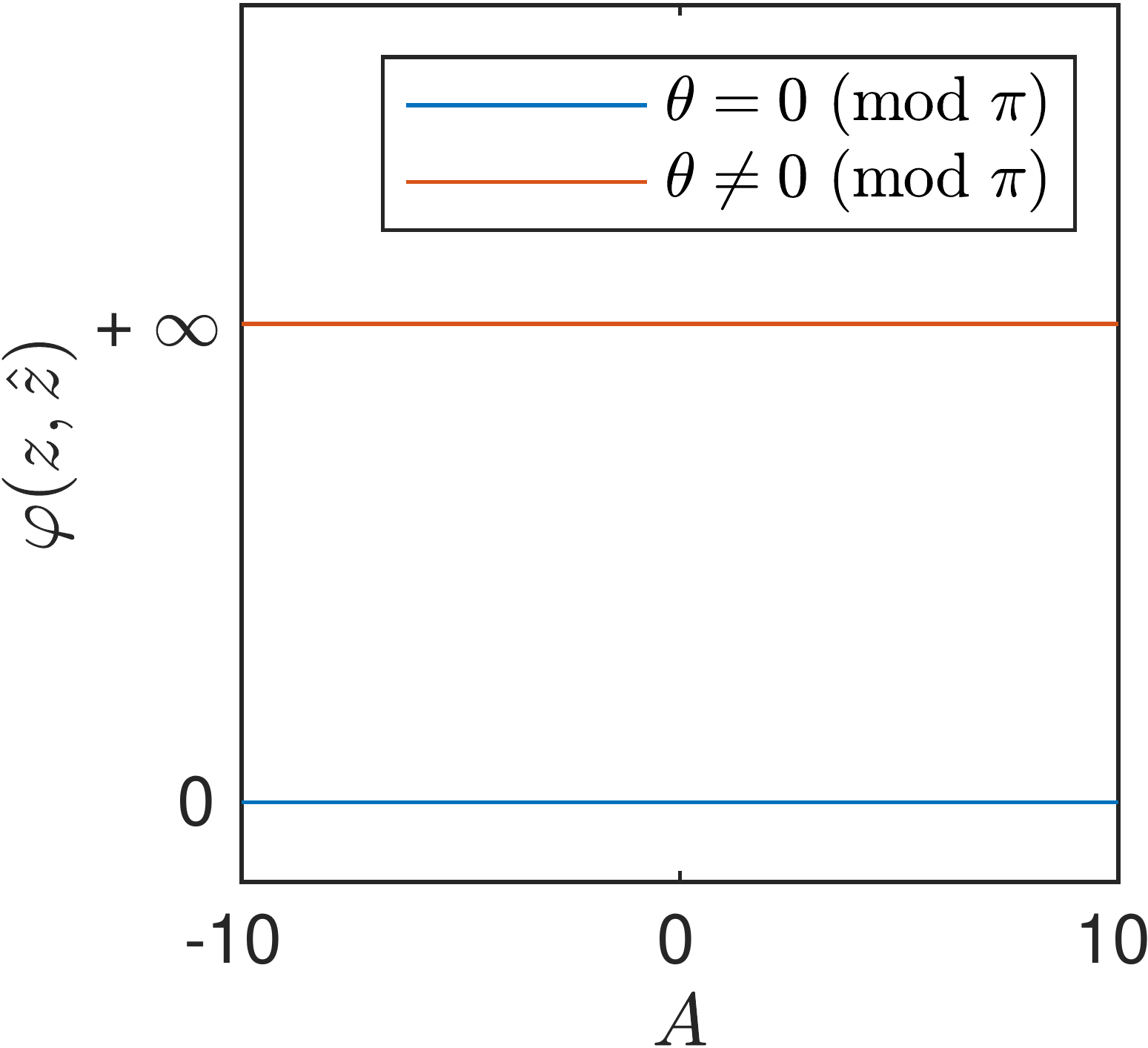}}\hfill%
    { }
    \\[-0.3em]%
    \hfill%
\sidecapY{HD}\hfill\hspace{0.00\linewidth}%
{\includegraphics[width=\scaleX]{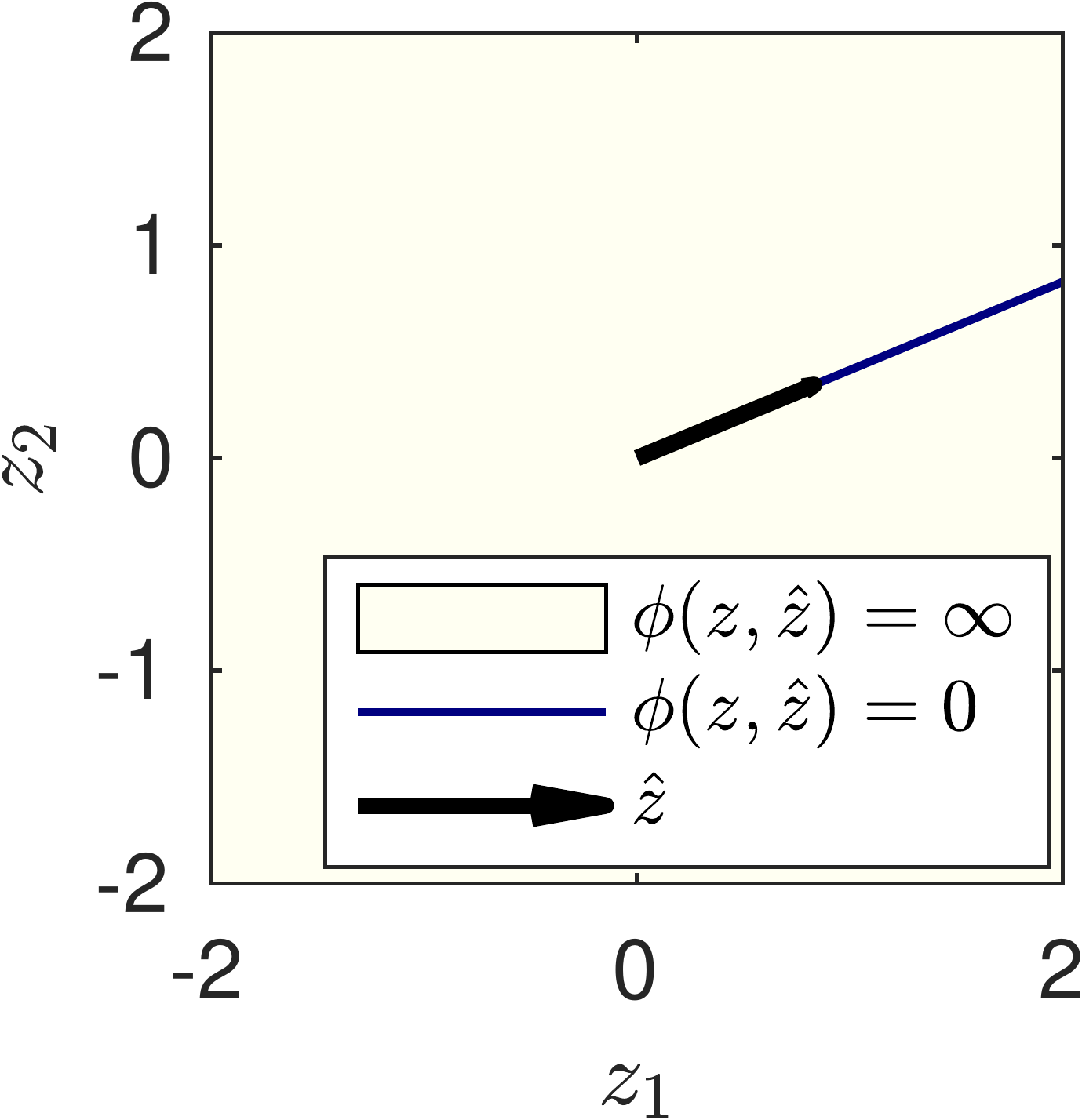}}\hfill%
{\includegraphics[width=\scaleY]{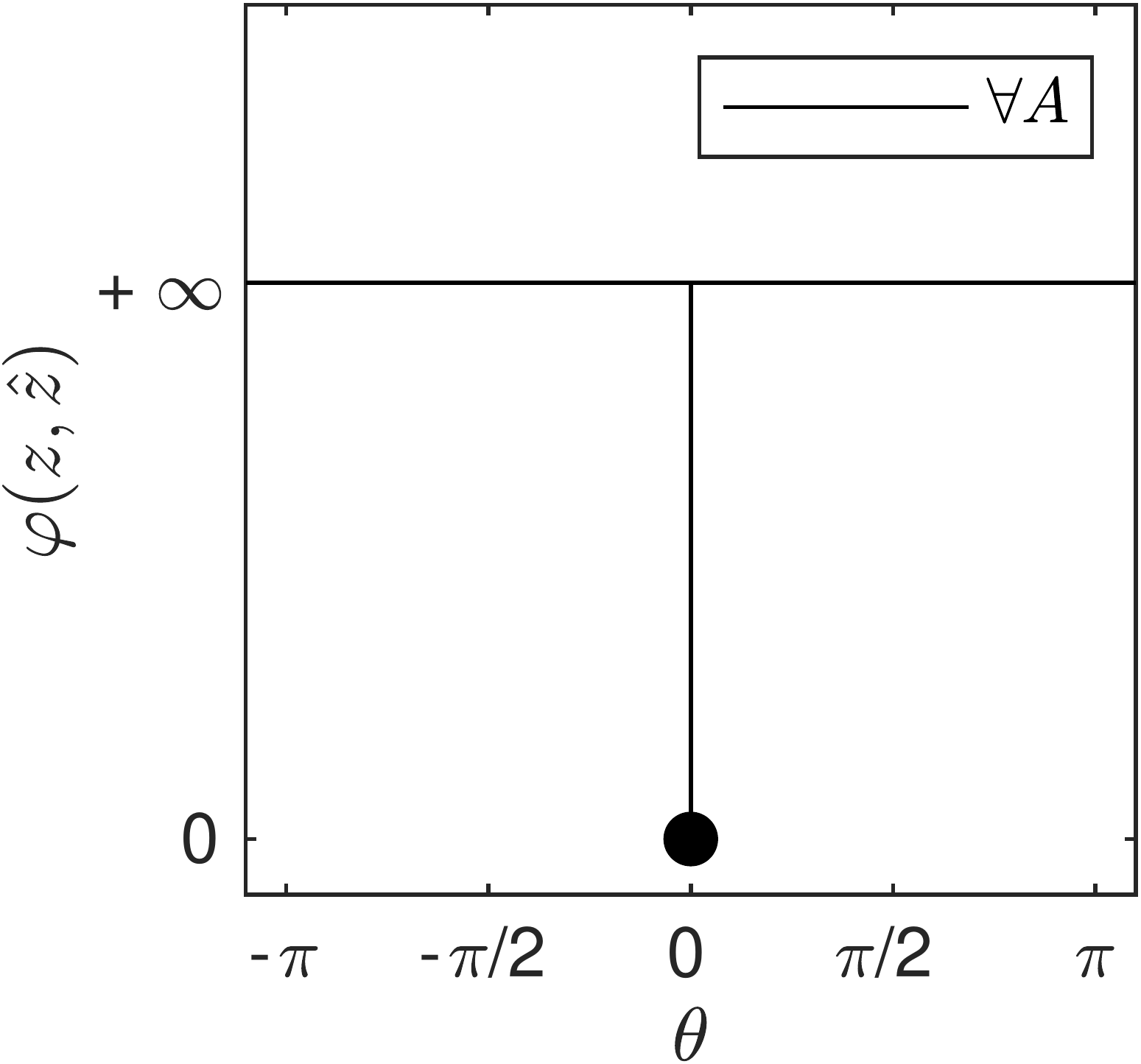}}\hfill%
{\includegraphics[width=\scaleZ]{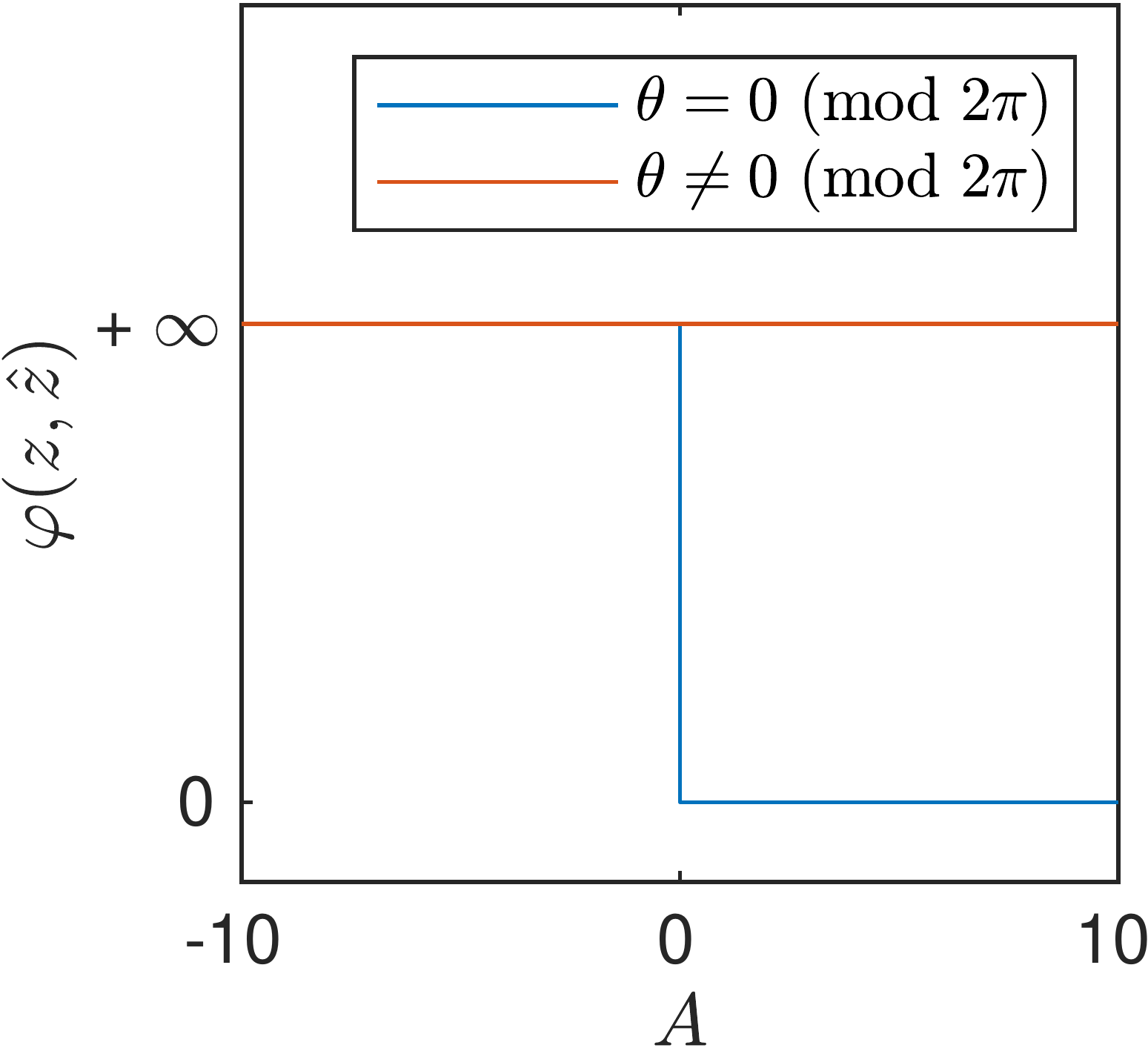}}\hfill%
    { }
    \\[-0.3em]%
    \hfill%
\sidecapY{QO}\hfill\hspace{0.00\linewidth}%
{\includegraphics[width=\scaleX]{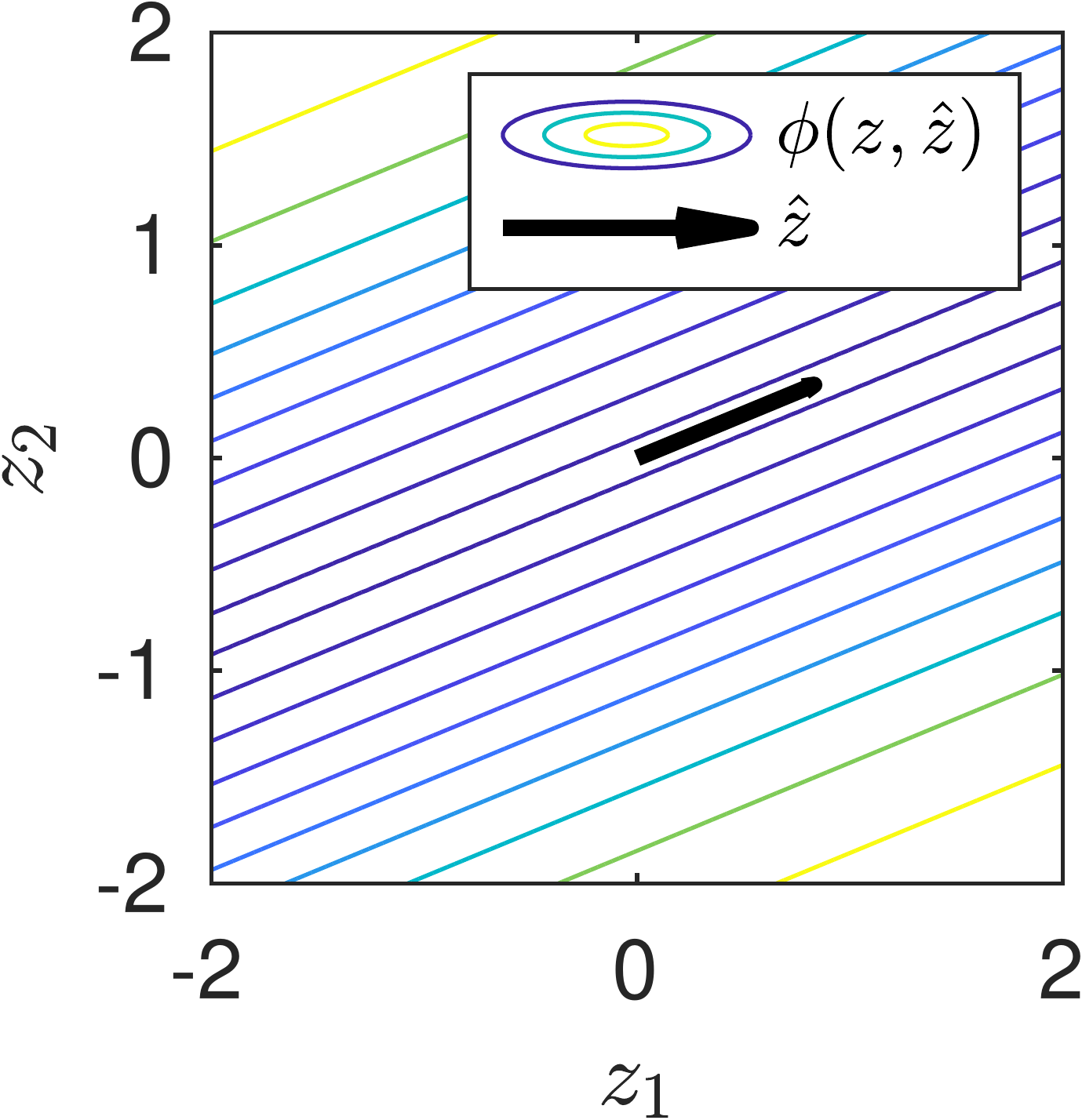}}\hfill%
{\includegraphics[width=\scaleY]{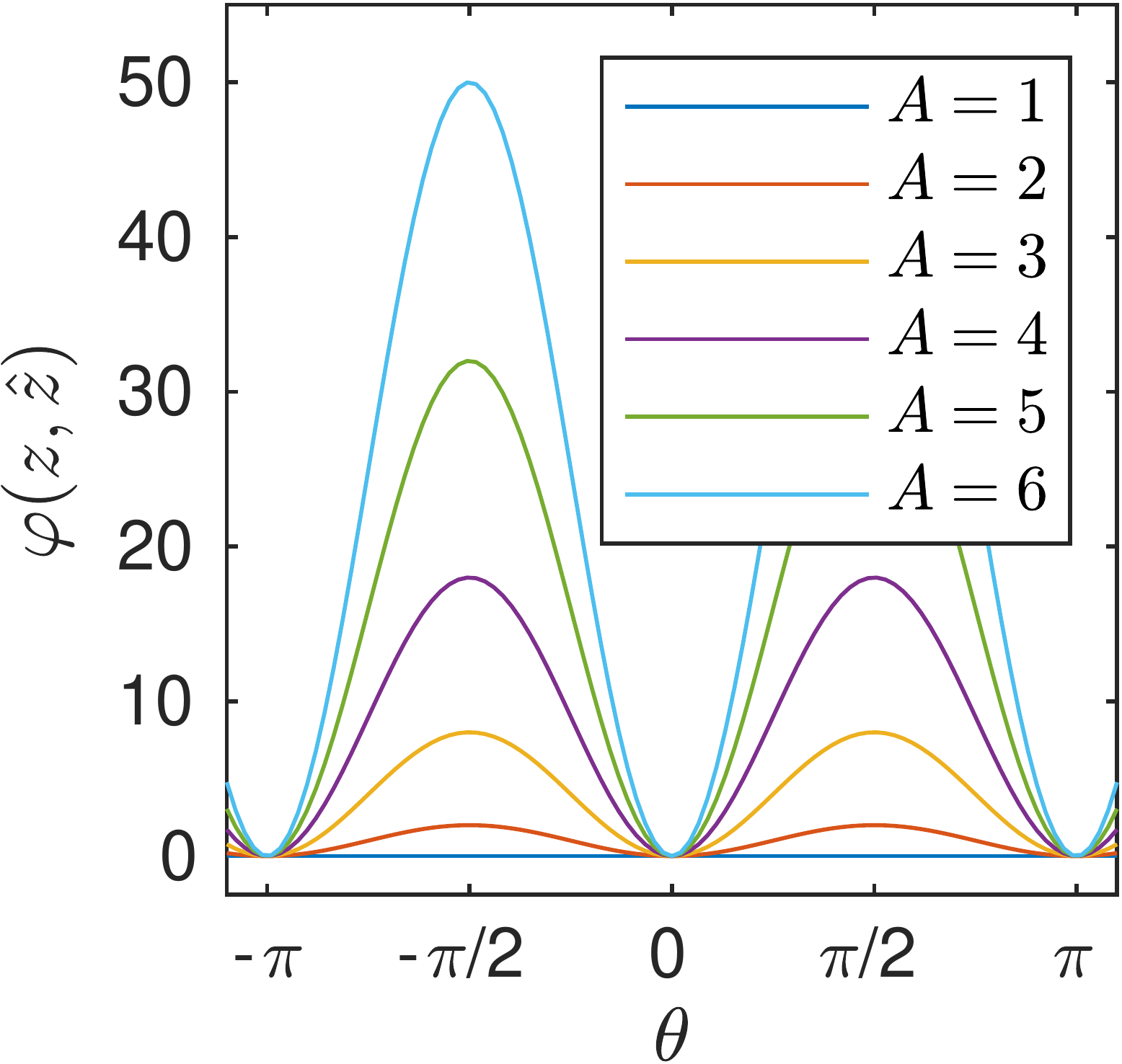}}\hfill%
{\includegraphics[width=\scaleZ]{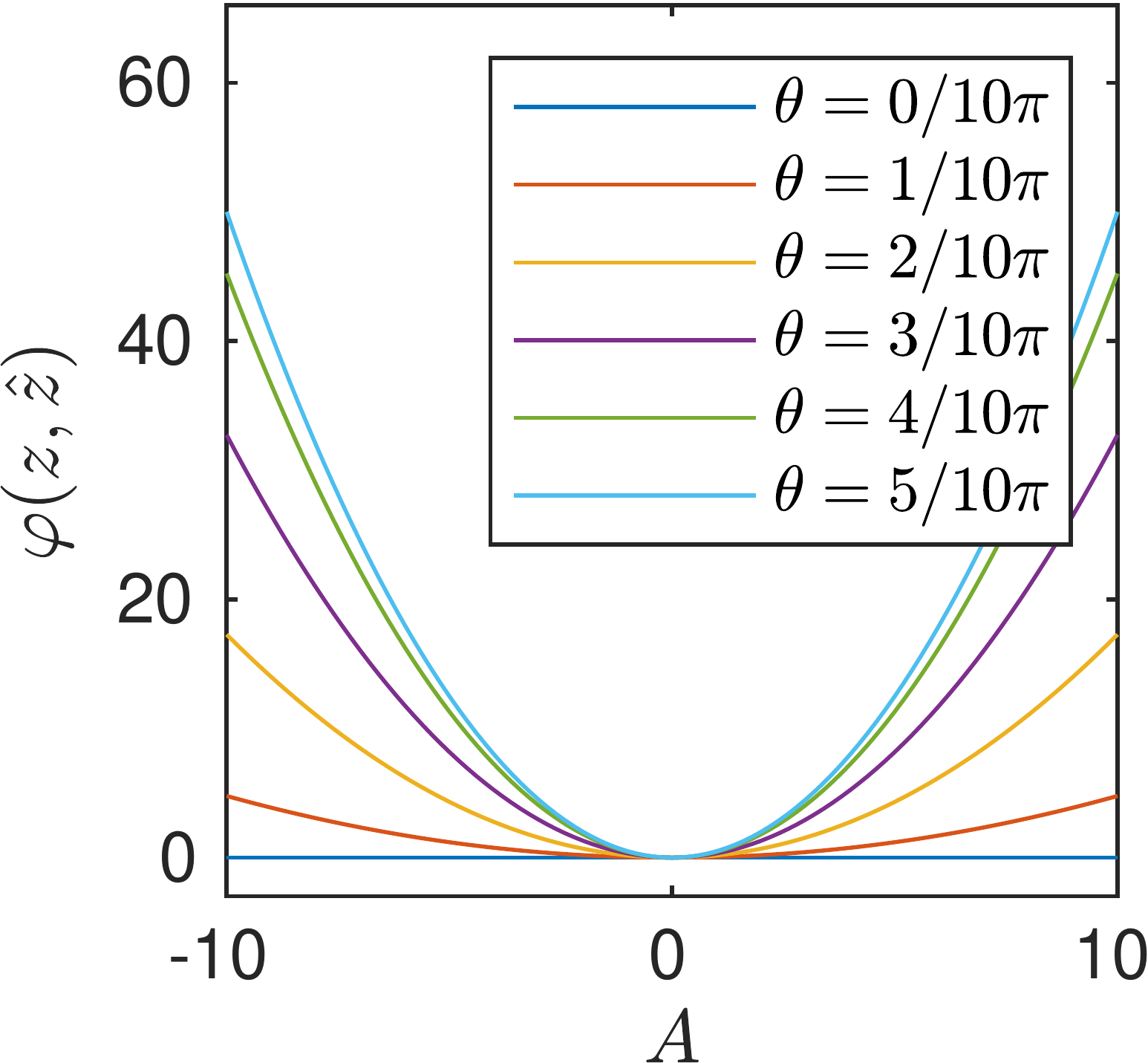}}\hfill%
    { }
    \\[-0.3em]%
        \hfill%
\sidecapY{QD}\hfill\hspace{0.00\linewidth}%
{\includegraphics[width=\scaleX]{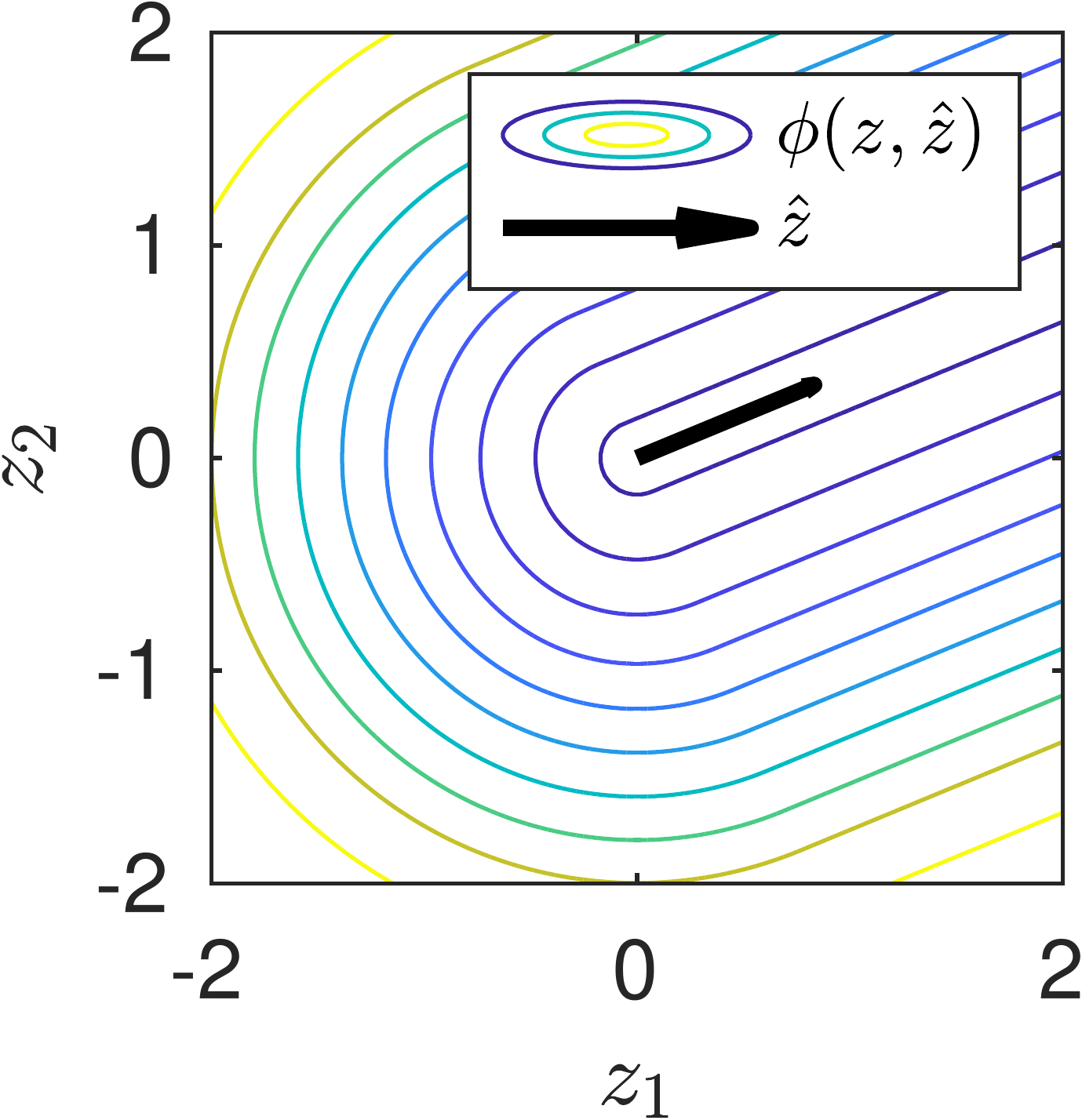}}\hfill%
{\includegraphics[width=\scaleY]{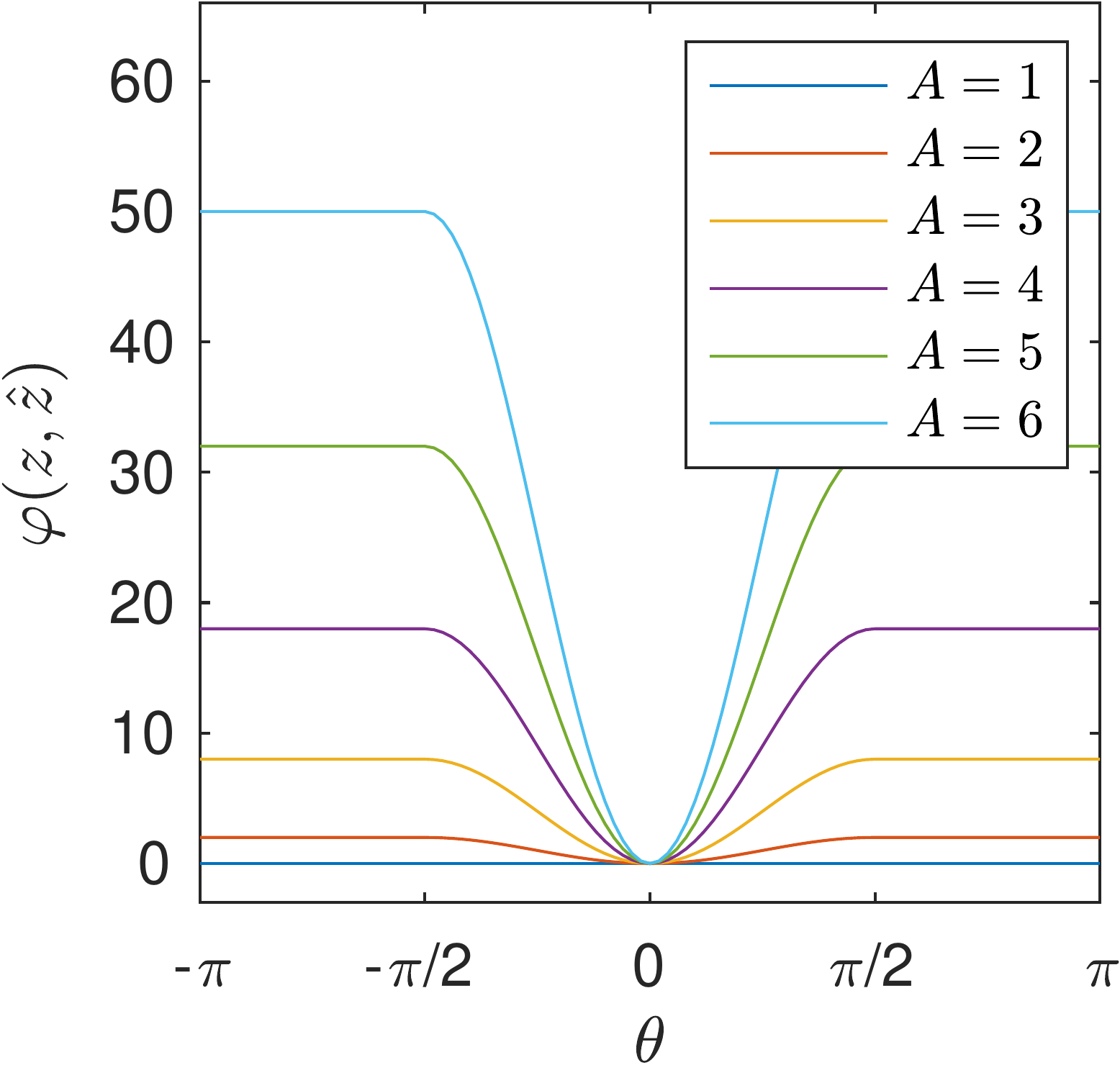}}\hfill%
{\includegraphics[width=\scaleZ]{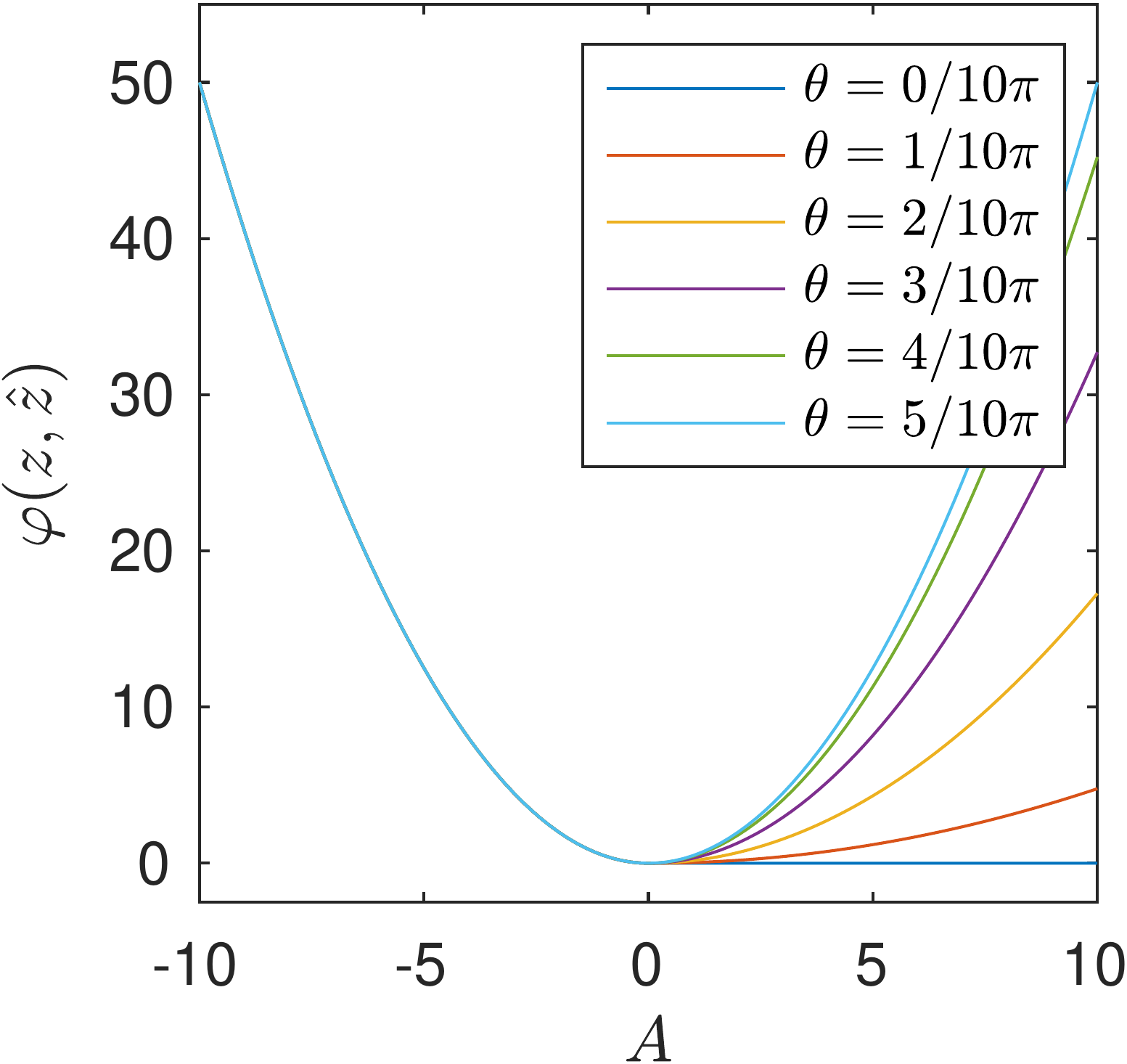}}\hfill%
    { }
    \\[-0.3em]%
    \hfill%
    \sidecapY{SO}\hfill\hspace{0.00\linewidth}%
{\includegraphics[width=\scaleX]{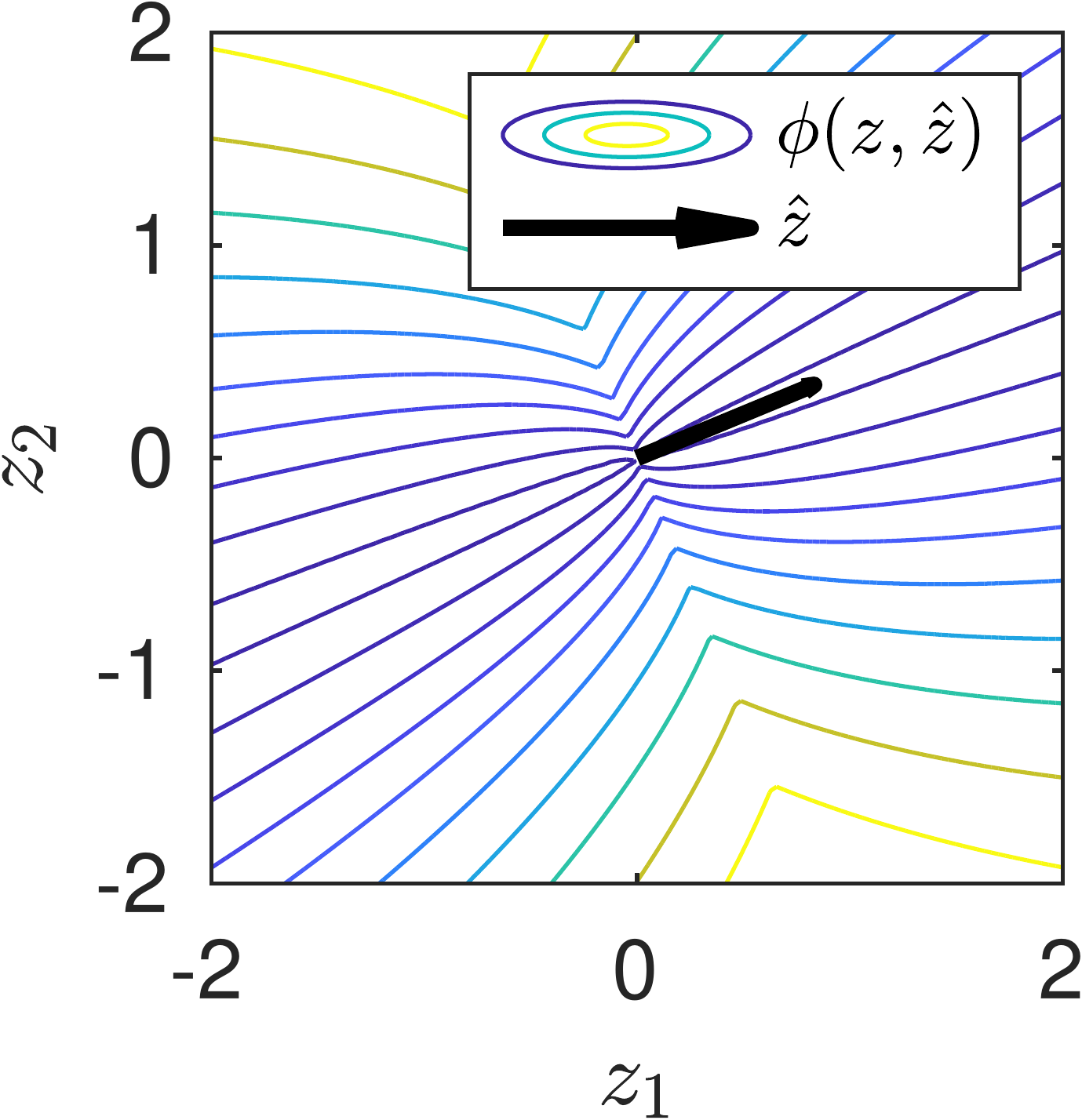}}\hfill%
{\includegraphics[width=\scaleY]{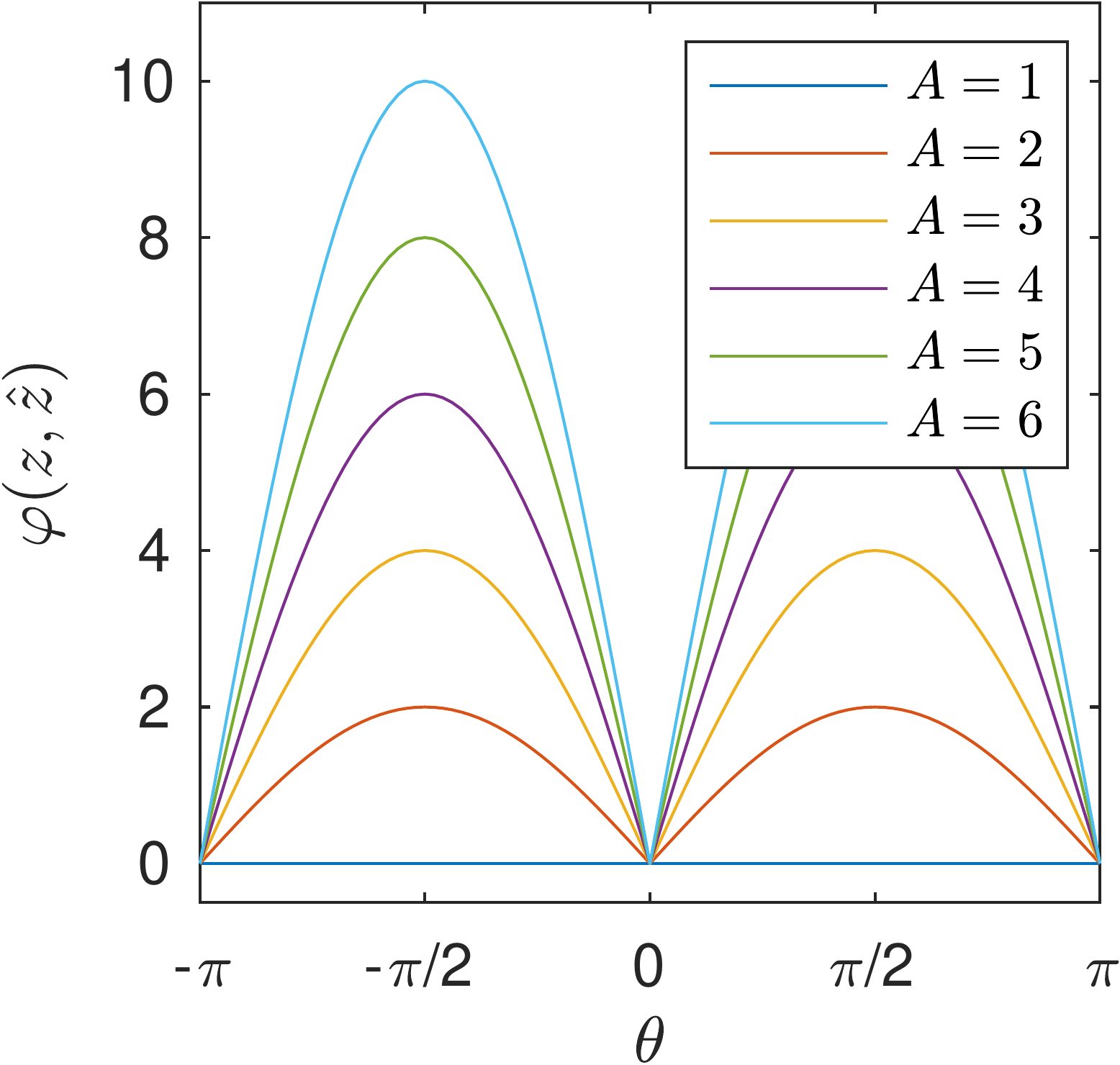}}\hfill%
{\includegraphics[width=\scaleZ]{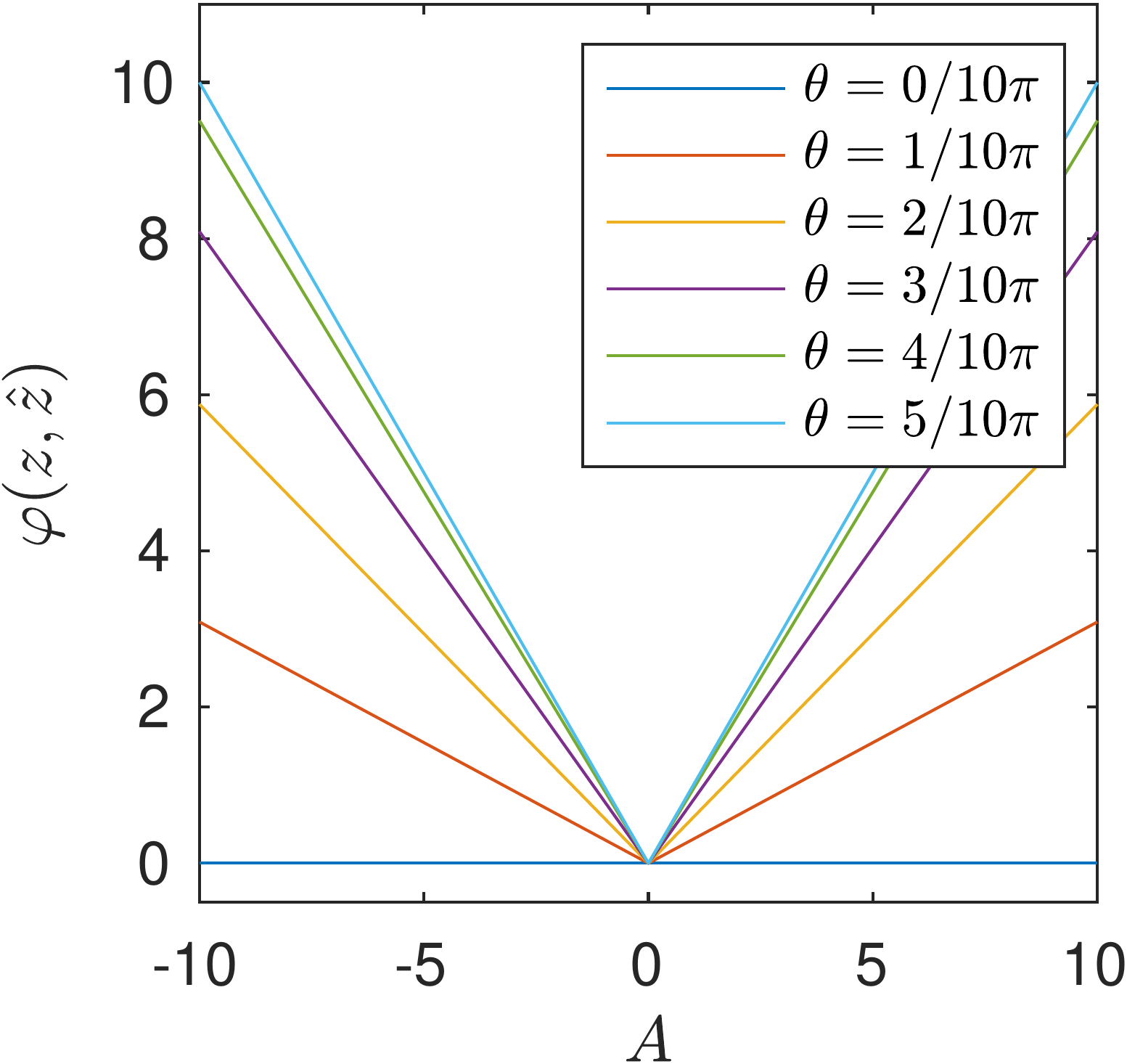}}\hfill%
    { }
    \\[-0.3em]%
    \hfill%
\sidecapY{SD}\hfill\hspace{0.005\linewidth}%
{\includegraphics[width=\scaleX]{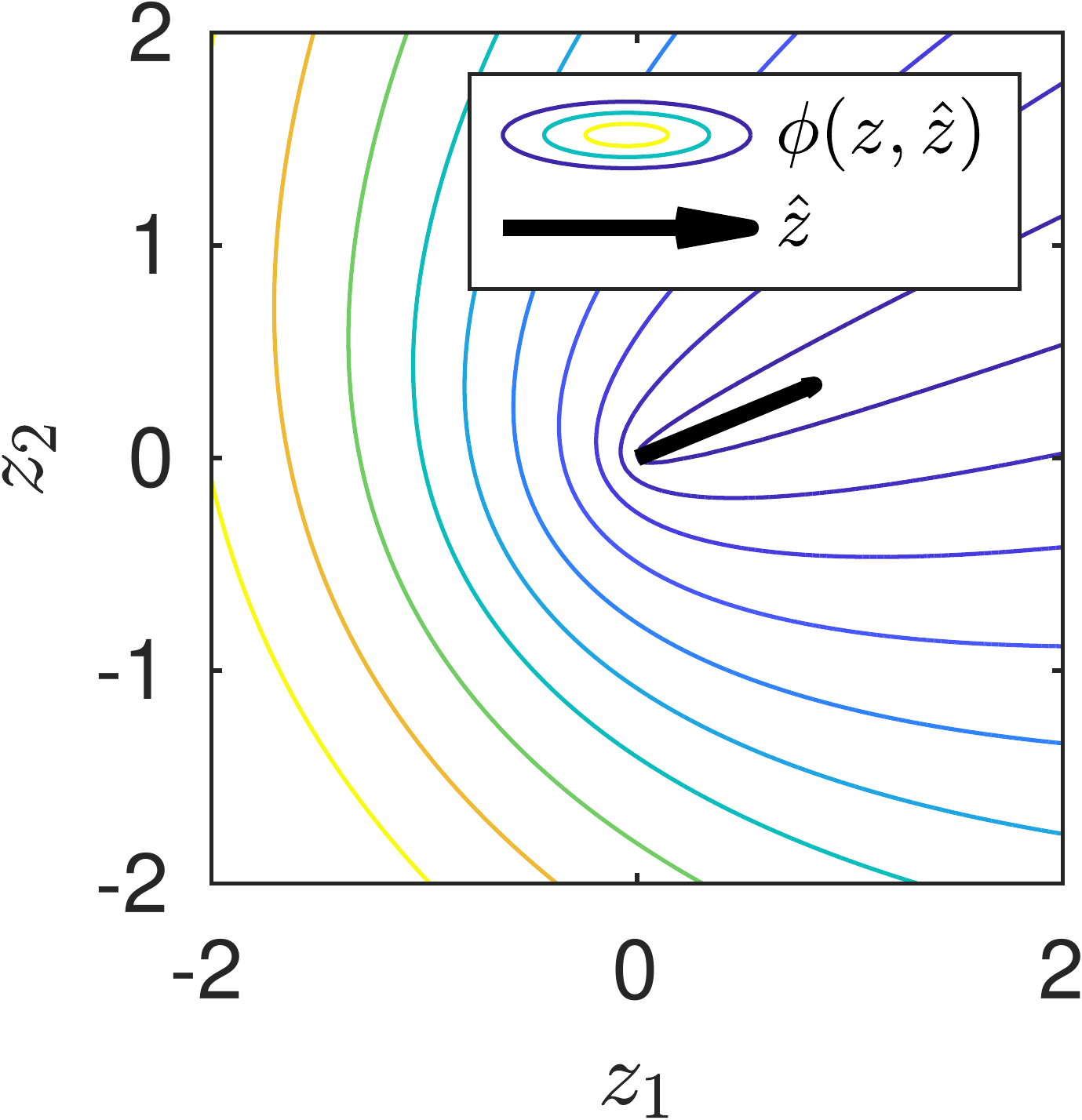}}\hfill%
{\includegraphics[width=\scaleY]{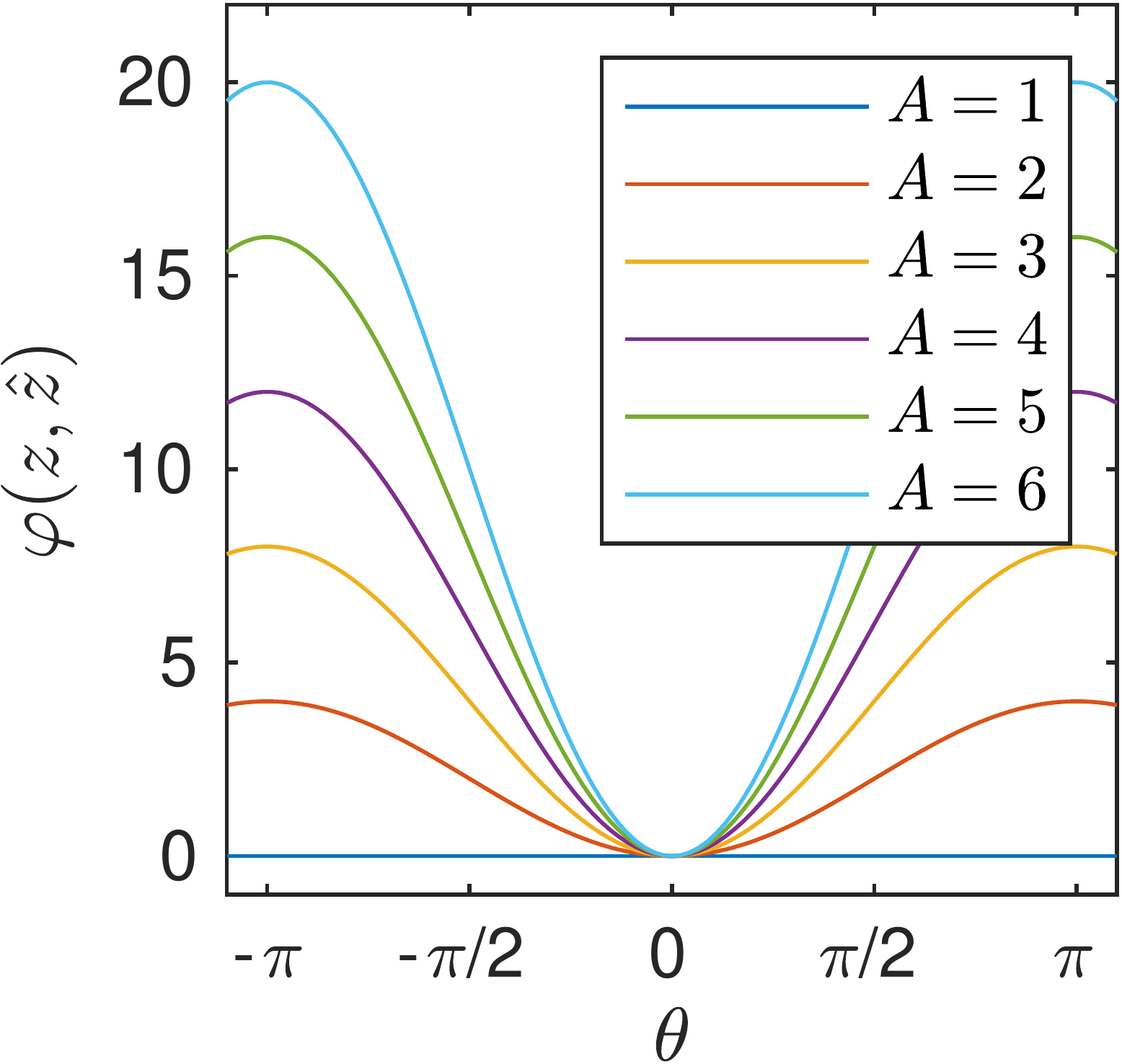}}\hfill%
{\includegraphics[width=\scaleZ]{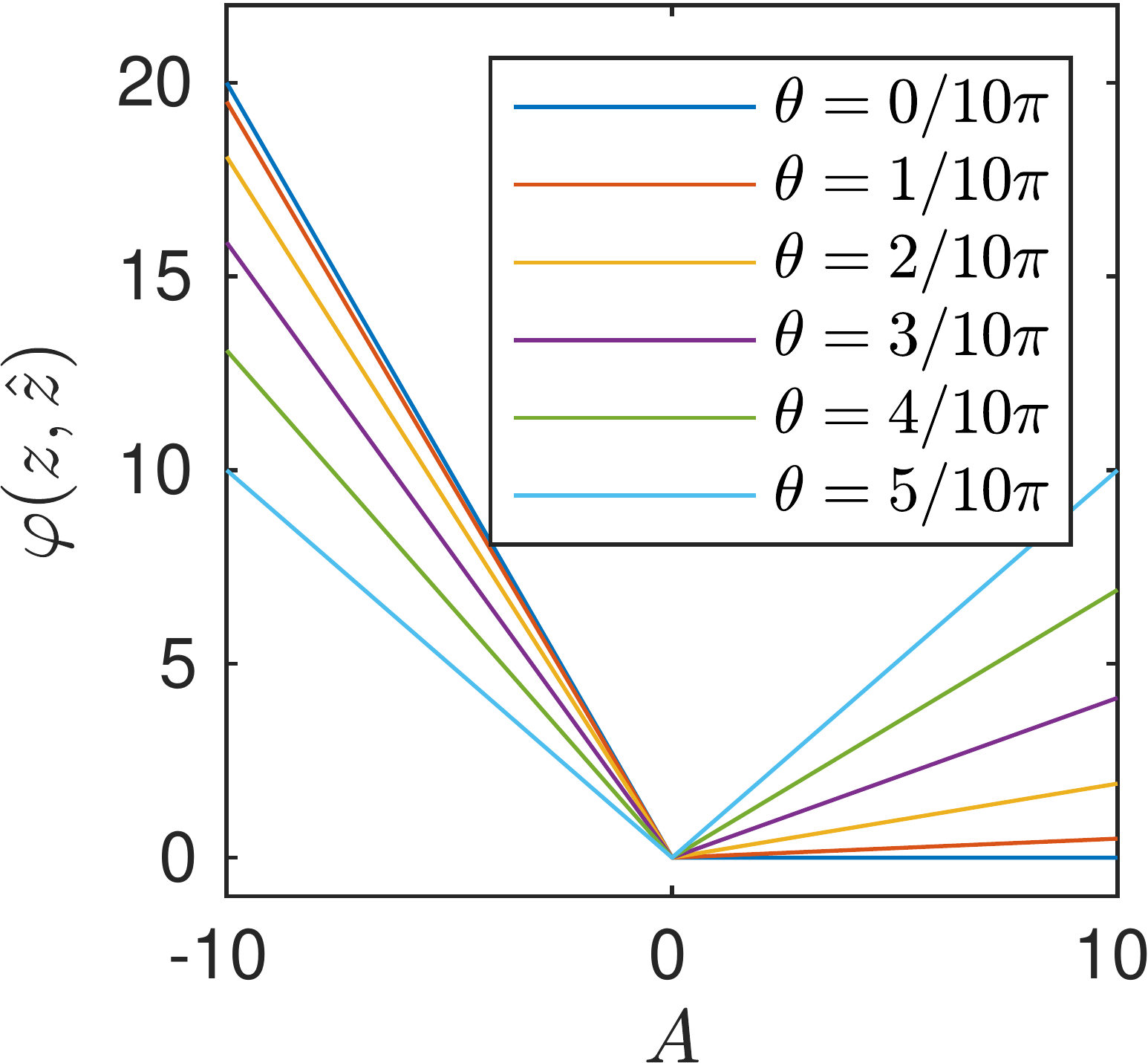}}\hfill%
    { }
    \\[-2em]
  \end{center}

  \newsavebox{\smlmat}
  \savebox{\smlmat}{$\left(\begin{smallmatrix} \cos \theta & -\sin \theta \\ \sin \theta & \cos \theta \end{smallmatrix}\right)$}
  \caption{\label{fig:BP} Illustration of block penalties:
    (left) 2D level lines of  $\phi$ for
    $z=(z_1,z_2) = A $\usebox{\smlmat}$\hat{z}$,
    (middle) evolution regarding the angle $\theta$ between $z$ and $\hat z$ and (right) evolution with respect to the modulus $A$ of $z$.
    HO and HD penalties have discontinuities while the ones on the four last rows are continuous and vary according to $A$. The penalties acting on directions (HD, QD and SD) are increasing w.r.t the amplitude of the angle $\theta$.
{In figure \ref{fig:BP}: $\varphi$ and $\phi$ mismatch in legends}
  }
\end{figure*}

\section{Refitting in practice}\label{sec:practice}

We now  introduce a general algorithm aiming to jointly solve the original problem \eqref{isotv} and the refitting one \eqref{general_refit} for any refitting block penalty $\phi$.
This framework has been extended from the stable projection onto the support developped in \cite{Deledalle_Papadakis_Salmon15} and
later adapted to refitting with the Quadratic Orientation penalty in \cite{deledalle2016clear}.

Given  $\hat x$ solution of \eqref{isotv},
a posterior refitting can be obtained by solving \eqref{general_refit} for  any refitting block penalty $\phi$.
To that end, we write the characteristic function of  support preservation as
$\sum_{i\in\hat{\Ii}^c} \iota_{\{0\}}(\xi_i),$
where $\iota_{\{0\}}(z)=0$ if  $z=0$ and $+\infty$ otherwise. By introducing the convex function
\begin{equation}\label{def:omega}
\omega_{\phi}(\xi,\Gamma \hat x,\hat{\Ii} )= \sum_{i\in\hat{\Ii}^c} \iota_{\{0\}}(\xi_i)+\sum_{i \in \hat{\Ii}} \phi(\xi_i, (\Gamma \hat x)_i)\enspace,
\end{equation}
the general refitting problem \eqref{general_refit} can be expressed as
\begin{equation}\label{general_refit_reform}
  \tilde{x}^{\phi} \in\uargmin{x \in \RR^n}
  \tfrac12 \norm{\Phi x - y}_2^2 + \omega_{\phi} (\Gamma x,\Gamma \hat x,\hat{\Ii})\enspace.
\end{equation}
We  now describe two iterative algorithms that can be used for the joint computation of $\hat x$ and $\tilde x$.

\subsection{Primal-dual formulation}
We  first  consider the primal dual formulation of the problems \eqref{isotv} and \eqref{general_refit_reform} that reads
\begin{align}\label{isotv:pd}
\min_{x \in \RR^n}\max_{\xi \in \RR^b}\;& \tfrac12 \norm{\Phi x - y }_2^2 + \langle \Gamma x,\xi\rangle - \iota_{B_2^\lambda}(\xi)\enspace,\\
\label{general_refit_reform:pd}
  \min_{x \in \RR^n}\max_{\xi \in \RR^b}\;&
  \tfrac12 \norm{\Phi x - y}_2^2 +\langle \Gamma x,\xi\rangle -\omega^*_{\phi} (\xi,\Gamma \hat x,\hat{\Ii}) \enspace,
\end{align}
where $\iota_{B_2^\lambda}$ is the indicator function of the $\ell_2$ ball of radius $\lambda$ (that is $0$ if $\norm{z_i}_2\leq \lambda$ for all $i\in[m]$ and $+\infty$ otherwise)
and
\begin{align}
  \omega_\phi^*(\xi,\Gamma \hat x,\hat{\Ii})=\sup_{\zeta \in \RR^b}\; \langle \xi,\zeta\rangle-\omega_{\phi}(\zeta,\Gamma \hat x,\hat{\Ii})
\end{align}
is the convex conjugate, with respect to the first argument of $\omega_\phi(\cdot,\Gamma \hat x,\hat{\Ii})$.

Two  iterative primal-dual algorithms are used to solve these problems. They involve the biased variables $(\hat z^k,\hat x^k)$ and the refitted ones $(\tilde z^k,\tilde x^k)$.
Let us now present the whole algorithm, defined for parameters $\kappa > 0$,
$\tau > 0$ and $\theta \in [0, 1]$ as:
\begin{equation}\label{algo_final}
 \left\{ \begin{array}{@{}ll@{\hspace{.4em}}}
   \hat \nu^{k+1}&=\hat z^k+\kappa \Gamma \hat v^k\\
   \tilde \nu^{k+1} &=\tilde z^k+\kappa \Gamma \tilde v^k\\
    \hat \xi^{k+1}&=\Pi(\hat \nu^{k+1},\lambda)\\
    \hat{\Ii}^{k+1}&=\enscond{i \in [m]}{\norm{\hat\nu_i^{k+1}}_2>\lambda+\beta}\\
    \tilde \xi^{k+1}&=\prox_{\kappa \omega_\phi^*}(\tilde \nu^{k+1},\Psi(\hat \nu^{k+1}),\hat{\Ii}^{k+1})\\
    \hat \x^{k+1}&=\Phi_\tau ^{-}\left(\hat\x^k+\tau( \Phi^t y-\Gamma^t \hat \xi^{k+1} )\right)\\
    \xt^{k+1}&=\Phi_\tau ^{-}\left(\xt^k+\tau(\Phi^t y-\Gamma^t\tilde\xi^{k+1})\right)\\
    \hat v^{k+1}&=\hat\x^{k+1}+\theta(\hat\x^{k+1}-\hat\x^k)\\
    \tilde v^{k+1}&=\xt^{k+1}+\theta(\xt^{k+1}-\xt^{k}),
\end{array}\right.
\end{equation}%
with the operator $\Phi_\tau ^{-}=(\Id+\tau\Phi^t\Phi)^{-1}$.  
The function $\Psi$ is considered to approximate the value of $\Gamma \hat x$ from the current numerical variables and will be specicifed in the next paragraph on {\em  Online direction and norm identification}, while the functions $\Pi$ and $\prox_{\kappa \omega_\phi^*}$ are  detailed below.
As we will see in Proposition \ref{prop:support}, the quantity $\beta > 0$ is used to
guarantee that we estimate the support of $\hat{x}$ correctly. In practice, we choose $\beta$
as the smallest available non-zero floating number.
The process for the biased variable  involves the orthogonal projection on the $\ell_2$ ball of radius $\lambda$ in $\RR^b$
$$\Pi(\hat \xi,\lambda)_i=\frac{\lambda\hat\xi_i}{\max(\lambda,\norm{\hat \xi_i}_2)}.$$
wheras the refitting algorithm relies on
the proximal operator  of $\omega^*_\phi$.
We recall that the proximal operator of a convex function $\omega$ at point $\xi_0$ reads
\begin{align}
  \prox_{\kappa \omega}(\xi)=\argmin_{\xi}\tfrac1{2\kappa}\norm{\xi-\xi_0}^2_2+\omega(\xi)~.
\end{align}
From the block structure of the function $\omega_\phi$ defined in \eqref{def:omega},  the computation of its  proximal operator  may be realized  pointwise. Since $\iota_{\{0\}}(\xi)^*=0$, we have
\begin{align}
  \prox_{\kappa \omega_\phi^*}(\xi^0,\Gamma \hat x,\hat{\Ii})_i&=
  \begin{cases}
    \prox_{\kappa \phi^*}(\xi^0_i,(\Gamma \hat x)_i), &\hspace{-0.2cm}\textrm{if } i\in \hat{\Ii}\,,\\
    \xi^0_i, &\hspace{-0.2cm}\textrm{otherwise}\,.
  \end{cases}
\end{align}
Table \ref{tab:prox_phi_dual} gives the expressions of the dual functions $\phi^*$ with respect to their first variable and their related proximal operators $\prox_{\kappa \phi^*}$ for the refitting block penalties considered in this paper. All details are given in the Appendix
\ref{sec:prox_block_proof}.

Following \cite{CP}, for  any positive scalars $\tau$ and  $\kappa$ satisfying $\tau\kappa \norm{\Gamma^t\Gamma}_2<1$ and $\theta\in[0,1]$, the estimates $(\hat  \xi^k,\hat x^k,\hat v^k)$ of the biased solution converge to $(\hat \xi, \hat x,\hat x)$, where $(\hat \xi,\hat x)$ is  a saddle point  of \eqref{isotv:pd}.
When the  last arguments  of the function $\omega^*_\phi$ are the converged  $\Gamma \hat x$ and its support $\hat \Ii$, the refitted variables converge to a saddle point of  \eqref{general_refit_reform:pd}.  However, we did not succeed to show  convergence for the refitting process, since the quantities $\Gamma \hat x$ and $\hat \Ii$ are  only estimated from the biased variables at the current iteration, as explained in the next paragraphs.

\paragraph{Online support identification.}
Estimating $\supp(\Gamma \hat x)$ from an estimation $\hat x^k$ is not stable numerically:
the support $\supp(\Gamma \hat x^k)$ can be far from $\supp(\Gamma \hat x)$  even though $\hat x^k$ is arbitrarily close to $\hat x$.
As in \cite{Deledalle_Papadakis_Salmon15}, we rather consider  the dual variable $\hat \xi^k$ to estimate the support.
We indeed expect  (see for instance  \cite{BGMEC16}) at convergence $\hat \xi^k$  to   saturate  on the  support of $\Gamma \hat x$ and to satisfy the optimality condition
$\hat \xi_i^k=\lambda\tfrac{(\Gamma \hat x)_i}{\norm{(\Gamma \hat x)_i}_2}$.
In practice, the norm of the dual variable $\hat \xi^k_i$ saturates to $\lambda$ relatively fast onto $\hat{\Ii}$.
As a consequence,  it is far more stable to detect the support of $\Gamma \hat x$ with the dual variable $\hat \xi^k$ than with the vector $\Gamma \hat x^k$ itself.
The next propostion, adapted from \cite{Deledalle_Papadakis_Salmon15},
 shows that
this approach can indeed converge towards the support $\hat \Ii$.

\begin{prop}\label{prop:support}
 Let $\alpha>0$ be the minimum non zero value of $\norm{(\Gamma \hat x)_i}_2$, and choose $\beta$ such that $\alpha\kappa>\beta>0$.
Then, denoting $\hat\nu^{k+1}=\hat \xi^k+\kappa \Gamma \hat v^k$,  $\hat \Ii^{k+1}=\enscond{i \in [m]}{\norm{\hat\nu_i^{k+1}}_2>\lambda+\beta}$ in Algorithm \eqref{algo_final} converges in finite time to the  true support $\hat{\Ii} = \supp(\Gamma\hat{x})$  of the biased solution\footnote{As  in \cite{brinkmann2016bias}, the extended support  $\norm{\hat \xi_i}_2=\lambda$ can be tackled  by testing  $\norm {\hat \nu^{k+1}_i}\geq\lambda$.}.
\end{prop}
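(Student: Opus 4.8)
The plan is to combine the Chambolle--Pock convergence of the biased iterates with the saddle-point optimality conditions of \eqref{isotv:pd}, and then to convert the resulting asymptotic separation into an exact identification after finitely many steps using the strict margin carved out by $\beta$. First I would invoke the convergence result of \cite{CP} recalled just above the statement: since $\tau\kappa\norm{\Gamma^t\Gamma}_2 < 1$ and $\theta \in [0,1]$, the biased iterates satisfy $\hat\xi^k \to \hat\xi$ and $\hat v^k \to \hat x$, where $(\hat\xi,\hat x)$ is a saddle point of \eqref{isotv:pd}. As $\Gamma$ is linear and the norm is continuous, the auxiliary quantity converges blockwise as well,
\[
  \hat\nu^{k+1} = \hat\xi^k + \kappa\,\Gamma \hat v^k \;\longrightarrow\; \hat\nu^\infty := \hat\xi + \kappa\,\Gamma \hat x \enspace.
\]

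Next I would characterise $\hat\nu^\infty$ blockwise through the saddle-point conditions. Maximising over $\xi \in B_2^\lambda$ forces, for each $i$, $\hat\xi_i$ to maximise $\dotp{(\Gamma \hat x)_i}{\xi_i}$ under $\norm{\xi_i}_2 \leq \lambda$. For $i \in \hat\Ii$, i.e.\ $(\Gamma \hat x)_i \neq 0_b$, this gives the optimality relation already quoted in the text, $\hat\xi_i = \lambda (\Gamma \hat x)_i/\norm{(\Gamma \hat x)_i}_2$, so that $\hat\xi_i$ and $(\Gamma \hat x)_i$ are positively proportional and
\[
  \norm{\hat\nu_i^\infty}_2 = \lambda + \kappa\norm{(\Gamma \hat x)_i}_2 \geq \lambda + \kappa\alpha \enspace.
\]
For $i \notin \hat\Ii$, i.e.\ $(\Gamma \hat x)_i = 0_b$, one has $\hat\nu_i^\infty = \hat\xi_i$ with $\norm{\hat\xi_i}_2 \leq \lambda$, hence $\norm{\hat\nu_i^\infty}_2 \leq \lambda$. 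The hypothesis $0 < \beta < \kappa\alpha$ then places the threshold strictly inside the gap, $\lambda < \lambda + \beta < \lambda + \kappa\alpha$, so that at the limit the rule $\norm{\hat\nu_i^\infty}_2 > \lambda + \beta$ selects exactly $\hat\Ii$.

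Finally I would upgrade this to finite time. Setting $\gamma := \min(\beta,\,\kappa\alpha - \beta) > 0$ and using that there are only finitely many blocks $i \in [m]$, the convergence $\hat\nu^{k+1} \to \hat\nu^\infty$ together with continuity of $\norm{\cdot}_2$ yields an index $K$ with $\bigl|\norm{\hat\nu_i^{k+1}}_2 - \norm{\hat\nu_i^\infty}_2\bigr| < \gamma$ for all $i$ and all $k \geq K$. For such $k$, blocks in $\hat\Ii$ obey $\norm{\hat\nu_i^{k+1}}_2 > \norm{\hat\nu_i^\infty}_2 - \gamma \geq \lambda + \beta$, while blocks outside obey $\norm{\hat\nu_i^{k+1}}_2 < \norm{\hat\nu_i^\infty}_2 + \gamma \leq \lambda + \beta$, whence $\hat\Ii^{k+1} = \hat\Ii$ exactly for every $k \geq K$, which is the claimed finite-time convergence.

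The main obstacle is the off-support boundary (extended support) configuration: even when $(\Gamma \hat x)_i = 0_b$, dual optimality only guarantees $\norm{\hat\xi_i}_2 \leq \lambda$ with possible equality, so that thresholding at $\lambda$ itself would be ambiguous for such blocks. It is precisely the strict margin $\beta > 0$ that resolves this, and the variant in the footnote testing $\norm{\hat\nu_i^{k+1}}_2 \geq \lambda$ is what one would use to recover the extended support instead. A secondary point worth stating carefully is that nonuniqueness of the saddle point is harmless here, since the quoted convergence already fixes a single limit $(\hat\xi,\hat x)$ and the optimality relations are applied at that limit; in particular $\hat\Ii = \supp(\Gamma \hat x)$ is understood for this limiting $\hat x$.
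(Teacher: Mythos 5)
Your proof is correct and follows essentially the same route as the paper's (much terser) sketch: Chambolle--Pock convergence of the biased iterates, the blockwise dual optimality condition $\hat\xi_i = \lambda(\Gamma \hat x)_i/\norm{(\Gamma \hat x)_i}_2$ on the support versus $\norm{\hat\xi_i}_2 \leq \lambda$ off it, and the strict margin created by $0<\beta<\kappa\alpha$ to obtain exact identification after finitely many iterations. The only difference is completeness: the paper asserts the separation ``for $k$ sufficiently large'' and defers details to a reference, whereas you make the finite-time step explicit via the uniform margin $\gamma=\min(\beta,\kappa\alpha-\beta)$ over the finitely many blocks.
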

\begin{proof}We just give a sketch of the proof. More details can be found in \cite{Deledalle_Papadakis_Salmon15}.
On the support, one has $(\hat  \xi^k_i,\hat x^k_i,\hat v^k_i)\to (\lambda \Gamma\hat x^k_i/\norm{(\Gamma\hat x^k)_i}_2,\hat x^k_i,\hat x^k_i)$.
Then for $k$ sufficiently large, $\norm{\hat \xi_i^k+\kappa (\Gamma \hat v^k)_i}\leq\lambda+\beta$ if and only if $i\in\Ii^c$.
\end{proof}
\paragraph{Online direction and norm identification.}
In algorithm  \eqref{algo_final}, the function
$\Psi(\hat\nu^{k+1})$ aims at approximating $\Gamma \hat x$ .
As for the estimation of the support, instead of directly considering $\Gamma \hat x^k$, we rather rely on $\hat\nu^{k+1}=\hat \xi^k+\kappa \Gamma \hat v^k $
to obtain a stable estimation $\hat z$ of the vector $\Gamma \hat x$.

Table \ref{tab:prox_phi_dual} shows that for all the considered block penalties, the computation of the proximal operator of $\phi^*$  involves the normalized vector $\hat z_i/\norm{\hat z_i}_2$ on the support $\hat \Ii$.
This direction is approximated at each iteration by normalizing $\hat \nu_i^{k+1}$.
The amplitude $\norm{\hat z_i}_2$ is just required for the models QO and QD.

The method in \cite{deledalle2016clear}  relies on the QO block penalty (see  section  \ref{sec:equivalence} for more details) and considers the algorithmic differentiation of the biased process to obtain a refitting algorithm. Given its good numerical performance, we leverage this approach to define the function $\Psi$
as shown in the next proposition. 
\begin{prop}\label{prop:Psiclear}
The approach of \cite{deledalle2016clear} corresponds to refit with the QO block penalty and the following convergent estimation $\Psi(\hat \nu^{k+1}_i)$ of $(\Gamma \hat x)_i$.
\begin{align}\label{Psi_function}
  \Psi(\hat \nu^{k+1}_i)
  =
  \tfrac{\norm{\hat\nu^{k+1}_i}_2 - \lambda}{\kappa \norm{\hat\nu^{k+1}_i}_2}
 \nu^{k+1}_i.
\end{align}

\end{prop}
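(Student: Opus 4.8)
The plan is to decompose the statement into its two assertions: that the map $\Psi$ of \eqref{Psi_function} converges to $(\Gamma \hat x)_i$, and that driving the refitting with the QO penalty reproduces the algorithmic-differentiation scheme of \cite{deledalle2016clear}. The convergence of $\Psi$ is the computational core and I would establish it directly; the QO correspondence I would obtain by matching descent directions, relying on the forward-differentiation characterization announced for Section~\ref{sec:equivalence}.

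First I would fix the limit of $\hat\nu^{k+1}$ on the support. By the Chambolle--Pock convergence recalled after \eqref{algo_final}, $(\hat\xi^k,\hat v^k)\to(\hat\xi,\hat x)$, so $\hat\nu^{k+1}=\hat\xi^k+\kappa\Gamma\hat v^k\to\hat\xi+\kappa\Gamma\hat x$. On $i\in\hat\Ii$ the saddle-point optimality condition recalled in the \emph{Online support identification} paragraph gives $\hat\xi_i=\lambda(\Gamma\hat x)_i/\norm{(\Gamma\hat x)_i}_2$, so that, writing $n_i=\norm{(\Gamma\hat x)_i}_2>0$, the vector $\hat\nu_i^{k+1}$ tends to the positive multiple $(\lambda/n_i+\kappa)(\Gamma\hat x)_i$ of $(\Gamma\hat x)_i$, with norm tending to $\lambda+\kappa n_i$. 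Substituting this into \eqref{Psi_function}, the scalar prefactor $(\norm{\hat\nu_i^{k+1}}_2-\lambda)/(\kappa\norm{\hat\nu_i^{k+1}}_2)$ tends to $n_i/(\lambda+\kappa n_i)$, which is exactly the reciprocal of the scaling factor $(\lambda+\kappa n_i)/n_i$ relating $\hat\nu_i^{k+1}$ to $(\Gamma\hat x)_i$. The two factors cancel and $\Psi(\hat\nu_i^{k+1})\to(\Gamma\hat x)_i$, as claimed; this part needs only colinearity and norm saturation of the dual variable on the support.

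For the QO correspondence, I would compare the two updates at the level of their search directions. On the support the projection $\Pi(\cdot,\lambda)_i$ is smooth with Jacobian $\tfrac{\lambda}{\norm{\nu_i}_2}(\Id-P_{\nu_i})$, where $P_{\nu_i}$ projects onto $\Span(\nu_i)$, so the forward differentiation of the biased recursion in \cite{deledalle2016clear} propagates exactly the orthogonal-complement operator $\Id-P_{\nu_i}$. On the other hand, using $\norm{z}_2^2(1-\cos^2(z,\hat z))=\norm{(\Id-P_{\hat z})z}_2^2$, the QO penalty \eqref{pen:QO} rewrites as $\phi_{\mathrm{QO}}(z,\hat z)=\tfrac{\lambda}{2\norm{\hat z}_2}\norm{(\Id-P_{\hat z})z}_2^2$, whose gradient in $z$ is $\tfrac{\lambda}{\norm{\hat z}_2}(\Id-P_{\hat z})z$. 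Taking $\hat z=\Psi(\hat\nu^{k+1})$ as the stable estimate of $(\Gamma\hat x)_i$ then identifies the CLEAR refitting direction with the proximal/gradient step on $\phi_{\mathrm{QO}}$, establishing the first assertion.

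The main obstacle is this last identification: one must track how the algorithmic differentiation of the \emph{entire} primal--dual recursion --- not merely the projection step --- assembles into the QO-regularized least squares problem, and verify that the normalization $1/\norm{\hat z}_2$ in $\phi_{\mathrm{QO}}$ is consistent with the Jacobian scaling $\lambda/\norm{\nu_i}_2$ once one substitutes $\norm{\nu_i}_2=\lambda+\kappa n_i$ and the definition of $\Psi$. I expect this scalar bookkeeping, rather than any conceptual difficulty, to require the most care; the convergence of $\Psi$ follows immediately once the limit of $\hat\nu_i^{k+1}$ is pinned down.
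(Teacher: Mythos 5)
Your second paragraph (convergence of $\Psi$) is correct and is essentially the paper's own closing step carried out in more detail: on the support $\hat\nu_i^{k+1}\to(\lambda/n_i+\kappa)(\Gamma\hat x)_i$ with $n_i=\norm{(\Gamma\hat x)_i}_2$, and the prefactor in \eqref{Psi_function} cancels the colinearity factor. The gap lies in your QO identification, and it is not mere bookkeeping: as set up, the step would fail. The refitted dual update in Algorithm \eqref{algo_final} is not a gradient step on $\phi_{\rm QO}$; it is $\tilde\xi_i^{k+1}=\prox_{\kappa\phi_{\rm QO}^*}(\tilde\nu_i^{k+1},\hat z)$ with $\hat z=\Psi(\hat\nu_i^{k+1})$, which by Table \ref{tab:prox_phi_dual} equals
\begin{align*}
\tilde\xi_i^{k+1}=\tfrac{\lambda}{\lambda+\kappa\norm{\hat z}_2}\left(\tilde\nu_i^{k+1}-P_{\hat z}(\tilde\nu_i^{k+1})\right).
\end{align*}
Matching this against the CLEAR update $\tfrac{\lambda}{\norm{\hat\nu_i^{k+1}}_2}\bigl(\tilde\nu_i^{k+1}-P_{\hat\nu_i^{k+1}}(\tilde\nu_i^{k+1})\bigr)$ requires (i) that $\Psi(\hat\nu_i^{k+1})$ be a positive multiple of $\hat\nu_i^{k+1}$, so the two projections agree (true on the detected support, where $\norm{\hat\nu_i^{k+1}}_2>\lambda$), and (ii) the scalar identity $\lambda+\kappa\norm{\Psi(\hat\nu_i^{k+1})}_2=\norm{\hat\nu_i^{k+1}}_2$, which holds \emph{exactly at every iteration} because $\norm{\Psi(\hat\nu_i^{k+1})}_2=(\norm{\hat\nu_i^{k+1}}_2-\lambda)/\kappa$; indeed this identity is precisely how the paper derives \eqref{Psi_function}, by solving for the colinear estimate that makes the two factors coincide.

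Your route instead matches the gradient of $z\mapsto\phi_{\rm QO}(z,\hat z)$, namely $\tfrac{\lambda}{\norm{\hat z}_2}(\Id-P_{\hat z})z$, against the Jacobian $\tfrac{\lambda}{\norm{\nu_i}_2}(\Id-P_{\nu_i})$ of the projection. With $\hat z=\Psi(\hat\nu_i^{k+1})$, the gradient's scalar is $\lambda/\norm{\Psi(\hat\nu_i^{k+1})}_2=\lambda\kappa/(\norm{\hat\nu_i^{k+1}}_2-\lambda)$, which does \emph{not} equal $\lambda/\norm{\hat\nu_i^{k+1}}_2$: the gradient of the penalty and the proximal operator of its conjugate are different linear maps here, with different scalings. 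So the identification must go through the prox-of-conjugate formula, not the gradient. Once you make that substitution, the two updates coincide exactly (not merely in direction, and not merely asymptotically), and your convergence paragraph then finishes the proof exactly as the paper does.
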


\begin{proof}
The method in  \cite{deledalle2016clear} realizes the refitting through an
algorithmic differentiation of the projection  of the biased process.
This leads to the Algorithm \eqref{algo_final} except for the update of the dual variable $\tilde \xi_i$ on the support $i\in \hat \Ii^{k+1}$ that reads  in  \cite{deledalle2016clear}:
$$\tilde \xi^{k+1}_i=\tfrac{\lambda}{\norm{\hat \nu^{k+1}_i}_2}\left( \tilde \nu^{k+1}_i -P_{\hat \nu^{k+1}_i}(  \tilde \nu^{k+1}_i)\right),$$
instead of $$\tilde \xi^{k+1}_i=\prox_{\kappa {\phi^*_{QO}}(\cdot,\Psi(\hat \nu^{k+1}_i),\hat{\Ii}^{k+1} )}(\tilde \nu^{k+1}_i)$$ in Algorithm \eqref{algo_final}.
Using the proximal operator given by the  QO block penalty in Table \ref{tab:prox_phi_dual} gives :
${\norm{\hat \nu^{k+1}_i}_2}={\lambda+\kappa \norm{\Psi(\nu^{k+1}_i)}_2}$
which leads to the  function \eqref{Psi_function}  for estimating $\hat z^k_i$.


We deduce that $\Psi(\hat \nu^{k+1}_i)=\Psi(\hat \xi^k+\kappa \Gamma \hat v^k)\to(\Gamma \hat x)_i$ from  the convergence  of the biased variables $(\hat  \xi^k_i,\hat x^k_i,\hat v^k_i)\to (\lambda \Gamma\hat x^k_i/\norm{(\Gamma\hat x^k)_i}_2,\hat x^k_i,\hat x^k_i)$.
\end{proof}

\paragraph{Discussion.} This joint-estimation  considers at every iteration $k$  different  refitting functions $\omega_\phi^*(.,\Psi(\hat\nu^{k+1}),\hat{\Ii}^{k+1})$ in  \eqref{general_refit_reform}.
Then, unless $b=1$ (see \cite{Deledalle_Papadakis_Salmon15}),
we cannot show the convergence of the refitting scheme.
As in \cite{deledalle2016clear}, we nevertheless observe convergence and
a  stable behavior for this algorithm.

In addition to its better numerical stability,  the running time of  joint-refitting  is more interesting than the posterior approach.
In Algorithm \eqref{algo_final},  the refitted variables at iteration $k$ require the biased variables at the same iteration and the whole process  can be realized in parallel without significantly affecting the running time of the  original biased process. On the other hand, posterior refitting is necessarily sequential and  the running time is doubled in general.

\subsection{Douglas-Rachford formulation}

An alternative to obtain solutions $\hat x$ of \eqref{isotv} and $\tilde x$ of \eqref{general_refit_reform}
is to consider the splitting
TViso reformulation, as proposed in \cite{combettes2007douglas}, and given by
\begin{align}\label{isotv:dr}
\min_{\substack{x \in \RR^n \\ \xi \in \RR^b}}\;& \tfrac12 \norm{\Phi x - y }_2^2 + \iota_{\{x,\xi; \Gamma x=\xi\}}(x,\xi) +\lambda\norm{\xi}_{1,2}\enspace,\\
\label{refit:dr}
\min_{\substack{x \in \RR^n \\ \xi \in \RR^b}}\;& \tfrac12 \norm{\Phi x - y }_2^2 + \iota_{\{x,\xi; \Gamma x=\xi\}}(x,\xi) +\omega_\phi(\xi,\Gamma \hat x,\hat{\Ii})\enspace.
\end{align}
This problem can be solved with the Douglas-Rachford algorithm
\cite{douglas1956numerical,lions1979splitting}. Introducing 
the parameters $\alpha  \in(0,2)$ and $\tau>0$, the iterates read

\newcommand{\dualz}{z} 
\begin{equation}\label{eq:algo}
\left\{
  \begin{array}{@{\;}r@{\;=\;}l}
      \hat\upsilon^{k+1}    &\Gamma^{-}(2\hat x^k \!-\! \hat\mu^k \!+\Gamma^t(2  \hat\xi^k \!-\!\hat\zeta^k)),\\
                   \tilde\upsilon^{k+1}    &\Gamma^{-}(2\tilde x^k \!-\! \tilde\mu^k \!+\Gamma^t(2  \tilde\xi^k \!-\!\tilde\zeta^k)),\\
    \hat\mu^{k+1}    & \hat\mu^k +\alpha(\hat\upsilon^{k+1}  -\hat x^k),\\
   \tilde{\mu}^{k+1}    &  \tilde{\mu}^{k}+\alpha(\tilde\upsilon^{k+1} - \tilde x^k),\\
          \hat \zeta^{k+1}& \hat \zeta^{k} +\alpha(\Gamma\hat\upsilon^{k+1} -\hat\xi^k),\\
    \tilde{\zeta}^{k+1} &\tilde{\zeta}^{k}+\alpha (\Gamma\tilde\upsilon^{k+1} -\tilde\xi^k),\\
    \hat x^{k+1} &  \Phi_\tau ^{-} (\hat\mu^{k+1}+\tau\Phi^ty),\\
    \tilde{x}^{k+1} & \Phi_\tau ^{-} (\tilde\mu^{k+1}+\tau\Phi^ty),\\
    \hat  \xi^{k+1} & \rm{ST}(\hat \zeta^{k+1}, \tau\lambda), \\
        \hat{\Ii}^{k+1}&\enscond{i \in [m]}{\norm{\hat \zeta^{k+1}_i}_2 > \tau \lambda + \beta},\\
    \tilde{\xi}^{k+1}   & \prox_{\tau \omega_\phi(\cdot, \Upsilon(\hat \zeta^{k+1}),\hat{\Ii}^{k+1} )}(\tilde \zeta^{k+1})\\
  \end{array}
\right.
\!\!\!
\end{equation}
with $\Gamma^{-}=(\Id + \Gamma^t\Gamma)^{-1} $,
the block Soft Thresholding (ST) operator
\begin{eqnarray*}
  \rm{ST}(\hat \zeta, \lambda)_i
  &\!=\!&
  \left\{
    \begin{array}{ll}
      0 & \ifq \norm{\hat\zeta_i}_2 \leq \lambda,\\[-0.1em]
      \hat \zeta_i - \lambda \frac{\hat \zeta_i}{ \norm{\hat \zeta_i}_2}\enspace & \otherwise\enspace.\\
    \end{array}
  \right.\\
\end{eqnarray*}
and the proximal operator of $\omega_\phi$ that can be written pointwise as
\begin{align}
  &\prox_{\tau \omega_\phi}(\zeta^0,\Gamma \hat x,\hat{\Ii})_i\nonumber\\
  =&
  \begin{cases}
    \zeta^0-\tau \prox_{\phi^*/\tau}(\zeta^0_i/\tau,(\Gamma \hat x)_i), & \textrm{if } i\in \hat{\Ii}\enspace,\\
   0, & \textrm{otherwise}\enspace.
  \end{cases}
\end{align}
The estimates $(\hat x^k,\hat \xi^k,\hat \zeta^k)$ of the biased solution converge  to $(\hat x,\hat \xi,\hat \zeta)=(\hat x,\Gamma \hat x,(1+\tau\lambda)(\Gamma \hat x))$, from the optimality conditions of problem \eqref{isotv:dr}.  On the other hand, there is again no convergence guarantee for the refitted variables since the  last two arguments of the function $\omega_\phi$ are potentially modified at each iteration.

The support of $\Gamma \hat x$ is estimated in line from auxiliary variables as $\hat \Ii^{k+1}=\enscond{i \in [m]}{\norm{\hat\zeta_i^{k+1}}_2>\tau \lambda+\beta}$, where, as shown in   \cite{deledalle2015contrast},  $\beta$ has to be taken such that $0<\beta<\lambda\min_{i\in\hat\Ii}\norm{(\Gamma \hat x)_i}_2$ to have convergence in finite time of $\hat \Ii^{k+1}$ to the  true support $\hat{\Ii} = \supp(\Gamma\hat{x})$.
Again the quantity $\beta > 0$ is chosen in practice as being the smallest available non-zero floating number.

Considering the stable algorithmic differentiation strategy of \cite{deledalle2016clear} suggested in the previous subsection, the vector $\Gamma \hat x$ is approximated on the support $\hat \Ii^{k+1}$ at each iteration  with the function
$\Upsilon(\hat \zeta)=\frac{\norm{\hat \zeta}_2-\lambda\tau}{\norm{\hat \zeta}_2}\hat \zeta$.

\section{Related re-fitting works}\label{sec:related}

\begin{table*}[!t]
  \centering
  \caption{\label{tab:prox_phi_dual} Convex conjugates and proximal operators of the studied block penalties $\phi$.
  }
  \begin{tabular}{l@{\hspace{1cm}}l@{\hspace{1cm}}l}
    \hline
    \multicolumn{1}{c}{$\phi$}& \multicolumn{1}{c}{$\phi^*(z,\hat z)$}& \multicolumn{1}{c}{$\prox_{\kappa \phi^*}(z_0,\hat z)$}\\
    \hline
    \hline
    HO&$
    \choice{
      0,
      & \quad\quad \ifq\cos (z, \hat{z}) =0\\
      +\infty,
      & \quad\quad
      \otherwise
    }$
    &$z_0 -P_{\hat z}(z_0)$\\
    \hline
    HD&
    $\choice{
      0,
      & \quad\quad \ifq \cos(z, \hat{z}) \leq 0\\
      +\infty,
      & \quad\quad
      \otherwise
    }$
    &$ \choice{
      z_0 -P_{\hat z}(z_0),
      & \quad \ifq \langle z_0,\hat z\rangle\geq 0\\
      z_0,
      &\quad
      \otherwise
    }$\\
    \hline
    QO&$\choice{
      \tfrac{\norm{\hat z}_2}{2\lambda}\norm{z}^2_2, &\;\;\, \ifq \cos(z, \hat{z})  =0\\
      +\infty,
      &\;\;\,
      \otherwise
    }$&$\tfrac{\lambda}{\lambda+\kappa \norm{\hat z}_2}\left(z_0 -P_{\hat z}(z_0)\right)$
    \\
    \hline
     QD&$
    \choice{
      \frac{\norm{\hat z}_2}{2\lambda}\norm{z}_2^2 & \ifq \cos\theta(z, \hat{z})  \leq 0\\
     +\infty
     &
      \otherwise
    }$&$ \frac{\lambda}{\lambda+\kappa \norm{\hat z}_2}\choice{
   z_0 -P_{\hat z}(z_0)
      & \ifq \langle z_0,\hat z\rangle\geq 0\\
      z_0
      &
      \otherwise
    }$\\
    \hline
     SO&
     $ \choice{
      0,
      & \quad\quad \ifq\cos(z,\hat z)=0 \qandq \norm{z}_2\leq \lambda\\
      +\infty,
      &\quad\quad
      \otherwise
    }$
  &
    $\lambda\tfrac{z_0 -P_{\hat z}(z_0)}{\max(\lambda,\norm{z_0 -P_{\hat z}(z_0)}_2)}$\\
     \hline
    SD&
    $ \choice{
      0,
      & \quad\quad \ifq\norm{z+\lambda \tfrac{\hat z}{\norm{\hat z}_2}}\leq \lambda {}^\dag\\
      +\infty,
      &\quad\quad
      \otherwise
    }$
    &$\lambda\left(\tfrac{z_0+\lambda\tfrac{\hat z}{\norm{\hat z}_2}}{\max(\lambda,\norm{z_0+\lambda \tfrac{\hat z}{\norm{\hat z}_2}}_2)}-\tfrac{\hat z}{\norm{\hat z}_2}\right)$\\
    \hline
  \end{tabular}
  \\
    {\scriptsize ${}^\dag$: note that the condition implies that $\cos(z,\hat z)\leq 0$.}
\end{table*}

We now review some other related refitting methods.
First we will discuss of refitting methods based on Bregman divergences,
next the refitting approch developped in \cite{deledalle2016clear}, and then we will see how
these techniques are related to the block penalites introduced in Section \ref{sec:refitting}.

\subsection{Bregman-based Refitting}

\subsubsection{Bregman divergence of $\ell_{12}$ structured regularizers}\label{sec:intro_breg}
In the literature \cite{Osher,brinkmann2016bias}, Bregman divergences have proven to be well suited to measure the discrepancy between the biased solution $\hat x$ and its refitting $\tilde x$.
We recall that for a convex, proper and lower semicontinuous function $\psi$, the associated (generalized) Bregman divergence between $x$ and $\hat x$ is, for any subgradient $\hat p\in \partial \psi(\hat x)$:
\begin{align}
  D^{\hat{p}}_\psi(x,\hat x)=\psi(x)-\psi(\hat x)-\langle \hat p,x-\hat x\rangle\geq 0\enspace.
\end{align}
If $\psi$ is an absolutely 1-homogeneous function, i.e.,
$\psi(\alpha x)=\abs{\alpha} \psi(x)$, $\forall\alpha \in\RR$,
then
\begin{align}\label{eq:sub_grad_prop_1homo}
  p\in \partial \psi(x)\Rightarrow \psi(x)=\dotp{p}{x}\enspace,
\end{align}
and the Bregman divergence simplifies into
\begin{align}
  D^{\hat{p}}_\psi(x,\hat x)=\psi(x)-\langle \hat p,x\rangle\enspace.
\end{align}

As an example, let us consider $\psi(x)=\norm{x}_2$.
Since $\psi$ is 1-homogeneous,
it follows that
\begin{align}\label{sub_l12}
  & D^{\hat{p}}_{\norm{\cdot}_2}(x,\hat x)
  =
  \norm{x}_2 - \dotp{\hat p}{x}
  \\
  \text{where } &
  \hat p \in \partial \norm{\cdot}_2(\hat x)
  \Leftrightarrow
  \left\{\begin{array}{ll}
  \hat p = \tfrac{\hat x}{\norm{ \hat x}_2} \quad \quad & \textrm{if } \hat x \ne 0 \enspace,\\
  \norm{\hat p}_2\leq 1&\textrm{otherwise}\enspace.
  \end{array}\right.
\end{align}

For regularizers of the form $\psi(x)=\norm{\Gamma x}_{1,2}$,
we introduce the following notations
\begin{align}
  &\Omega(\hat x) = \partial\norm{\Gamma \cdot}_{1,2}(\hat x)~,\\
  &\Delta(\hat x) =
  \enscond{
    \hat \eta
  }{
    \Gamma^t \hat \eta \in \partial\norm{\Gamma \cdot}_{1,2}(\hat x)
  }~.\label{Delta}
\end{align}

For all $\hat p \in \RR^n$, we have  \cite{BGMEC16}:
\begin{align}\label{opt_z}
  \hat p \in \Omega(\hat x)
  \Leftrightarrow\;
  &
  \exists \hat{\eta} \in \Delta(\hat x) \text{ s.t. }
  \hat p=\Gamma^t  \hat{\eta}
\end{align}%
or in short
$\Omega(\hat x) = \Gamma^t \Delta(\hat x)$.
Interestingly $\Delta(x)$ enjoys a separability property in terms of all subgradients
associated to the $\ell_2$ norms of the blocks
\begin{align}\label{separation}
  \Delta(\hat x)
  =
  \enscond{
    \hat \eta \in \RR^m
  }{
    \hat{\eta}_i\in\partial \norm{\cdot}_2((\Gamma \hat x)_i),\, \forall i \in [m]
  }~.
\end{align}
Additionally, the next proposition shows that there is a similar separability property
for the Bregman divergence. 
\begin{prop}\label{prop:Breg_separation}
Let $\hat{\eta} \in \Delta(\hat x)$ and $\hat p = \Gamma^t \hat{\eta}$. Then
\begin{align}\label{Breg_separation}
  D^{\hat{p}}_{\norm{\Gamma \cdot}_{1,2}} (x,\hat x)
  & = \sum_{i=1}^m D^{\hat{\eta}_i}_{ \norm{\cdot}_2} ((\Gamma x)_i,(\Gamma \hat x)_i)~.
\end{align}
\end{prop}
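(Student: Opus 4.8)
The plan is to exploit the absolute $1$-homogeneity of both $\psi = \norm{\Gamma \cdot}_{1,2}$ and the block norm $\norm{\cdot}_2$, which collapses each Bregman divergence into the compact form recorded in \eqref{eq:sub_grad_prop_1homo}, and then to distribute the resulting linear pairing across the blocks using the adjoint of $\Gamma$.

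First I would observe that since $\hat{\eta} \in \Delta(\hat x)$ and $\hat p = \Gamma^t \hat{\eta}$, relation \eqref{opt_z} guarantees $\hat p \in \Omega(\hat x) = \partial \psi(\hat x)$. Because $\psi$ is absolutely $1$-homogeneous, \eqref{eq:sub_grad_prop_1homo} applies and the divergence reduces to $D^{\hat p}_{\norm{\Gamma \cdot}_{1,2}}(x,\hat x) = \psi(x) - \dotp{\hat p}{x}$. I would then rewrite both terms block-wise: $\psi(x) = \sum_{i=1}^m \norm{(\Gamma x)_i}_2$ by definition of the $\ell_{12}$ norm, while $\dotp{\hat p}{x} = \dotp{\Gamma^t \hat{\eta}}{x} = \dotp{\hat{\eta}}{\Gamma x} = \sum_{i=1}^m \dotp{\hat{\eta}_i}{(\Gamma x)_i}$ by moving $\Gamma$ across the inner product. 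Regrouping gives $D^{\hat p}_{\norm{\Gamma \cdot}_{1,2}}(x,\hat x) = \sum_{i=1}^m \bigl( \norm{(\Gamma x)_i}_2 - \dotp{\hat{\eta}_i}{(\Gamma x)_i} \bigr)$.

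To conclude I would match each summand against a block-level Bregman divergence. The separability property \eqref{separation} states precisely that $\hat{\eta}_i \in \partial \norm{\cdot}_2((\Gamma \hat x)_i)$ for every $i$, so, the $\ell_2$ norm being itself $1$-homogeneous, a second application of \eqref{eq:sub_grad_prop_1homo} yields $D^{\hat{\eta}_i}_{\norm{\cdot}_2}((\Gamma x)_i, (\Gamma \hat x)_i) = \norm{(\Gamma x)_i}_2 - \dotp{\hat{\eta}_i}{(\Gamma x)_i}$, which is exactly the $i$-th summand above. Summing over $i$ produces \eqref{Breg_separation}.

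There is no genuine obstacle here; the only points requiring care are verifying that the simplified divergence formula is legitimately available at both scales — which hinges jointly on the $1$-homogeneity of $\psi$ and of $\norm{\cdot}_2$ and on the subgradient memberships supplied by \eqref{opt_z} and \eqref{separation} — and tracking the adjoint identity $\dotp{\Gamma^t \hat{\eta}}{x} = \dotp{\hat{\eta}}{\Gamma x}$ so that the block decomposition of the linear term aligns index-by-index with that of the norm.
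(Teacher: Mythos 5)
Your proposal is correct and follows essentially the same route as the paper's proof: collapse the Bregman divergence via the $1$-homogeneity identity \eqref{eq:sub_grad_prop_1homo}, pass $\Gamma^t$ across the inner product, and identify each blockwise term $\norm{(\Gamma x)_i}_2 - \dotp{\hat{\eta}_i}{(\Gamma x)_i}$ as $D^{\hat{\eta}_i}_{\norm{\cdot}_2}((\Gamma x)_i,(\Gamma \hat x)_i)$ using \eqref{separation}. The only difference is that you make explicit the subgradient memberships and homogeneity justifications that the paper leaves implicit, which is a welcome clarification rather than a deviation.
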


\begin{proof}
  The proof is obtained with straigthforward computations:
  \begin{align}\label{breg}
    D^{\hat{p}}_{\norm{\Gamma \cdot}_{1,2}} (x,\hat x)
    & =\norm{\Gamma x}_{1,2} -\langle \Gamma^t \hat{\eta}, x \rangle\\
    & =\norm{\Gamma x}_{1,2} -\langle \hat{\eta}, \Gamma x \rangle\\
    & =\sum_{i=1}^m
    \underbrace{\norm{(\Gamma x)_i}_2 -\langle  \hat  {\eta}_i, (\Gamma x)_i\rangle}_{
      D^{\hat{\eta}_i}_{ \norm{\cdot}_2} ((\Gamma x)_i,(\Gamma \hat x)_i)
    }
    \enspace.
  \end{align}
\end{proof}

In the following we consider $\hat{x}$ and its support $\hat{\Ii}$
to be fixed, and we denote by $D^{\hat{\eta}}_i(\Gamma x)$ the following
\begin{align}\label{D_i}
  D^{\hat{\eta}}_i(\Gamma x) &=
  D^{\hat{\eta}_i}_{ \norm{\cdot}_2} ((\Gamma x)_i,(\Gamma \hat x)_i)~.
\end{align}
The next proposition shows that such a divergence measures the fit of
directions between $(\Gamma x)_i$ and $(\Gamma \hat x)_i$,
but also partially captures the support $\hat \Ii$.

\begin{prop}\label{prop:zero_breg}
  Let $\hat{\eta} \in \Delta(\hat{x})$.
  We have for $i\in \hat{\Ii}$
\begin{equation}
  D_i^{\hat{\eta}}(\Gamma x)=0 \Leftrightarrow \exists
  \alpha_i\geq 0\textrm{ s.t. }(\Gamma  x)_i=\alpha_i (\Gamma \hat x)_i \enspace.
  \label{sub_supp1}
\end{equation}
and we have for $i\in \hat{\Ii}^c$
\begin{align}
  &D_i^{\hat{\eta}}(\Gamma x) = 0
  \Leftrightarrow
       {(\Gamma x)_i}=0_b   \text{ or }   \hat{\eta}_i = \frac{(\Gamma x)_i}{\norm{(\Gamma x)_i}_2}\enspace.
       \label{sub_supp2}
\end{align}
\end{prop}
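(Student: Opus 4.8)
The plan is to prove Proposition~\ref{prop:zero_breg} by unpacking the definition of the scalar Bregman divergence $D_i^{\hat\eta}(\Gamma x) = \norm{(\Gamma x)_i}_2 - \dotp{\hat\eta_i}{(\Gamma x)_i}$ from \eqref{D_i} and \eqref{sub_l12}, and then treating the two cases $i \in \hat\Ii$ and $i \in \hat\Ii^c$ separately using the explicit description of the subgradient $\hat\eta_i \in \partial\norm{\cdot}_2((\Gamma\hat x)_i)$ provided by the separability property \eqref{separation} together with the formula in \eqref{sub_l12}.

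For the case $i \in \hat\Ii$, by definition of the support we have $(\Gamma\hat x)_i \ne 0_b$, so \eqref{sub_l12} forces $\hat\eta_i = (\Gamma\hat x)_i / \norm{(\Gamma\hat x)_i}_2$. Substituting this into the divergence gives
\begin{align}
  D_i^{\hat\eta}(\Gamma x)
  = \norm{(\Gamma x)_i}_2 - \dotp{\tfrac{(\Gamma\hat x)_i}{\norm{(\Gamma\hat x)_i}_2}}{(\Gamma x)_i}\enspace.
\end{align}
This is precisely $\norm{(\Gamma x)_i}_2\bigl(1 - \cos((\Gamma x)_i,(\Gamma\hat x)_i)\bigr)$, which is nonnegative by Cauchy--Schwarz and vanishes if and only if either $(\Gamma x)_i = 0_b$ or the cosine equals $1$, i.e.\ the two vectors share the same direction. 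I would then note that both of these situations are captured by the single condition that $(\Gamma x)_i = \alpha_i (\Gamma\hat x)_i$ for some $\alpha_i \geq 0$ (the value $\alpha_i = 0$ handling the null case), which yields \eqref{sub_supp1}. The forward and backward directions of the equivalence follow directly from the equality case of Cauchy--Schwarz.

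For the case $i \in \hat\Ii^c$, one has $(\Gamma\hat x)_i = 0_b$, so the second branch of \eqref{sub_l12} applies: $\hat\eta_i$ is merely constrained by $\norm{\hat\eta_i}_2 \leq 1$. The divergence reads $D_i^{\hat\eta}(\Gamma x) = \norm{(\Gamma x)_i}_2 - \dotp{\hat\eta_i}{(\Gamma x)_i}$, again nonnegative. If $(\Gamma x)_i = 0_b$ the divergence is trivially zero. Otherwise, writing $\dotp{\hat\eta_i}{(\Gamma x)_i} = \norm{\hat\eta_i}_2\norm{(\Gamma x)_i}_2\cos(\hat\eta_i,(\Gamma x)_i)$, equality to $\norm{(\Gamma x)_i}_2$ with $\norm{\hat\eta_i}_2 \leq 1$ forces simultaneously $\norm{\hat\eta_i}_2 = 1$ and perfect alignment, which is exactly $\hat\eta_i = (\Gamma x)_i/\norm{(\Gamma x)_i}_2$, giving \eqref{sub_supp2}.

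I do not anticipate a serious obstacle here: the statement is essentially a blockwise restatement of the well-known fact that a Bregman divergence of a norm vanishes precisely on the subdifferential-alignment set. The only point requiring mild care is the bookkeeping in the $i \in \hat\Ii^c$ case, where the subgradient $\hat\eta_i$ is not pinned down and one must argue that the equality in Cauchy--Schwarz saturates the constraint $\norm{\hat\eta_i}_2 \leq 1$; this is where the clean "or" structure of \eqref{sub_supp2} emerges. The argument is otherwise a direct computation combining \eqref{sub_l12}, \eqref{separation} and the equality case of Cauchy--Schwarz.
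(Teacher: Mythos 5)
Your proof is correct and follows essentially the same route as the paper's: use the separability property \eqref{separation} to pin down $\hat\eta_i$ as the unit vector $(\Gamma\hat x)_i/\norm{(\Gamma\hat x)_i}_2$ on the support and as an arbitrary element of the unit ball off it, then invoke the equality case of Cauchy--Schwarz. You merely spell out more explicitly the saturation argument ($\norm{\hat\eta_i}_2=1$ plus alignment) that the paper's terser two-case split leaves implicit.
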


\begin{proof}
  $\bullet$ For $i\in\hat{\Ii}$, i.e. $\norm{(\Gamma x)_i}_2>0$, we get $ \hat{\eta}_i = \frac{(\Gamma \hat x)_i}{\norm{(\Gamma \hat x)_i}_2}$ from \eqref{separation}. We conlude from the definition $D_i^{\hat{\eta}}(\Gamma x)=\norm{(\Gamma x)_i}_2-\langle \hat \eta_i,(\Gamma x)_i\rangle=0$.
  \medskip

  \noindent
  $\bullet$ For $i\in\hat{\Ii}^c$ and using  \eqref{separation} we distinguish two cases from \eqref{sub_l12}. If $\norm{\hat \eta_i}_2<1$, then $D_i^{\hat{\eta}}(\Gamma x)=0$ iff ${(\Gamma x)_i}=0_b$. Otherwise one necessarily gets $ \hat{\eta}_i = \frac{(\Gamma x)_i}{\norm{(\Gamma x)_i}_2}$.
\end{proof}

As noticed in the image fusion model in \cite{ballester2006variational}, minimizing $D_i^{\hat{\eta}}(\Gamma x)$ enforces the alignment of  the direction $(\Gamma x)_i$ with $\hat \eta_i$ and it is an efficient way to avoid  contrast inversion.

In the following sections, unless stated otherwise, we will always
consider $\hat{\eta} \in \Delta(\hat x)$ and
$\hat p = \Gamma^t \hat{\eta} \in \Omega(\hat x)$.

\subsubsection{Iterative Bregman regularization}
The  Bregman process \cite{Osher} reduces the bias of solutions of \eqref{isotv}
by successively solving problems of the form
\begin{align}\label{iterative_OSHER}
  &
  \tilde{x}_{l+1} \in\uargmin{x \in \RR^n}
  \tfrac12 \norm{\Phi x - y }_2^2 + \lambda  D^{\tilde{p}_l}_{\norm{\Gamma \cdot}_{1,2}} (x,\tilde{x}_{l})~.
\end{align}
with $\tilde{p}_l \in \Omega(\tilde{x}_l)$.
We consider a fixed $\lambda$, but different strategies can be considered with decreasing parameters $\lambda_l$ as in \cite{Schetzer01,Tadmor04}.
For $l=0$, setting  $\tilde{x}_{0}=0_n$, and taking  $\tilde{p}_0=0_n\in \Omega(\tilde{x}_{0})$ so that
$D^{\tilde{p}_0}_{\norm{\Gamma \cdot}_{1,2}} (x,\tilde{x}_{0})= \norm{\Gamma x}_{1,2}$, the first step exactly gives the biased solution of \eqref{isotv} with $\tilde{x}_{1}=\hat x$. 
We denote by $\tilde{x}^{\mathrm{IB}(\hat p)}=\tilde{x}_{2}$ the solution obtained after $2$ steps of the Iterative Bregman (IB)  procedure  \eqref{iterative_OSHER}:
\begin{align}\label{iterative_refit}
  \tilde{x}^{\mathrm{IB}(\hat{p})}=\tilde{x}_{2}\in\uargmin{x \in \RR^n} \tfrac12 \norm{\Phi x - y }_2^2 +
  \lambda D^{\hat{p}}_{\norm{\Gamma \cdot}_{1,2}} (x,\hat x)\enspace.
\end{align}
As underlined in relation \eqref{sub_supp1}, by minimizing
$D^{\hat{p}}_{\norm{\Gamma \cdot}_{1,2}} (x,\hat x) = \sum_{i=1}^m D_i^{\hat{\eta}}(\Gamma x)$,
one aims at preserving the direction of $\Gamma \hat x$ on the support $\hat{\Ii}$, without ensuring $\supp(\Gamma \tilde x^{\textrm{IB}(\hat{\p})})\subseteq \hat{\Ii}$.

For the iterative framework, the support of the previous solution may indeed not be preserved ($\norm{(\Gamma \tilde x_l)_i}_2=0 \nRightarrow \norm{(\Gamma \tilde x_{l+1})_i}_2=0$) and can hence grow.
The support of $\Gamma x_0$ for $\tilde{x}_{0}=0_n$ is for instance totally empty whereas the one of  $\hat x= \tilde{x}_{1}$ may not  (and should not) be empty.
For $l\to\infty$, the process actually converges to some $x$ such that $\Phi x=y$.
Because the IB procedure does not preserve the support of the solution,
it cannot be considered as a refitting procedure and is more related to
boosting approaches as discussed in Section \ref{sec:intro}.

\subsubsection{Bregman based refitting}

In order to respect the support of the biased solution $\hat x$ and to keep track of
the direction $\Gamma \hat x$ during the refitting, the authors of \cite{brinkmann2016bias}
proposed the following model:
\begin{align}\label{refit_german:HD}
  &
  \tilde{x}^{\mathrm{B}(\hat{\p})} \in\uargmin{x; \hat \p \in \Omega(x)}
  \tfrac12 \norm{\Phi x - y}_2^2 \enspace.
\end{align}
This model enforces  the  Bregman divergence  to be  $0$,
since, from eq.~\eqref{eq:sub_grad_prop_1homo}, we have:
\begin{align}\label{prop_HD}
  \hat p \in \Omega(x)
  \Rightarrow
  D^{\hat{p}}_{\norm{\Gamma \cdot}_{1,2}} (x,\hat x)
  = 0
  \enspace.
\end{align}

We see from \eqref{sub_supp1} that for $i\in\hat{\Ii}$,
the direction of $(\Gamma \hat x)_i$ is preserved in the refitted solution.
From \eqref{sub_supp2}, we also observe that the absence of support is also preserved
for any $i\in \hat{\Ii}^c$ where $\norm{\hat{\eta}_i}_2<1$.
Note that extra elements in the support $\Gamma \tilde x^{\textrm{B}(\hat p)}$
may be added at coordinates $i\in\hat{\Ii}^c$ where $\norm{\hat{\eta}_i}_2=1$.

\subsubsection{Infimal Convolutions of Bregman (ICB) distances based refitting}

To get rid of the direction dependency,
the ICB (Infimal Convolutions of Bregman distances)
model is also proposed in \cite{brinkmann2016bias}:
\begin{align}\label{refit_german:HO}
  &
  \tilde{x}^{\mathrm{ICB}(\hat{p})} \in\uargmin{x; \pm \hat p \in \Omega(x)}
  \tfrac12 \norm{\Phi x - y}_2^2 \enspace.
\end{align}
The orientation model may nevertheless involve contrast inversions between biased and refitted solutions.
In practice, relaxations
are used in \cite{brinkmann2016bias}
by solving, for a large value $\gamma>0$,
\begin{align}\label{refit_german:HD2}
  &
  \tilde{x}_\gamma^{\mathrm{B}(\hat{p})} \in\uargmin{x \in \RR^n}
  \tfrac12 \norm{\Phi x - y}_2^2 +\gamma D^{\hat{p}}_{\norm{\Gamma \cdot}_{1,2}} (x,\hat x)~.
\end{align}
The main advantage of this refitting strategy is that no support identification is required since everything is implicitly encoded in the subgradient $\hat p$. This makes the process stable even if  the estimation of $\hat x$ is not highly accurate.
The support of $\Gamma \hat x$ is nevertheless only approximately preserved, since the constraint
$D^{\hat p}_{\norm{\Gamma \cdot}_{1,2}} (x,\hat x)=0$ can never be ensured numerically with a finite value of $\gamma$.

\subsubsection{The best of both Bregman worlds}
From relations \eqref{Breg_separation} and \eqref{D_i}, the refitting models  given in  \eqref{iterative_refit} and \eqref{refit_german:HD} can be
reexpressed as a function of $\hat{\eta}$
\begin{align}
\label{iterative_refit2}
\tilde{x}^{\mathrm{IB}(\hat{\eta})}&\in\uargmin{x \in \RR^n} \tfrac12 \norm{\Phi x - y }_2^2 +\lambda\sum_{i=1}^m D_i^{\hat{\eta}}(\Gamma x)\enspace,\\
\label{refit_german:HD3}
  \tilde{x}^{\mathrm{B}(\hat{\eta})} &\in \uargmin{x \in \RR^n}
  \tfrac12 \norm{\Phi x - y}_2^2 \,\,\mathrm{ s.t. }\,\,
  D_i^{\hat{\eta}}(\Gamma x)=0,\,\forall i\in [m]\enspace.
\end{align}
Alternatively we now introduce a mixed model, that we coin Best of Both Bregman (BBB), as
\begin{align}\label{SD:bregman}
  \tilde{x}^{\mathrm{BBB(\hat \eta)}} \in \uargmin{x \in \RR^n}
  \tfrac12 \norm{\Phi x - y}_2^2 +\lambda \sum_{i\in\hat{\Ii}} D_i^{\hat{\eta}}(\Gamma x)\enspace,\\
  \quad \quad \mathrm{s.t. }\,\, D_i^{\hat{\eta}}(\Gamma x)=0,\,\forall i\in\hat{\Ii}^c\enspace.
\end{align}
With such reformulations, connections between refitting models \eqref{iterative_refit2}, \eqref{refit_german:HD3} and \eqref{SD:bregman} can be clarified.
The solution $\tilde{x}^{\mathrm{IB}}$ \cite{Osher} is too relaxed, as it only penalizes the directions $(\Gamma  x)_i$ using $(\Gamma \hat x)_i$, without aiming at preserving the support of $\hat x$.
The solution $\tilde{x}^{\mathrm{B}}$ \cite{brinkmann2016bias} is too constrained: the direction within the support is required to be preserved exactly.
The proposed refitting $\tilde{x}^{\mathrm{BBB}}$ lies in-between: it preserves the support, while authorizing some directional flexibility, as illustrated by the sharper square edges in Figure 1(g).

An important difference with BBB is that we consider
local inclusions of subgradients of the function $\lambda\norm{\cdot}_{1,2}$
at point $(\Gamma x)_i$ instead of the global inclusion of subgradients of
the function $\lambda\norm{\Gamma \cdot}_{1,2}$ at point $x$ as in
\eqref{refit_german:HD} and \eqref{refit_german:HO}. Such a change of
paradigm allows to adapt the refitting locally by preserving the
support while including the flexibility of the original Bregman
approach \cite{Osher}.

\subsection{Covariant LEAst Square Refitting (CLEAR)}

We now describe an alternative way for performing variational refitting.
When specialized to $\ell_{1,2}$ sparse analysis regularization, CLEAR, a general refitting framework \cite{deledalle2016clear}, consists in computing
\begin{align}\label{refit:QO}
&  \tilde{x}^{\mathrm{CLEAR}} \in \uargmin{x; \; \supp(\Gamma x) \subseteq \hat{\Ii}}
  \tfrac12 \norm{\Phi x - y}^2_2
  \nonumber
  \\
& \quad +
   \sum_{i\in\hat{\Ii}} \tfrac{\lambda}{2\norm{(\Gamma\hat x)_i}_2} \snorm{
   (\Gamma  x)_i-P_{(\Gamma \hat x)_i}((\Gamma x)_i)
   }^2\enspace,
\end{align}
where we recall that $P_{(\Gamma \hat x)_i}(.)$ is the orthogonal projection onto $\Span(\Gamma \hat x)_i$.
This model promotes refitted solutions preserving to some extent the orientation
$\Gamma \hat x$ of the biased solution. It also shrinks the amplitude
of $\Gamma x$ all the more that the amplitude of $\Gamma \hat x$
are small.
This penalty does not promote any kind of direction preservation,
and as for the ICB model, contrast inversions may be observed between
biased and refitted solutions.
The quadratic term also over-penalizes large changes of orientation.

\subsection{Equivalence}\label{sec:equivalence}

In the next proposition we show that, for a given vector $\etaspec$,
the concurrent Bregman based refitting techniques coincides with two
of the refitting block penalties introduced in Section \ref{sec:refitting},
namely HD, HO, and that SD is nothing else than the proposed
Best of Both Bregman based penalty.
\begin{prop}\label{prop:breg_equivalence}
  Let $\etaspec \in \RR^m$ be defined as
  \begin{align}
  \etaspec = \arg \min_{\eta \in \Delta(\hat{x})} \norm{\eta}_2 ~.
  \end{align}
  Then, for any $y$, we have the following equalities
  \begin{enumerate}
      \item $\tilde{x}^{\rm B(\etaspec)} = \tilde{x}^{\rm HD}$~,
    \item $\tilde{x}^{\rm ICB(\etaspec)} = \tilde{x}^{\rm HO}$~,
    \item $\tilde{x}^{\rm BBB(\etaspec)} = \tilde{x}^{\rm SD}$~.
  \end{enumerate}
  where the equalities have to be understood as an equality
  of the corresponding sets of minimizers.
\end{prop}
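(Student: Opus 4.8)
The engine of the proof is an explicit description of the minimal-norm certificate $\etaspec$, after which all three equalities reduce to block-wise identities. First I would compute $\etaspec$: since $\Delta(\hat x)$ separates over blocks by \eqref{separation}, minimizing $\norm{\eta}_2^2=\sum_i\norm{\eta_i}_2^2$ decouples. On the support $i\in\hat\Ii$ the set $\partial\norm{\cdot}_2((\Gamma\hat x)_i)$ is the singleton $\{(\Gamma\hat x)_i/\norm{(\Gamma\hat x)_i}_2\}$, which forces $\etaspec_i$; off the support it is the unit ball, whose minimal-norm point is $0_b$. Hence $\etaspec_i=(\Gamma\hat x)_i/\norm{(\Gamma\hat x)_i}_2$ for $i\in\hat\Ii$ and $\etaspec_i=0_b$ for $i\in\hat\Ii^c$. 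I would emphasize that this vanishing off the support is precisely what forces exact support preservation: by Proposition~\ref{prop:zero_breg}, for $i\in\hat\Ii^c$ the condition $D_i^{\etaspec}(\Gamma x)=0$ reduces to $(\Gamma x)_i=0_b$, since the alternative $\etaspec_i=(\Gamma x)_i/\norm{(\Gamma x)_i}_2$ cannot hold when $\etaspec_i=0_b$. Thus each model below automatically enforces $\supp(\Gamma x)\subseteq\hat\Ii$.

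Still by Proposition~\ref{prop:zero_breg}, on the support $D_i^{\etaspec}(\Gamma x)=0$ is equivalent to $(\Gamma x)_i=\alpha_i(\Gamma\hat x)_i$ with $\alpha_i\geq0$, that is, to preservation of the direction. Item~(3) is then immediate: plugging $\etaspec_i$ into $D_i^{\etaspec}$ gives, for $i\in\hat\Ii$, $\lambda D_i^{\etaspec}(\Gamma x)=\lambda\norm{(\Gamma x)_i}_2\,(1-\cos((\Gamma x)_i,(\Gamma\hat x)_i))=\phi_{\mathrm{SD}}((\Gamma x)_i,(\Gamma\hat x)_i)$, while the hard constraint on $\hat\Ii^c$ yields the support inclusion. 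Hence \eqref{SD:bregman} is literally the SD instance of \eqref{general_refit}, so the two sets of minimizers coincide.

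For item~(1) I would use the reformulation \eqref{refit_german:HD3} and show that the feasible set $\{x:\hat p\in\Omega(x)\}$ equals $\{x:D_i^{\etaspec}(\Gamma x)=0\ \forall i\}$. The forward inclusion follows from \eqref{prop_HD} combined with the nonnegativity of each $D_i^{\etaspec}$, since a vanishing sum of nonnegative terms forces every term to vanish. For the converse I would check block-wise, using the explicit form of $\etaspec$, that $D_i^{\etaspec}(\Gamma x)=0$ for all $i$ implies $\etaspec\in\Delta(x)$ and therefore $\hat p=\Gamma^t\etaspec\in\Omega(x)$. By the translations above, this common feasible set is exactly $\{\supp(\Gamma x)\subseteq\hat\Ii\}\cap\{\cos((\Gamma x)_i,(\Gamma\hat x)_i)=1\ \forall i\in\hat\Ii\}$, which is the feasible set of HD, on which $\phi_{\mathrm{HD}}$ vanishes; as both objectives are the bare data term there, the minimizers agree.

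Item~(2) is the same argument after a sign relaxation: the $\pm$ in \eqref{refit_german:HO} should allow, block by block, either $\etaspec_i$ or $-\etaspec_i$ to be a subgradient at $(\Gamma x)_i$, which turns the direction condition $\alpha_i\geq0$ into the orientation condition $\alpha_i\in\RR$, that is $\abs{\cos((\Gamma x)_i,(\Gamma\hat x)_i)}=1$, recovering the HO feasible set. The main obstacle I anticipate lies exactly here: one must justify that the global constraint $\pm\hat p\in\Omega(x)$ separates into independent per-block sign choices rather than forcing a single global sign. This needs a block-separability statement for the infimal convolution of the Bregman distances $D^{\hat p}$ and $D^{-\hat p}$, in the spirit of Proposition~\ref{prop:Breg_separation}. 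The off-support bookkeeping around the extended support (where a priori $\norm{\etaspec_i}_2$ might equal $1$) is the other delicate point, but the minimal-norm selection neutralizes it by forcing $\etaspec_i=0_b$ there.
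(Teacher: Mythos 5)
Your proposal is correct and follows essentially the same route as the paper: compute the minimal-norm certificate $\etaspec$ explicitly (unit vectors $(\Gamma\hat x)_i/\norm{(\Gamma\hat x)_i}_2$ on $\hat\Ii$, $0_b$ off it), then use Proposition~\ref{prop:zero_breg} to translate each Bregman condition block-wise into the corresponding penalty condition (direction for HD, orientation for HO, the SD formula by direct substitution). If anything, you are more careful than the paper on the two points it glosses over, namely the converse inclusion for the HD feasible set (your Cauchy--Schwarz-type block-wise check that $D_i^{\etaspec}(\Gamma x)=0$ for all $i$ implies $\etaspec\in\Delta(x)$) and the justification that the ICB constraint yields independent per-block sign choices rather than a single global sign.
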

\begin{proof}[Proposition \ref{prop:breg_equivalence}]
  First notice that from \eqref{sub_l12}, $ \etaspec$ can be written explicitly as
  \begin{align}  \etaspec_i=
    \choice{\frac{(\Gamma \hat x)_i}{\norm{(\Gamma \hat x)_i}_2}
      \quad{ }
      & \text{if } i \in  \hat\Ii~,\\
      0_b & \text{otherwise,}
    } \quad
    \text{ for all $i \in [m]$}\enspace.
  \end{align}

  \noindent
  $\bullet$ [1.] This is a direct consequence of relation \eqref{prop_HD} and Proposition \ref{prop:zero_breg}.
  Using \eqref{sub_supp1}, we have for $ i\in \hat\Ii$ that $D_i^{\etaspec}((\Gamma x)_i)=0\Leftrightarrow  \cos((\Gamma x)_i, (\Gamma \hat x)_i)=1 \Leftrightarrow \cos((\Gamma x)_i, \etaspec_i)=1$. This is also valid for potential vanishing  components $(\Gamma x)_i$ with the considered convention  $\cos(0_b,\hat z)=0$.
  We thus recover the penalty function \rm{HD} in\eqref{pen:HD}. Next, for all $i \in  \hat\Ii^c$, we have $\etaspec_i = 0$ by assumption,
  hence according to eq.~\eqref{sub_supp2}:
  \begin{align}
    D_i^{\hat{\eta}}(\Gamma x)=0
    \Leftrightarrow (\Gamma x)_i = 0_b
    \Leftrightarrow \supp(\Gamma x) \subseteq  \hat\Ii .
  \end{align}
  It follows that $\tilde{x}^{\rm B(\etaspec)} = \tilde{x}^{\rm HD}$.  This case $ \hat\Ii^c$ is the same for all next points.
  \medskip

  \noindent
  $\bullet$ [2.] With the ${\rm{ICB}}$ model, we have $\pm \hat p \in \Omega(x)$. For $i\in  \hat\Ii$, it gives with  \eqref{sub_supp1} that
  $D_i^{\pm \etaspec}(\Gamma x)=0\Leftrightarrow \norm{(\Gamma  x)_i)}_2=\pm \langle (\Gamma  x)_i),\hat\eta_i\rangle$. This is equivalent to $\abs{\cos((\Gamma x)_i, (\Gamma \hat x)_i)}=1$ that corresponds to the penalty function \rm{HO} in \eqref{pen:HO}.
  \medskip

  \noindent
  $\bullet$ [3.]
  For all $i \in  \hat\Ii$, we have
  \begin{align}
    D_i^{\hat{\etaspec}}((\Gamma x)_i)
    &=
    \norm{(\Gamma x)_i}_2 - \dotp{\frac{(\Gamma \hat x)_i}{\norm{(\Gamma \hat x)_i}_2}}{(\Gamma x)_i}
    \\
    &=
    \norm{(\Gamma x)_i}_2(1 - \cos((\Gamma x)_i, (\Gamma \hat x)_i))~,
  \end{align}
  that gives  the penalty function \rm{SD} in \eqref{pen:SD}.
  We then get that $\tilde{x}^{\rm BBB(\etaspec)} = \tilde{x}^{\rm SD}$.
\end{proof}
Additionaly, we can show that CLEAR corresponds to the QO refitting block
penalty also introduced in Section \ref{sec:refitting}.
\begin{prop}\label{prop:equiv_clear}
  For any $y$, we have
  $\tilde{x}^{\rm CLEAR} = \tilde{x}^{\rm QO}$
  where the equality has to be understood as an equality
  of the sets of minimizers.
\end{prop}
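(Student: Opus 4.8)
The plan is to recognise that both $\tilde x^{\mathrm{CLEAR}}$ of \eqref{refit:QO} and $\tilde x^{\mathrm{QO}}$ (the instance of the general refitting \eqref{general_refit} obtained by taking $\phi = \phi_{\mathrm{QO}}$) are variational problems sharing \emph{exactly} the same data-fidelity term $\tfrac12\norm{\Phi x - y}_2^2$ and the same feasible set $\{x : \supp(\Gamma x)\subseteq\hat\Ii\}$. Hence their sets of minimisers coincide as soon as the two penalties agree on that feasible set, and since both penalties are sums over the blocks $i\in\hat\Ii$, it suffices to establish the per-block identity
\begin{equation}
\tfrac{\lambda}{2\norm{\hat z}_2}\snorm{z - P_{\hat z}(z)}^2 = \phi_{\mathrm{QO}}(z,\hat z), \qquad z = (\Gamma x)_i,\ \hat z = (\Gamma \hat x)_i .
\end{equation}

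First I would exploit that $P_{\hat z}(z)$ is, by Definition~\ref{def:cos_and_proj}, the \emph{orthogonal} projection of $z$ onto $\Span(\hat z)$, so that $z - P_{\hat z}(z)\perp P_{\hat z}(z)$ and the Pythagorean identity gives $\snorm{z - P_{\hat z}(z)}^2 = \norm{z}_2^2 - \norm{P_{\hat z}(z)}_2^2$. Next I would read off $\norm{P_{\hat z}(z)}_2$ from the closed form $P_{\hat z}(z) = \dotp{z}{\tfrac{\hat z}{\norm{\hat z}_2}}\tfrac{\hat z}{\norm{\hat z}_2}$, which yields $\norm{P_{\hat z}(z)}_2^2 = \dotp{z}{\hat z}^2/\norm{\hat z}_2^2 = \norm{z}_2^2\cos^2(z,\hat z)$, the last equality being merely the definition of $\cos(z,\hat z)$. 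Substituting gives $\snorm{z - P_{\hat z}(z)}^2 = \norm{z}_2^2\,(1-\cos^2(z,\hat z))$, and multiplying by $\lambda/(2\norm{\hat z}_2)$ reproduces $\phi_{\mathrm{QO}}(z,\hat z)$ of \eqref{pen:QO} verbatim.

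Summing this identity over $i\in\hat\Ii$ shows that the CLEAR penalty and the QO penalty agree at every $x$ with $\supp(\Gamma x)\subseteq\hat\Ii$; combined with the identical fidelity terms and feasible sets, the two objectives are equal as functions on the feasible domain, whence their argmin sets coincide, which is precisely the claimed equality of minimiser sets. There is essentially no genuine obstacle here: the argument reduces to an elementary Pythagorean computation. The only points deserving a line of care are confirming that $P_{\hat z}$ is orthogonal, so that Pythagoras applies, and noting that for $i\in\hat\Ii$ one has $\hat z=(\Gamma\hat x)_i\ne 0_b$, so the factor $1/\norm{\hat z}_2$ is well defined; the convention $\cos(0_b,\hat z)=1$ from Definition~\ref{def:cos_and_proj} is harmless, since the possibly degenerate case $z=0_b$ makes both sides vanish through the prefactor $\norm{z}_2^2$.
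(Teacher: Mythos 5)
Your proof is correct and takes essentially the same route as the paper's: both reduce the claim to the per-block identity $\snorm{z - P_{\hat z}(z)}^2 = \norm{z}_2^2\,(1-\cos^2(z,\hat z))$ for $i\in\hat\Ii$, the feasible sets and fidelity terms of \eqref{refit:QO} and \eqref{general_refit} being identical by construction. The only difference is that you spell out, via the Pythagorean identity, the computation that the paper states as a one-line observation.
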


\begin{proof}
The case $\hat \Ii^c$ is treated in Proposition \ref{prop:breg_equivalence}. For $i\in \hat\Ii$ with the \rm{CLEAR} model, we just have to observe that
$
\snorm{ (\Gamma  x)_i-P_{(\Gamma \hat x)_i}((\Gamma x)_i)}^2=
  \norm{(\Gamma x)_i}^2_2(1- \cos^2((\Gamma x)_i,(\Gamma \hat x)_i)),$
 which corresponds to the penalty function \rm{QO} in \eqref{pen:QO}.\\
\end{proof}

\section{Experiments and Results}\label{sec:results}

\begin{figure}[!t]
  \centering
  \subfigure[Original]{\includegraphics[width=0.328\linewidth]{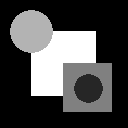}}\hfill%
  \subfigure[Noisy (14.19)]{\includegraphics[width=0.328\linewidth]{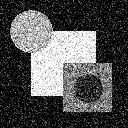}}\hfill%
  \subfigure[TViso (15.68)]{\includegraphics[width=0.328\linewidth]{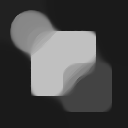}}\\
  \subfigure[HO (20.30)]{\includegraphics[width=0.328\linewidth]{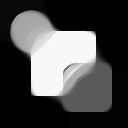}}\hfill%
  \subfigure[HD (20.36)]{\includegraphics[width=0.328\linewidth]{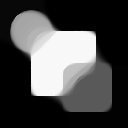}}\hfill%
  \subfigure[QO (22.60)]{\includegraphics[width=0.328\linewidth]{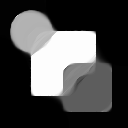}}\\
  \subfigure[QD (22.47)]{\includegraphics[width=0.328\linewidth]{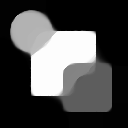}}\hfill%
  \subfigure[SO (21.43)]{\includegraphics[width=0.328\linewidth]{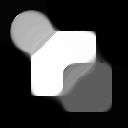}}\hfill%
  \subfigure[SD (23.16)]{\includegraphics[width=0.328\linewidth]{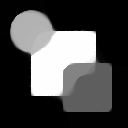}}\\
  \setcounter{subfigure}{0}%
  \caption{\label{ex:TViso-cc} Comparison of alternative refitting approaches with the proposed SD model on a synthetic image.}
\end{figure}

\begin{figure}[!t]
  \centering
  \subfigure[Original]{\includegraphics[width=0.32\linewidth]{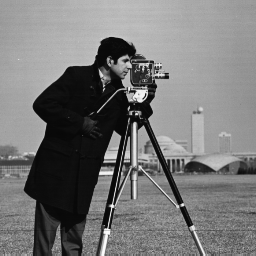}}\hfill%
  \subfigure[Noisy (22.14)]{\includegraphics[width=0.32\linewidth]{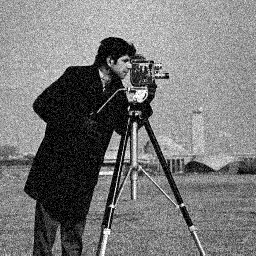}}\hfill%
  \subfigure[TViso (26.00)]{\includegraphics[width=0.32\linewidth]{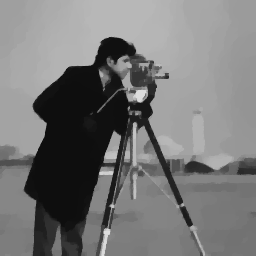}}\\
  \subfigure[HO (26.35)]{\includegraphics[width=0.32\linewidth]{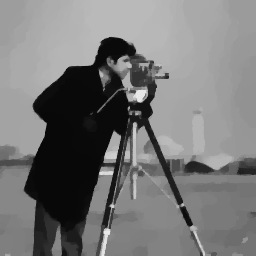}}\hfill%
  \subfigure[HD (26.34)]{\includegraphics[width=0.32\linewidth]{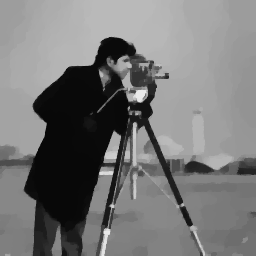}}\hfill%
  \subfigure[QO (28.21)]{\includegraphics[width=0.32\linewidth]{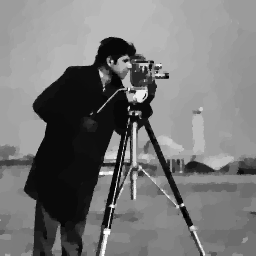}}\\
  \subfigure[QD (28.21)]{\includegraphics[width=0.32\linewidth]{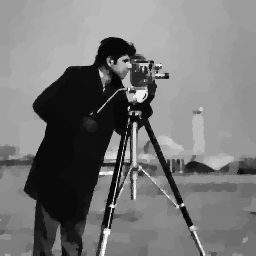}}\hfill%
  \subfigure[SO (28.22)]{\includegraphics[width=0.32\linewidth]{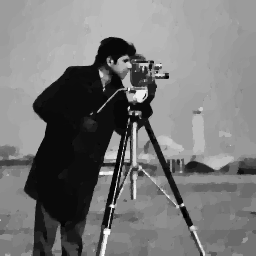}}\hfill%
  \subfigure[SD (28.40)]{\includegraphics[width=0.32\linewidth]{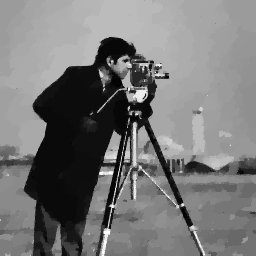}}\\
  \setcounter{subfigure}{0}%
  \caption{\label{ex:TViso-cameraman} Comparison of alternative refitting approaches with the proposed SD model on the Cameraman image.}
\end{figure}

\subsection{Toy experiments with TViso}

We first consider TViso regularization of grayscale images
on denoising problems of the form $y = x + w$ where $w$ is an additive white Gaussian
noise with standard deviation $\sigma$. We start with a simple toy example
of a $128 \times 128$ image%
\footnote{In this paper, we always consider images whose values are in the range $[0, 255]$.}
composed of elementary geometric shapes, and we chose $\sigma = 50$.
The corresponding image and its noisy version are given in Figs.~
\ref{ex:TViso-cc}.(a) and (b) respectively.
To highlight the different behaviors between the six refitting block penalties,
we chose to set the regularisation parameter of TViso to a large value
$\lambda = 750$ leading to a strong bias in the solution. Fig.~\ref{ex:TViso-cc}.(c) depicts
this solution in which not only some structures are lost (the
darkest disk), but a large loss of contrast can be observed (the white large
square became gray). Unlike boosting, the purpose
of refitting is not to recover the lost structures, but only to recover
the correct amplitudes of the reconstructed objects without changing their
geometrical aspect. In a second scenario, we consider the standard
{\it cameraman} image of size $256 \times 256$, and we set
$\sigma = 20$ and $\lambda = 36$.
The corresponding images are given in Figs.~\ref{ex:TViso-cameraman}.(a-c).

We applied the iterative primal-dual algorithm with our joint-refitting
(Algorithm \eqref{algo_final}) for $4,000$ iterations,
with $\tau = 1/\norm{\nabla}_2$, $\kappa = 1/\norm{\nabla}_2$ and $\theta = 1$,
and where $\norm{\nabla}_2 = 2 \sqrt{2}$.
Results of refitting with HO, HD, QO, QD, SO and SD are given for the two
scenarios on Figs.~\ref{ex:TViso-cc}.(d-i) and Figs.~\ref{ex:TViso-cameraman}.(d-i),
respectively.
Numerically, the  difference when changing the block penalty  comes from the computation of the proximal operator, presented in Table \ref{tab:prox_phi_dual},  that is only used in the fifth (resp. last) step of Algorithm \ref{algo_final} (resp. Algorithm \ref{eq:algo}). As a consequence, the overall computational cost is stable for all considered block penalties.

  \newcommand{\sizex}{2.2cm}
  \newcommand{\sidecapX}[1]{ {\begin{sideways}\parbox{\sizex}{\centering #1}\end{sideways}} }
 \begin{figure*}[ht!]
 \begin{tabular}{p{.2cm}ccccc}
 && \includegraphics[width=\sizex]{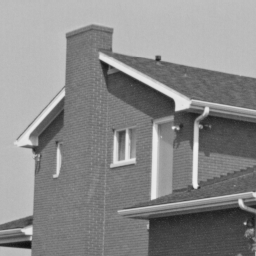}&  \includegraphics[width=\sizex]{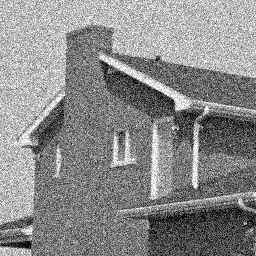} &  \includegraphics[width=\sizex]{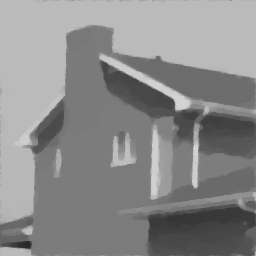}  \\
 &&  Original&Noisy (14.19) &TViso (27.41)\\
 \hline
&\multicolumn{5}{c}{Refitting with SD model}\\
&PSNR/iterations&Sequential&Sequential&Joint&Joint\\
&&(500 it.)&(1000 it.)&(500 it.)&(1000 it.)\\
\sidecapX{$\beta=10^{-4}$}&\includegraphics[height=\sizex,]{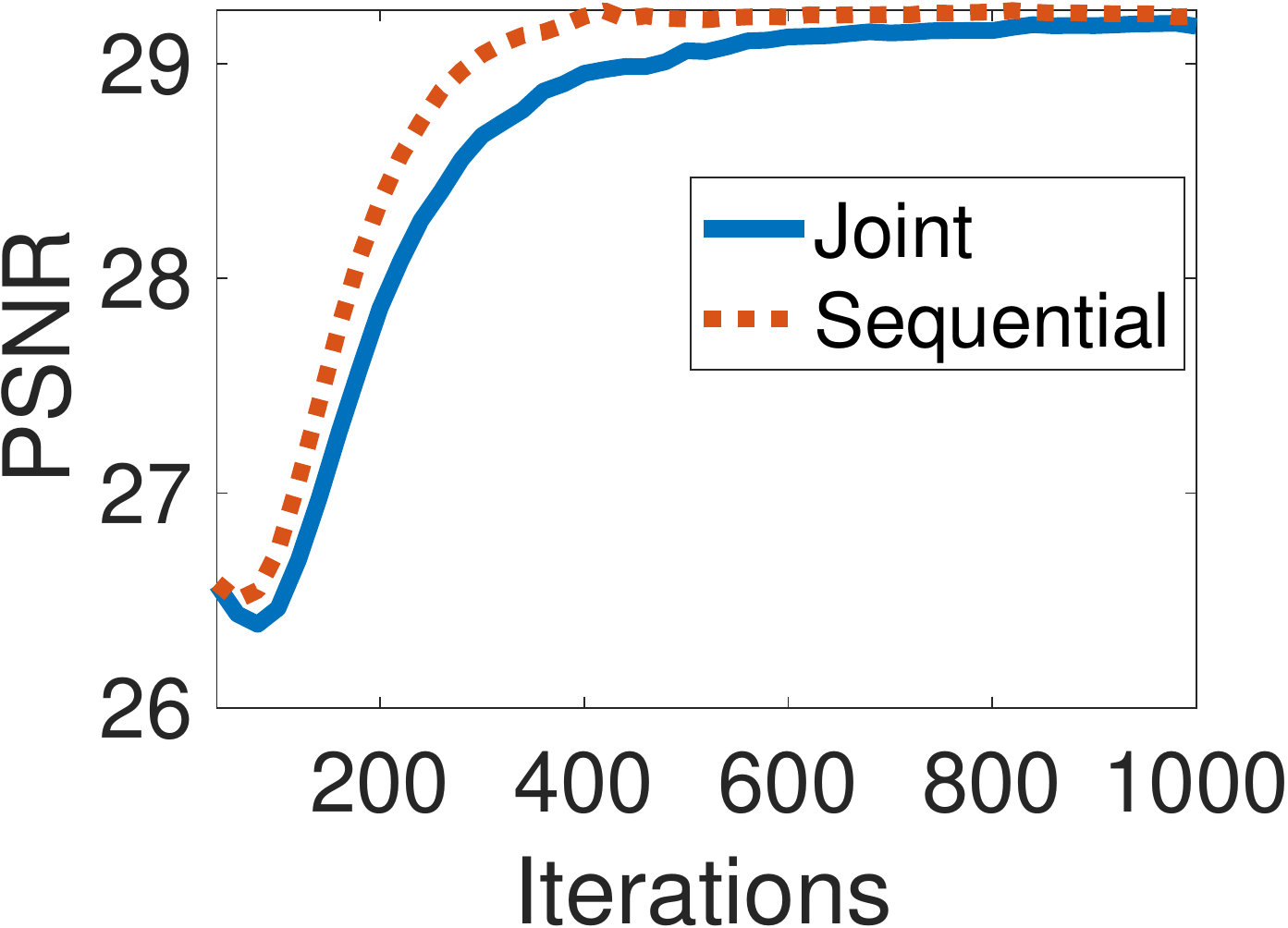}&\includegraphics[width=\sizex,viewport=40 30 180 170,clip]{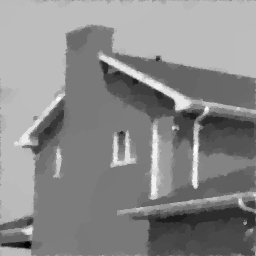}&\includegraphics[width=\sizex,viewport=40 30 180 170,clip]{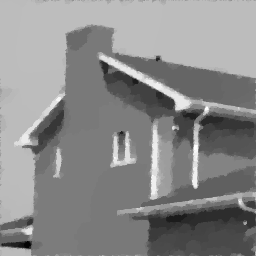}&\includegraphics[width=\sizex,viewport=40 30 180 170,clip]{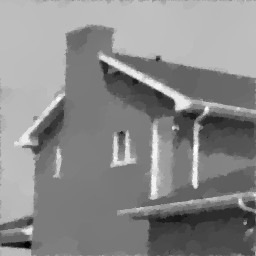}&\includegraphics[width=\sizex,viewport=40 30 180 170,clip]{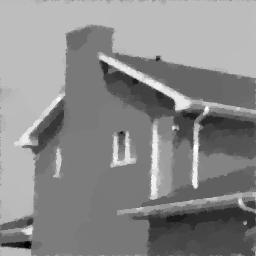}\\
&&(29.21)&(29.22)& (29.06)& (29.17)\\
\sidecapX{$\beta=10^{-8}$}&\includegraphics[height=\sizex]{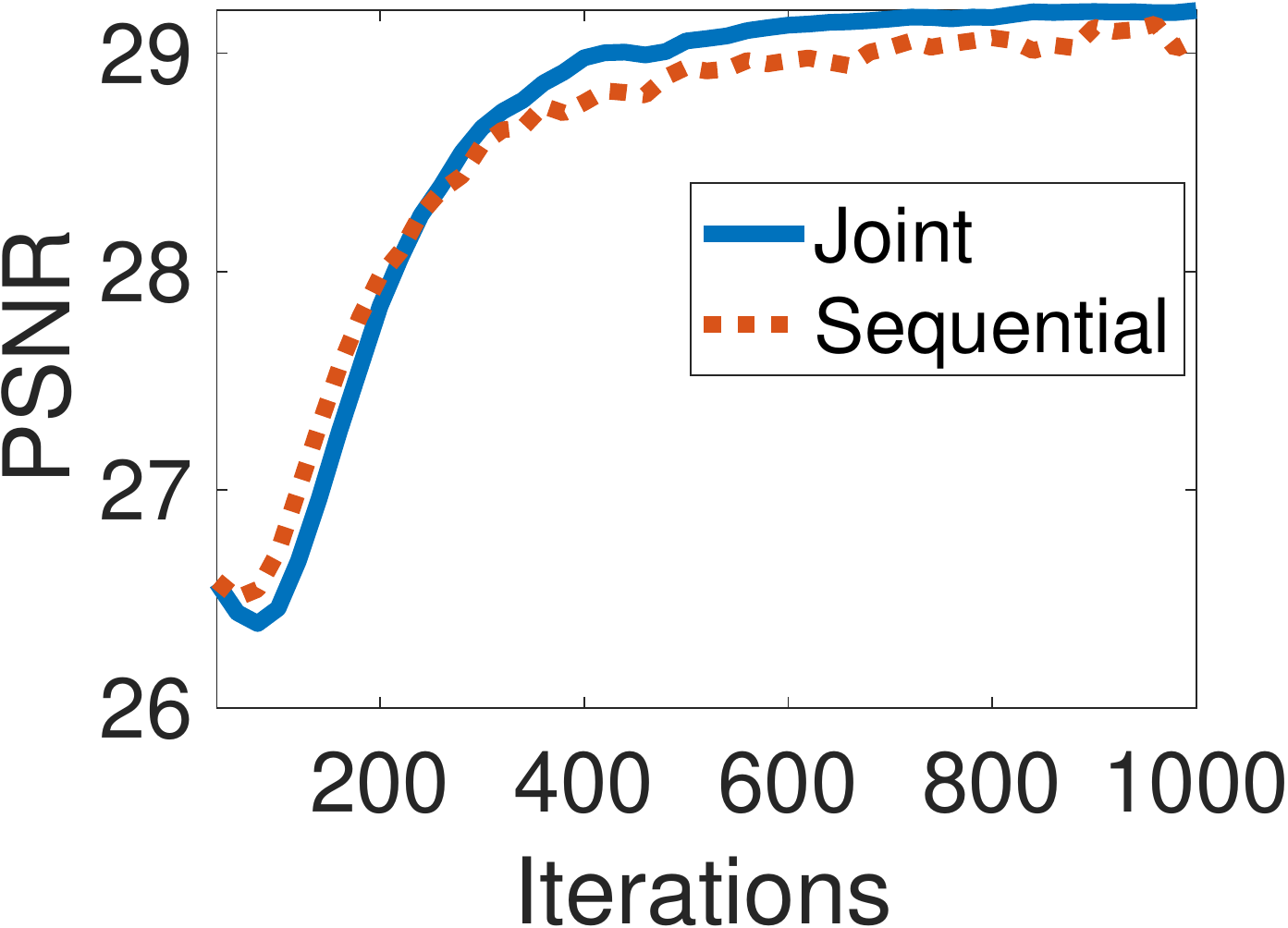}&\includegraphics[width=\sizex,viewport=40 30 180 170,clip]{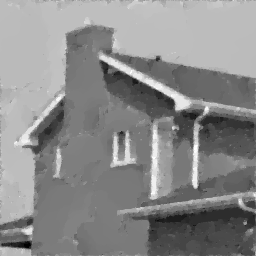}&\includegraphics[width=\sizex,viewport=40 30 180 170,clip]{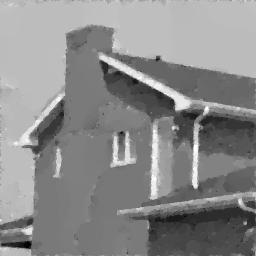}&\includegraphics[width=\sizex,viewport=40 30 180 170,clip]{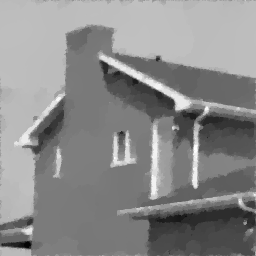}&\includegraphics[width=\sizex,viewport=40 30 180 170,clip]{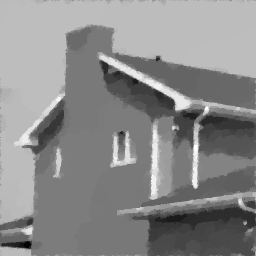}\\
&&(28.94)&(29.07)& (29.06)& (29.20)\\
 \end{tabular}
 \caption{\label{fig:post_joint} Influence of the $\beta$ parameter (support detection) on the sequential and joint approaches. The joint refitting algorithm is more robust  to the choice of the value $\beta$. }

 \end{figure*}

As a quantitave measure of refitting between an estimate $\hat{x}$ and
the underlying image $x$, we consider the Peak Signal to Noise Ratio (PSNR) defined in this case as
$
  {\rm PSNR}
  =
  -10 \log_{10} \tfrac1n \norm{x - \hat{x}}_2^2$.
The higher the PSNR, the better the refitting is supposed to be.
In Fig.~\ref{ex:TViso-cc}, we first observe that block penalties that are
only based on orientation (HO, QO and SO) lead to refitted solution with
contrast inversions compared to the original solution of TViso
(see the banana dark shapes at the interface between the two squares).
In that scenario, HD, QD and SD provides statisfying visual quality,
the lattest achieving highest PSNR. In the context of Fig.~\ref{ex:TViso-cameraman},
the image being more complex, the TViso solution presents more level lines
than in the previous scenario. The block penalties HO and HD must preserve exactly
these level lines. The refitted problem becomes too constrained and this lack of flexibility
prevents HO and HD to deviate much from TViso. Their refitted solution
remains significantly biased. Comparing now the quadratic penalties (QO and QD) against the soft
ones (SO and SD) reveals that quadratic penalties does not allow to
reenhance as much the contrast of some objects (see, \eg the second tower on the right of the image).
Again SD achieved the highest PSNR in that 
scenario.

\subsection{Comparison between sequential and joint refitting}
The joint refitting algorithm is more stable with respect to the threshold $\beta$ considered for detecting the support:
\begin{align}\label{eq:support}
    \mathcal{I}^k=\{\|\Gamma\hat x^k\|_i>\beta\}\enspace.
\end{align}
We illustrate this property in Figure \ref{fig:post_joint}, where the Douglas-Rachford solver with the SD block penalty has been tested  both sequentially and jointly,  with the same iteration budget and the same parameter $\beta$ in \eqref{algo_final} and \eqref{eq:support}.  In this experiment, we consider a $256\times 256$ image with   $\sigma = 30$ and $\lambda=50$. For the parameters of the algorithm \ref{eq:algo} ,
we set  $\alpha=0.5$ and $\tau = 0.01$.

  When considering a large threshold $\beta=10^{-4}$ to detect the support of the solution, both  approaches leads to visually fine results. The support of the TViso solution is nevertheless underestimated. On the other hand, with  a low threshold $\beta=10^{-8}$, the detection of the support is not stable with the sequential approach. It leads to noisy refitted solutions, whereas the joint approach still gives  clean refittings.  This stable behaviour is also underlined by looking at the PSNR values with respect to the number of iterations: the sequential approach (dotted curve)  presents fluctuating PSNR values with a low (and thus accurate)  threshold.

\subsection{Experiments with color TViso}

We now consider TViso regularization of degraded color images with three channels
Red, Green, Blue (RGB).
We defined blocks obtained by applying
$\Gamma=(\nabla^1_R,\nabla^2_R,\nabla^1_G,\nabla^2_G,\nabla^1_B,\nabla^2_B)$,
where $m=n$, $b=6$, and $\nabla_C^d$ denotes the forward discrete
gradient in the direction $d \in \{ 1, 2 \}$
for the color channel $C \in \{ R, G, B \}$.

We first focused on a denoising problem $y = x + w$ where
$x$ is a color image
and $w$ is an additive white Gaussian
noise with standard deviation $\sigma=20$.
We next focused on a deblurring problem $y = \Phi x + w$ where
$x$ is a color image, $\Phi$ is a convolution
simulating a directional blur, and  $w$ is an additive white Gaussian
noise with standard deviation $\sigma=2$.
In both cases, we chose $\lambda=4.3 \sigma$.
We apply the iterative primal-dual algorithm with our joint-refitting
(Algorithm \eqref{algo_final}) for $1,000$ iterations,
with $\theta = 1$, $\tau = 1/\norm{\Gamma}_2$ and $\kappa = 1/\norm{\Gamma}_2$, where, as for the grayscale case, $\norm{\Gamma}_2 = 2 \sqrt{2}$.
Note that in order to use the primal-dual algorithm, we have to implement
$\Gamma^t$, defined for $z = (z_R, z_G, z_B) \in \RR^{n \times 6}$ where
$z_R$, $z_G$, $z_B$ are 2d vector fields, as
given by
\begin{align}
  \Gamma^t z=
  \begin{pmatrix}
    -\diverg z_R && -\diverg z_G && -\diverg z_B\\
  \end{pmatrix} ~.
\end{align}

Results are provided on Fig.~\ref{fig:denoising} and \ref{fig:deblurring}.
Comparisons of refitting with our proposed SD block penalty,
HO, HD, QO, QD and SO are provided.
Using our proposed SD block penalty offers the best refitting
performances in terms of both visual and quantitative measures.
The loss of contrast of TViso is well-corrected,
amplitudes are enhanced while smoothness and sharpness of TViso is
preserved. Meanwhile, the approach does not  create artifacts,
invert contrasts, or reintroduce information
that were not recovered by TViso.

\begin{figure*}[!t]
  \subfigure[Original]{%
    \begin{minipage}{.313\linewidth}
    \includegraphics[width=1\linewidth,viewport=0 90 322 462, clip]{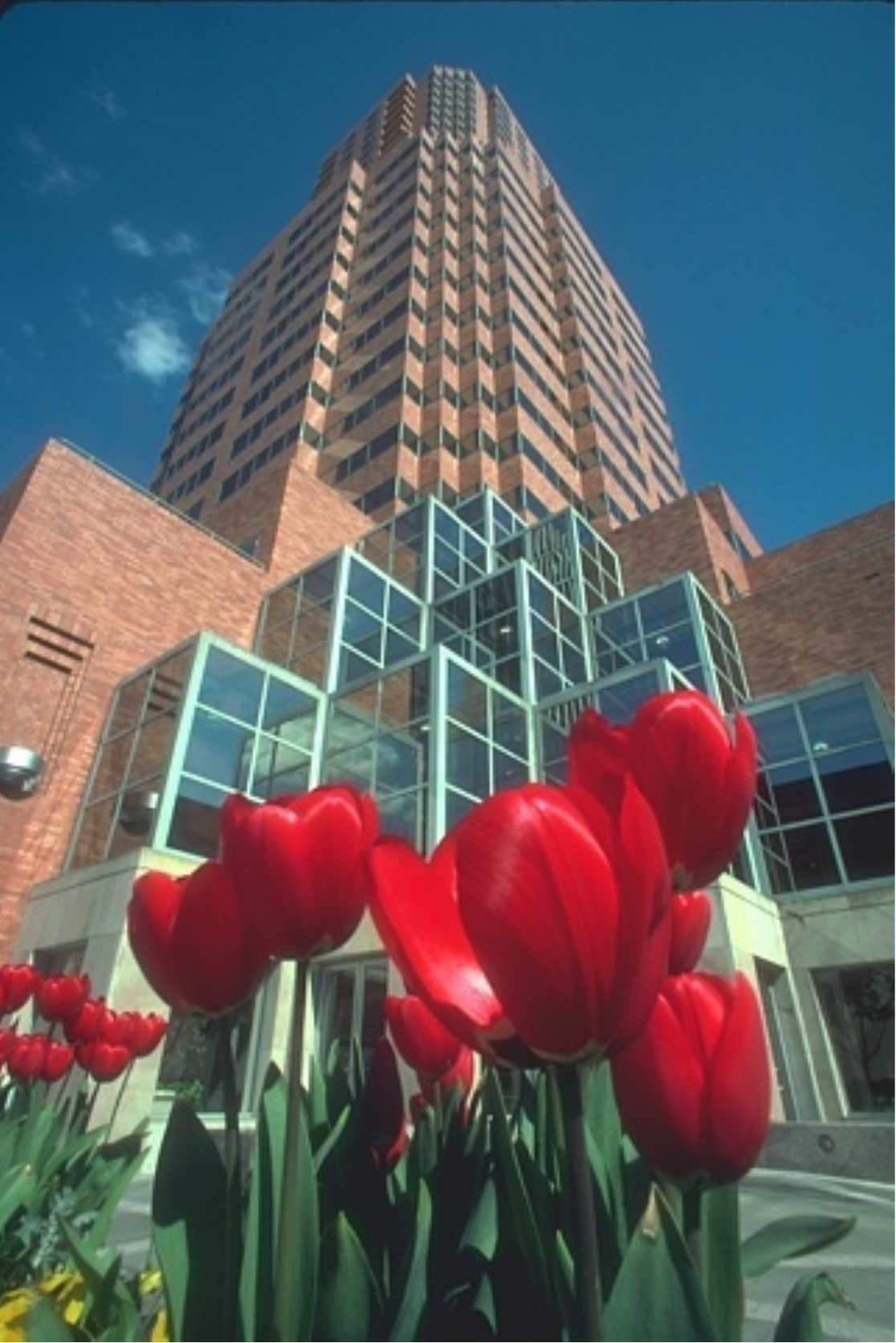}\\[-.8em]
    \end{minipage}%
  }\hfill
  \subfigure[Noisy {\scriptsize (22.10)}]{%
    \begin{minipage}{.313\linewidth}
    \includegraphics[width=1\linewidth,viewport=0 90 322 462, clip]{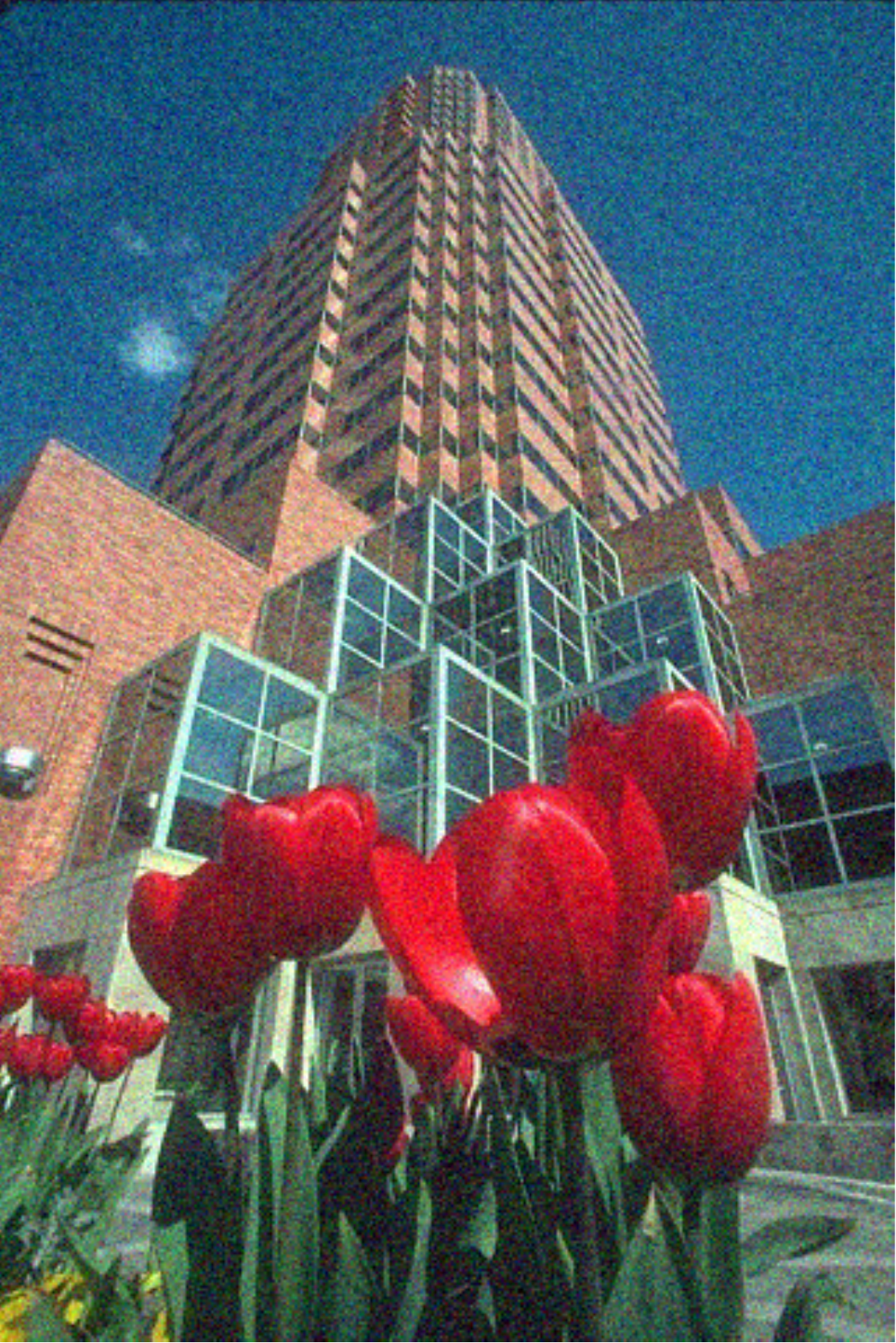}\\[-.8em]
    \end{minipage}%
  }\hfill
  \subfigure[TViso {\scriptsize (23.28)}]{%
    \begin{minipage}{.313\linewidth}
      \includegraphics[width=1\linewidth,viewport=0 90 322 462, clip]{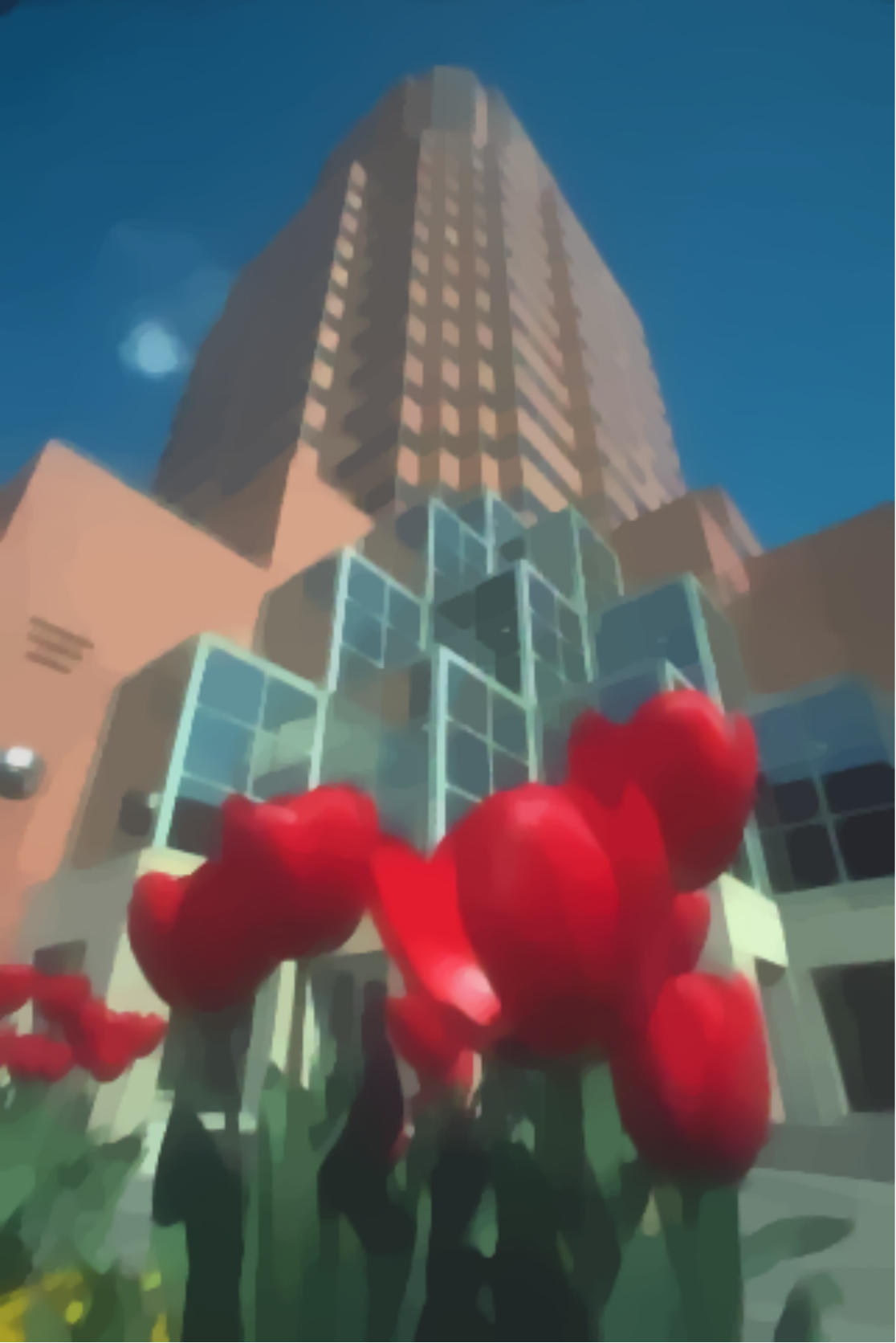}\\[-.8em]
    \end{minipage}%
  }\hfill
  \subfigure[HO {\scriptsize (23.75)}]{%
    \begin{minipage}{.313\linewidth}
      \includegraphics[width=1\linewidth,viewport=0 90 322 462, clip]{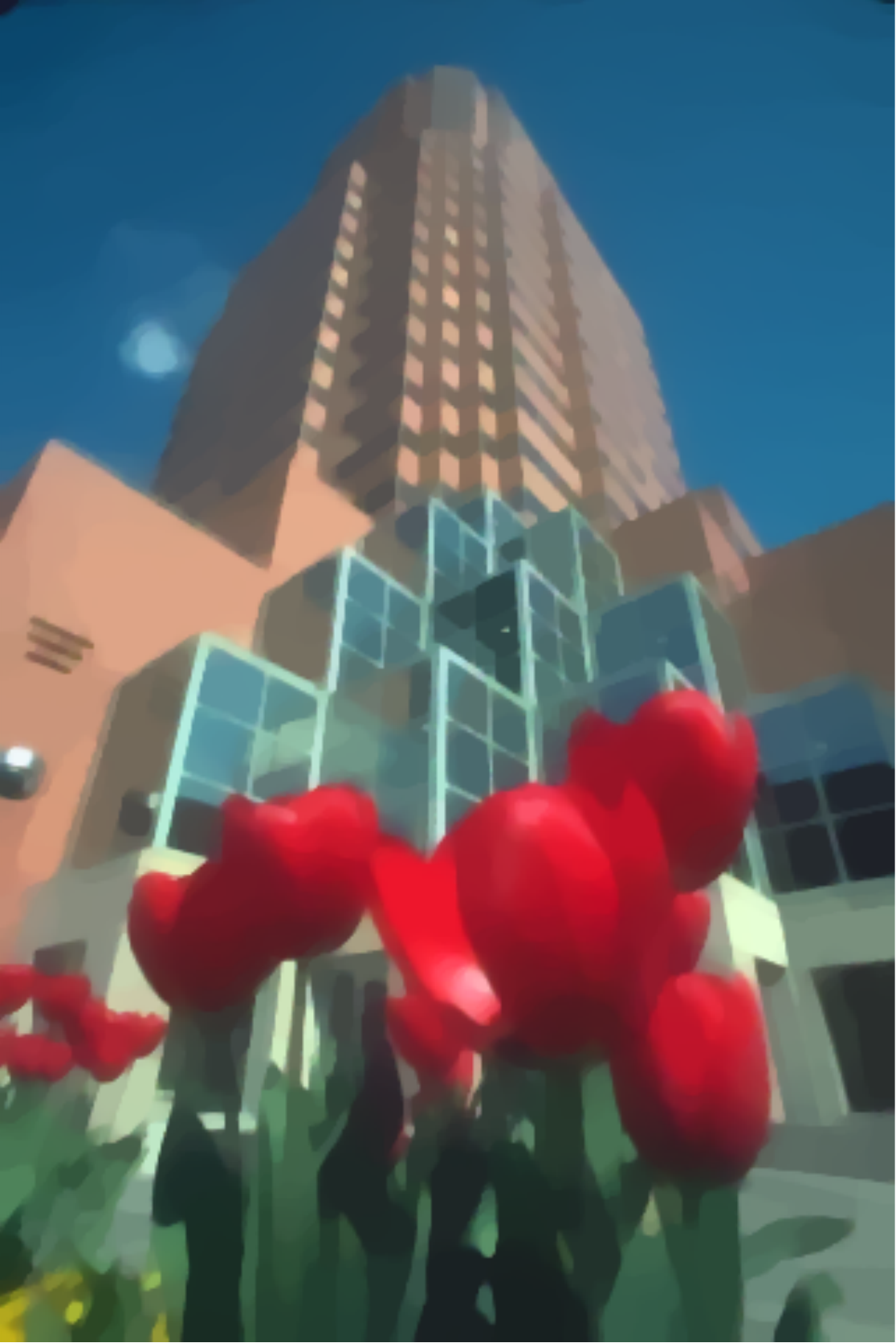}\\[-.8em]
    \end{minipage}%
  }\hfill
  \subfigure[HD {\scriptsize (23.75)}]{%
    \begin{minipage}{.313\linewidth}
      \includegraphics[width=1\linewidth,viewport=0 90 322 462, clip]{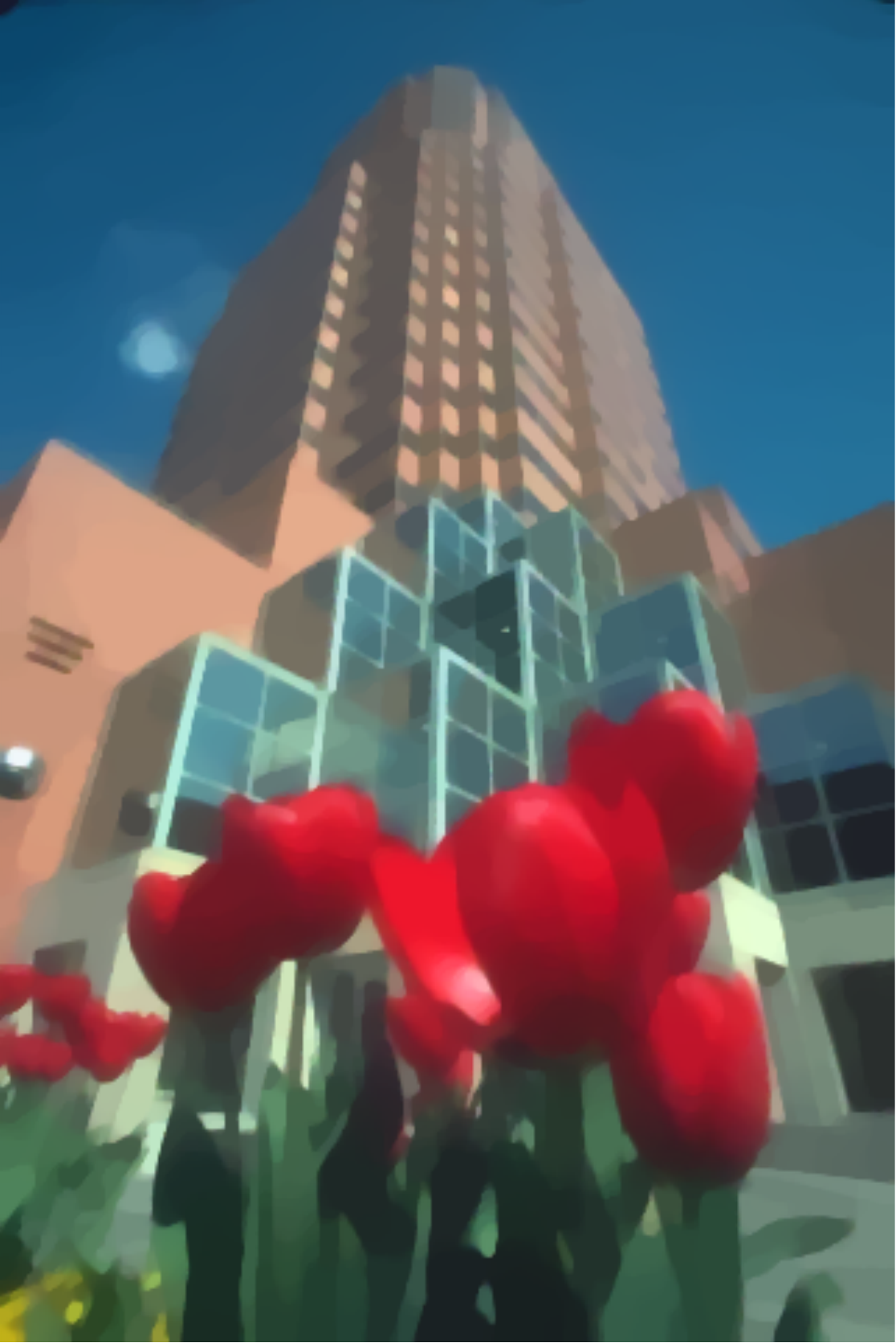}\\[-.8em]
    \end{minipage}%
  }\hfill
  \subfigure[QO {\scriptsize (26.12)}]{%
    \begin{minipage}{.313\linewidth}
      \includegraphics[width=1\linewidth,viewport=0 90 322 462, clip]{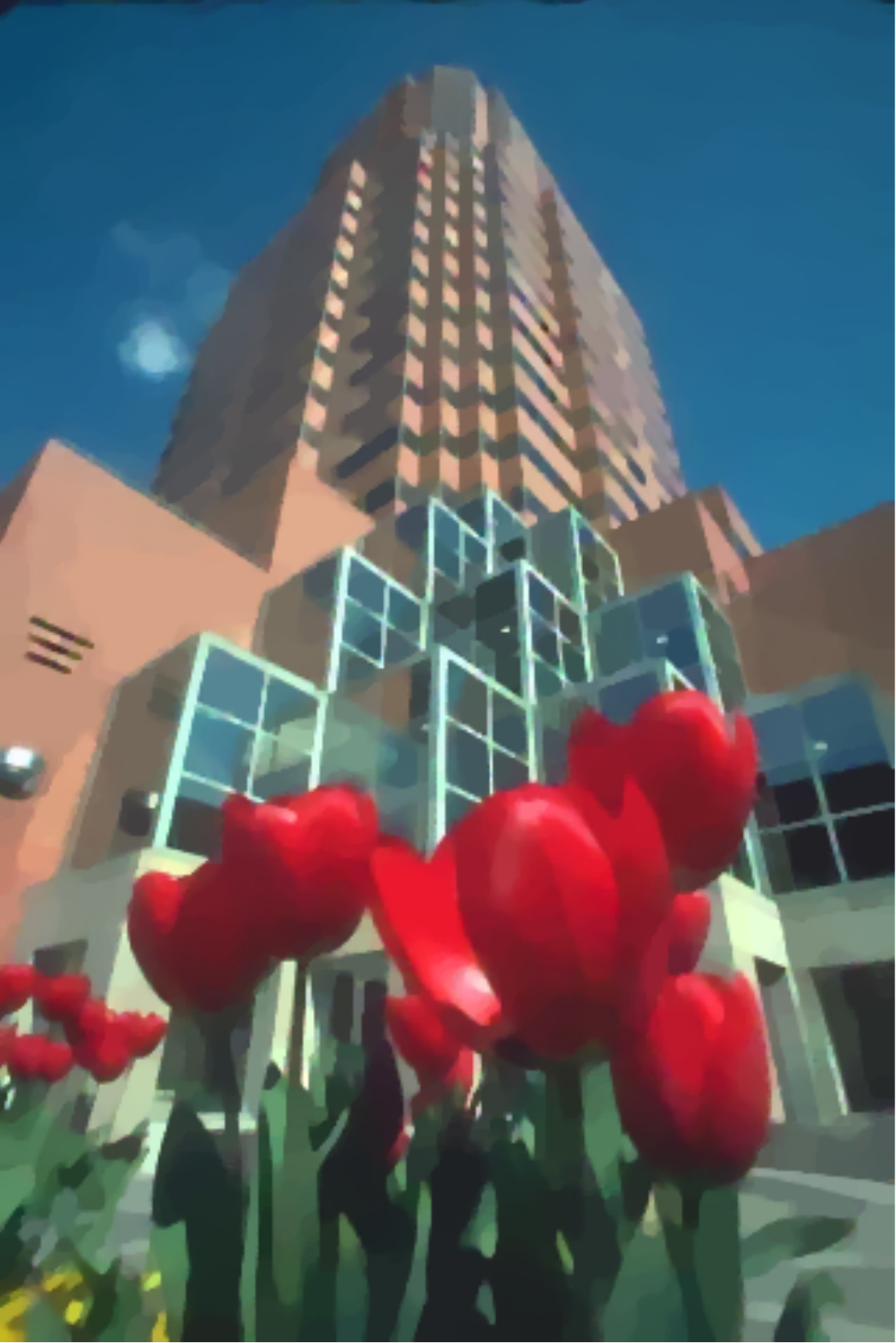}\\[-.8em]
    \end{minipage}%
  }\hfill
  \subfigure[QD {\scriptsize (26.10)}]{%
    \begin{minipage}{.313\linewidth}
      \includegraphics[width=1\linewidth,viewport=0 90 322 462, clip]{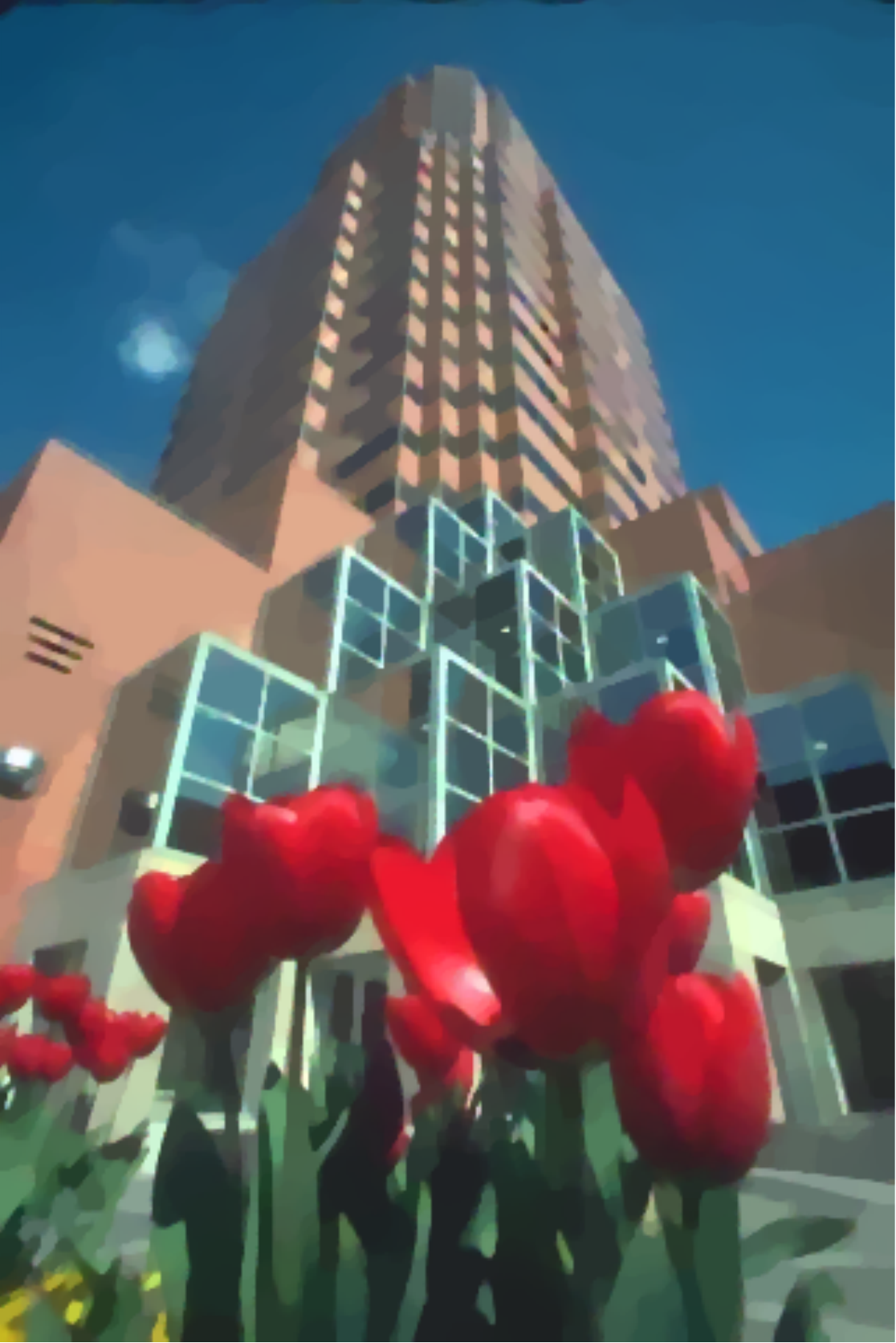}\\[-.8em]
    \end{minipage}%
  }\hfill
  \subfigure[SO {\scriptsize (26.15)}]{%
    \begin{minipage}{.313\linewidth}
      \includegraphics[width=1\linewidth,viewport=0 90 322 462, clip]{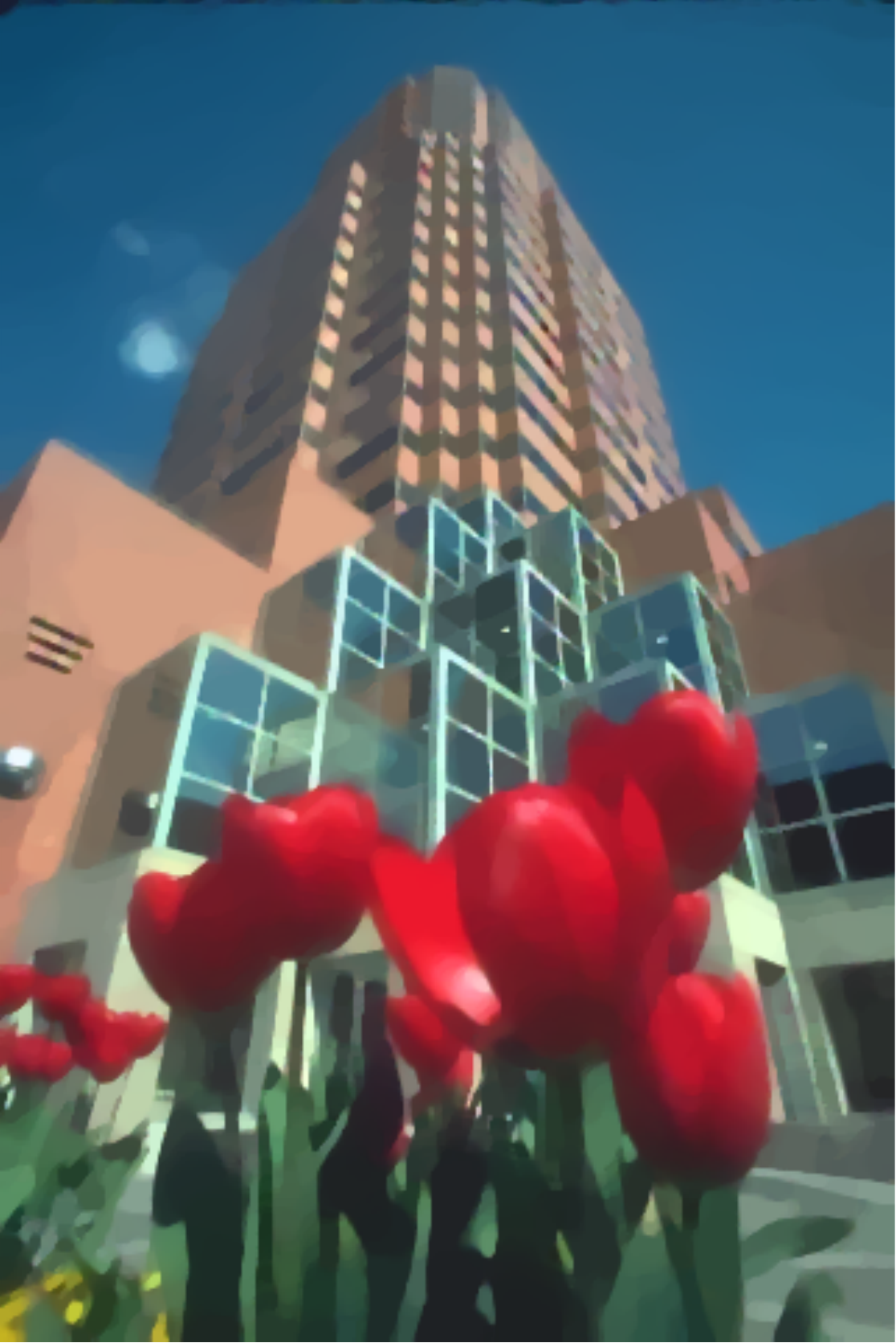}\\[-.8em]
    \end{minipage}%
  }\hfill
  \subfigure[SD~{\scriptsize (27.68)}]{%
    \begin{minipage}{.313\linewidth}
      \includegraphics[width=1\linewidth,viewport=0 90 322 462, clip]{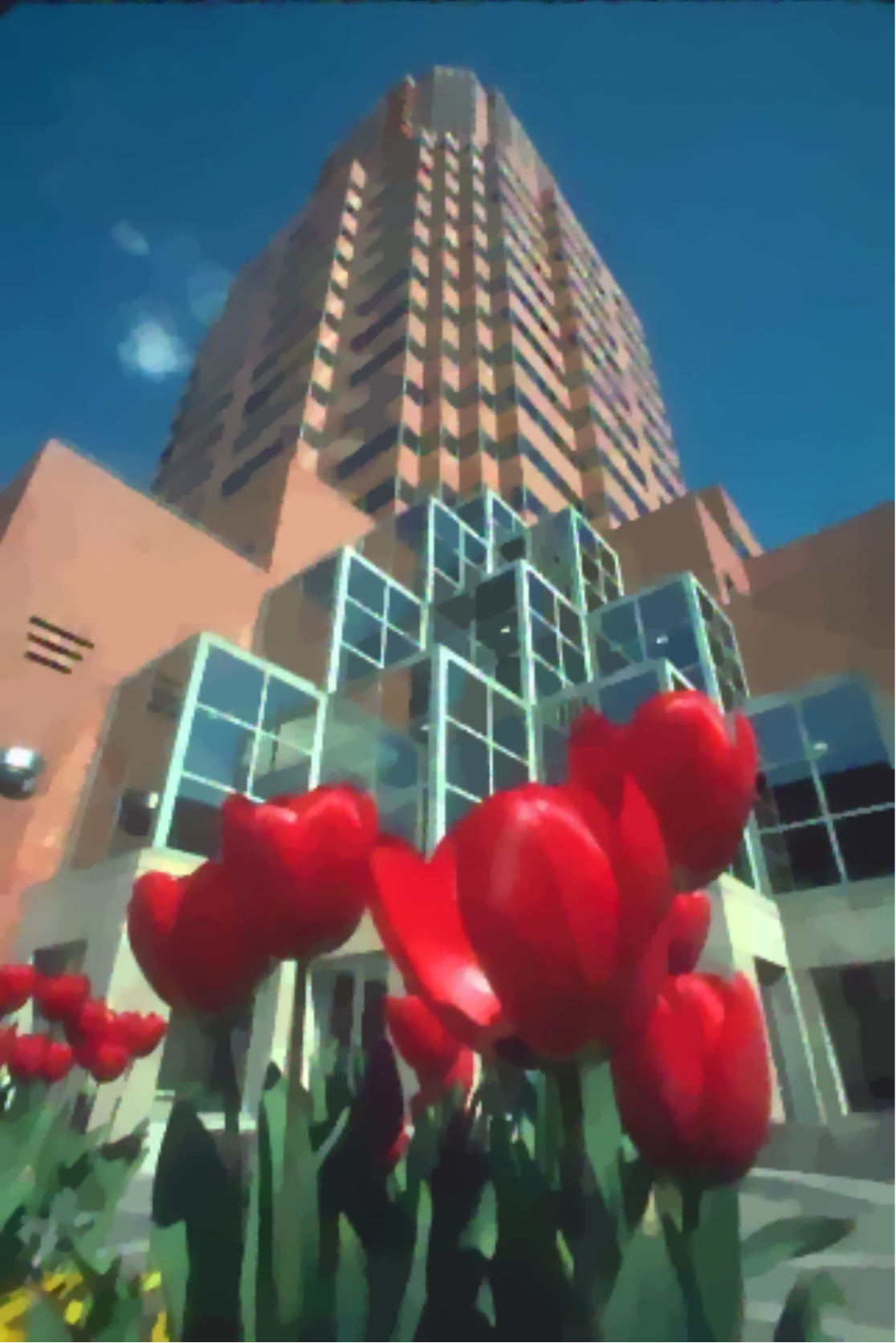}\\[-.8em]
    \end{minipage}%
  }\\

  \caption{
    (a) A color image. (b) A corrupted version by Gaussian noise with standard deviation $\sigma=20$.
    (b) Solution of TViso.
    Debiased solution with (d) HO, (e) HD, (f) QO, (g) QD, (h) SO and (i) SD.
    The Peak Signal to Noise Ratio (PSNR) is indicated in brackets bellow each image.}
  \label{fig:denoising}
\end{figure*}

\begin{figure*}[!t]
  \subfigure[Original]{%
    \begin{minipage}{.322\linewidth}
    \includegraphics[width=1\linewidth]{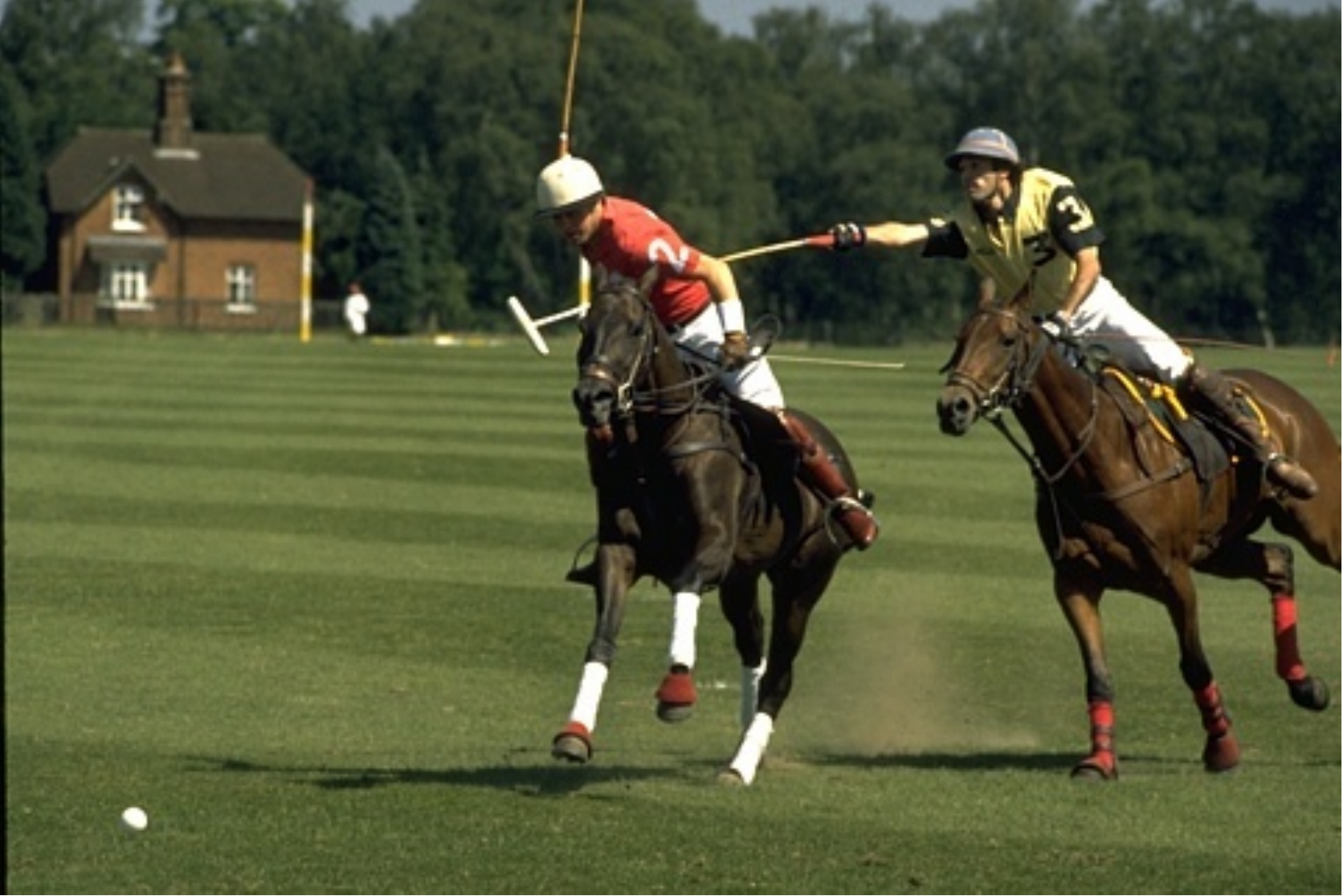}\\[.05em]
      \includegraphics[width=.495\linewidth,viewport=20 200 135 255,clip]{jockey_blur_ori}\hfill%
      \includegraphics[width=.495\linewidth,viewport=270 160 385 215,clip]{jockey_blur_ori}\\[-.7em]
    \end{minipage}%
  }\hfill
  \subfigure[Blurry {\scriptsize (23.14)}]{%
    \begin{minipage}{.322\linewidth}
    \includegraphics[width=1\linewidth]{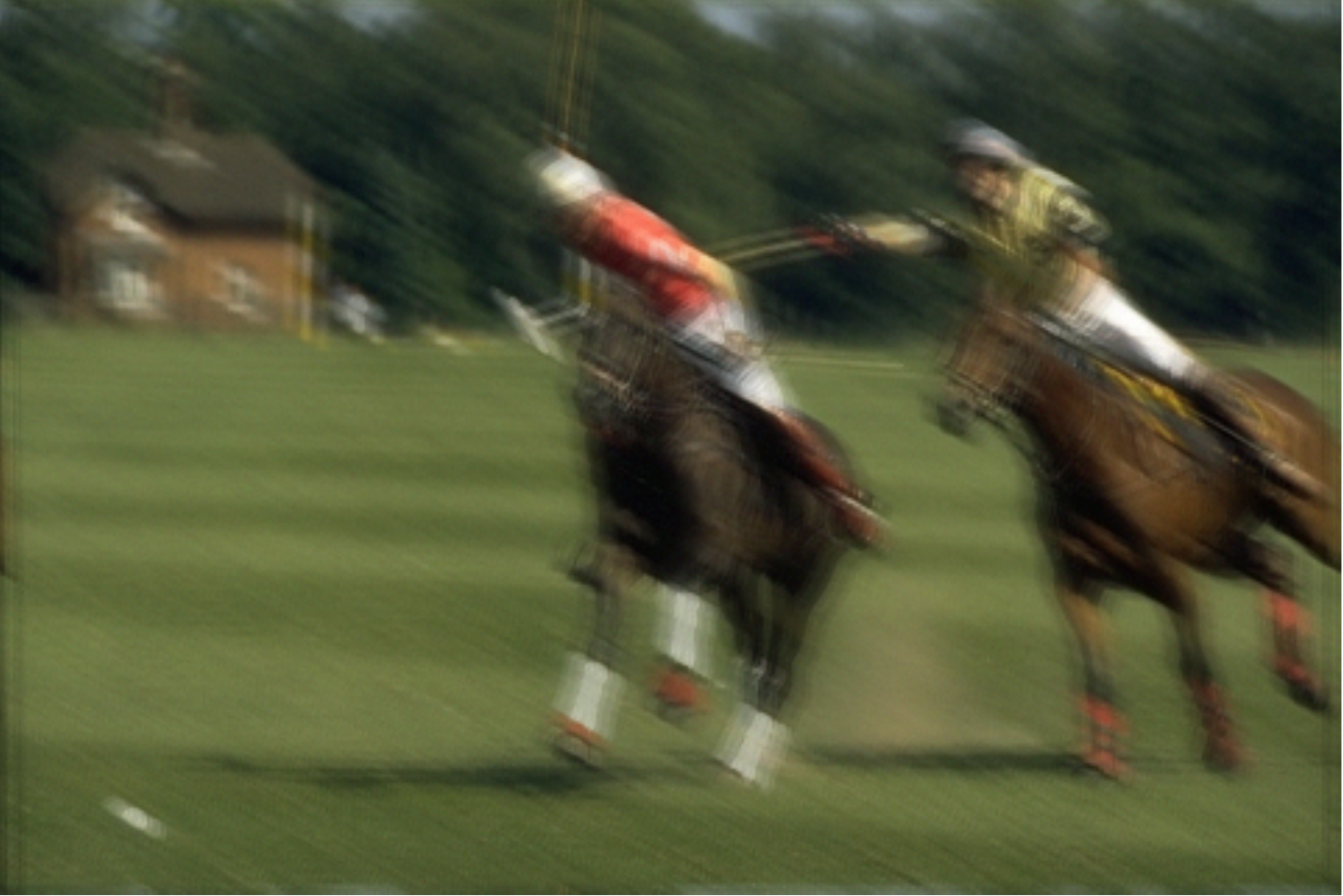}\\[.05em]
      \includegraphics[width=.495\linewidth,viewport=20 200 135 255,clip]{jockey_blur_subimg1}\hfill%
      \includegraphics[width=.495\linewidth,viewport=270 160 385 215,clip]{jockey_blur_subimg1}\\[-.7em]
    \end{minipage}%
  }\hfill
  \subfigure[TViso {\scriptsize (26.95)}]{%
    \begin{minipage}{.322\linewidth}
    \includegraphics[width=1\linewidth]{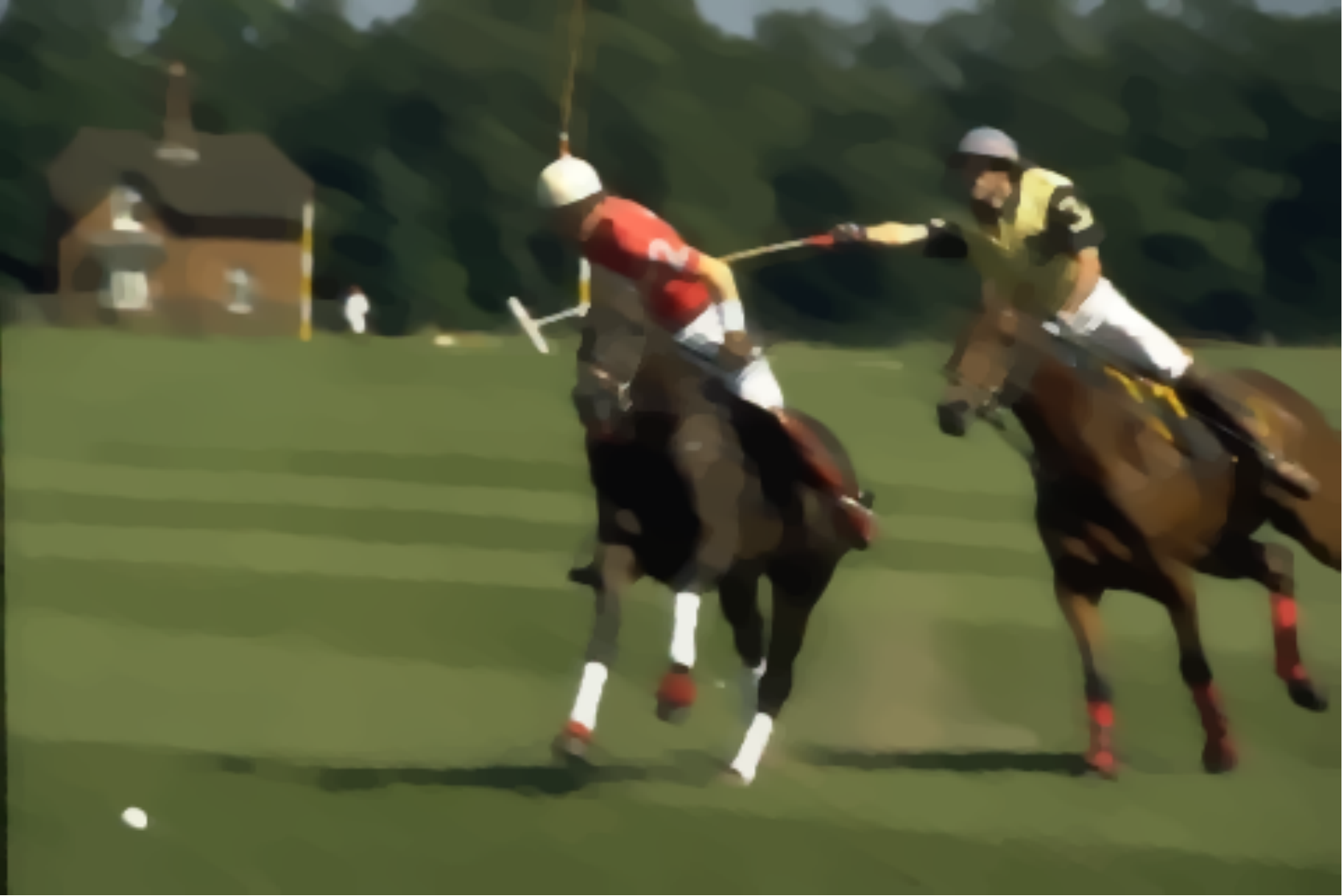}\\[.05em]
      \includegraphics[width=.495\linewidth,viewport=20 200 135 255,clip]{jockey_blur_subimg2}\hfill%
      \includegraphics[width=.495\linewidth,viewport=270 160 385 215,clip]{jockey_blur_subimg2}\\[-.7em]
    \end{minipage}%
  }\\
  \subfigure[HO {\scriptsize (27.09)}]{%
    \begin{minipage}{.322\linewidth}
      \includegraphics[width=1\linewidth]{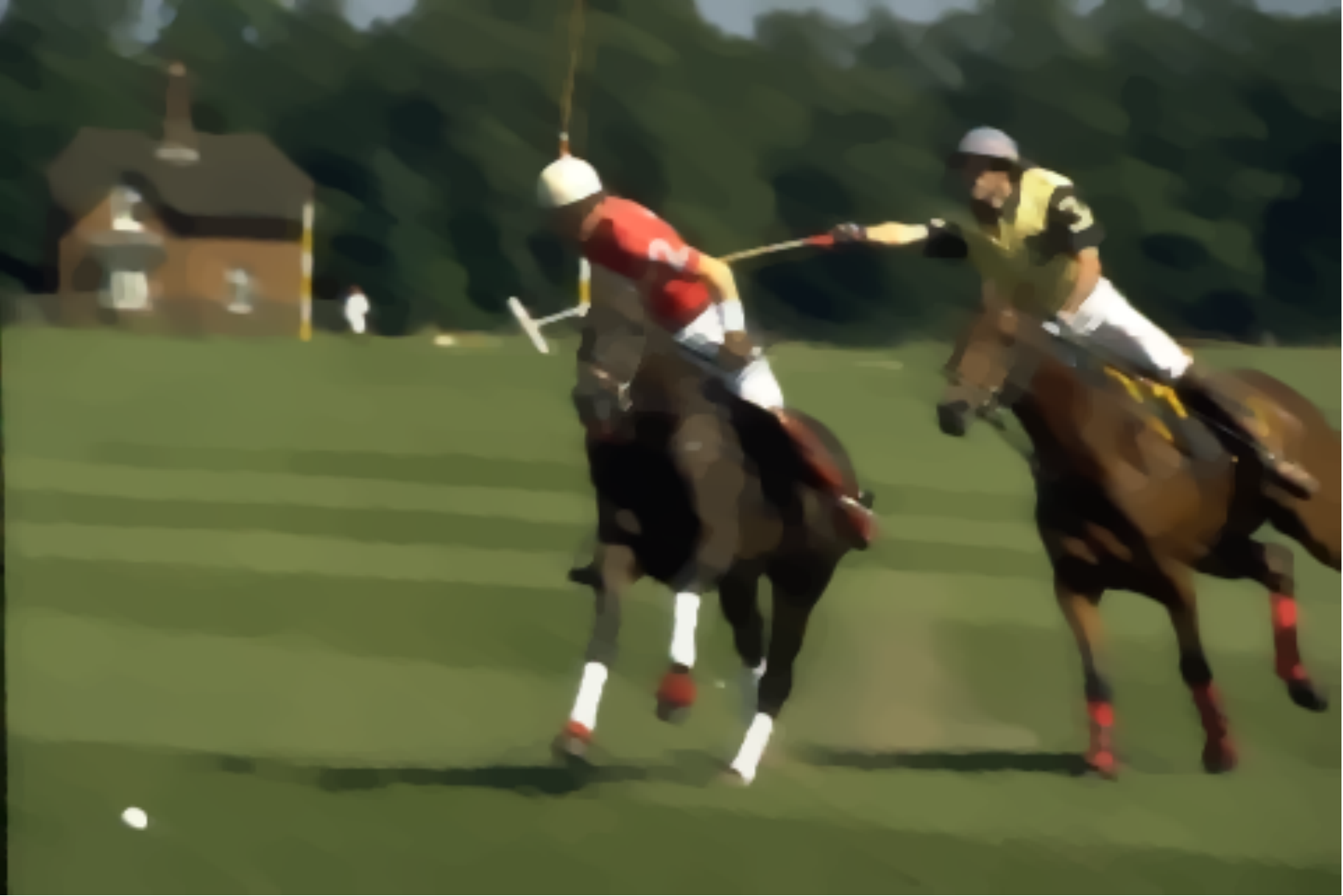}\\[.05em]
      \includegraphics[width=.495\linewidth,viewport=20 200 135 255,clip]{jockey_blur_subimg3}\hfill%
      \includegraphics[width=.495\linewidth,viewport=270 160 385 215,clip]{jockey_blur_subimg3}\\[-.7em]
    \end{minipage}%
  }\hfill
  \subfigure[HD {\scriptsize (27.09)}]{%
    \begin{minipage}{.322\linewidth}
      \includegraphics[width=1\linewidth]{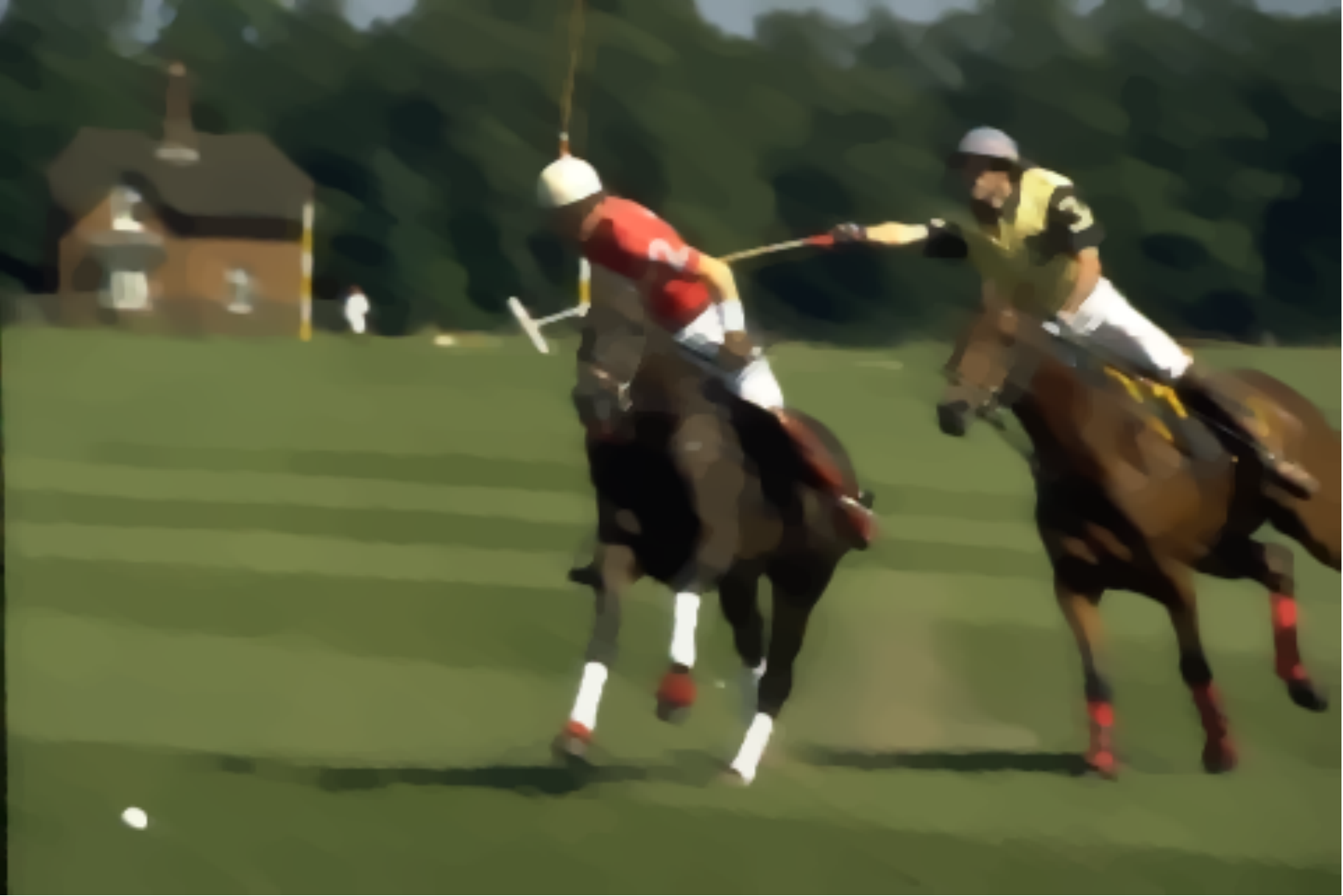}\\[.05em]
      \includegraphics[width=.495\linewidth,viewport=20 200 135 255,clip]{jockey_blur_subimg4}\hfill%
      \includegraphics[width=.495\linewidth,viewport=270 160 385 215,clip]{jockey_blur_subimg4}\\[-.7em]
    \end{minipage}%
  }\hfill
  \subfigure[QO {\scriptsize (29.38)}]{%
    \begin{minipage}{.322\linewidth}
      \includegraphics[width=1\linewidth]{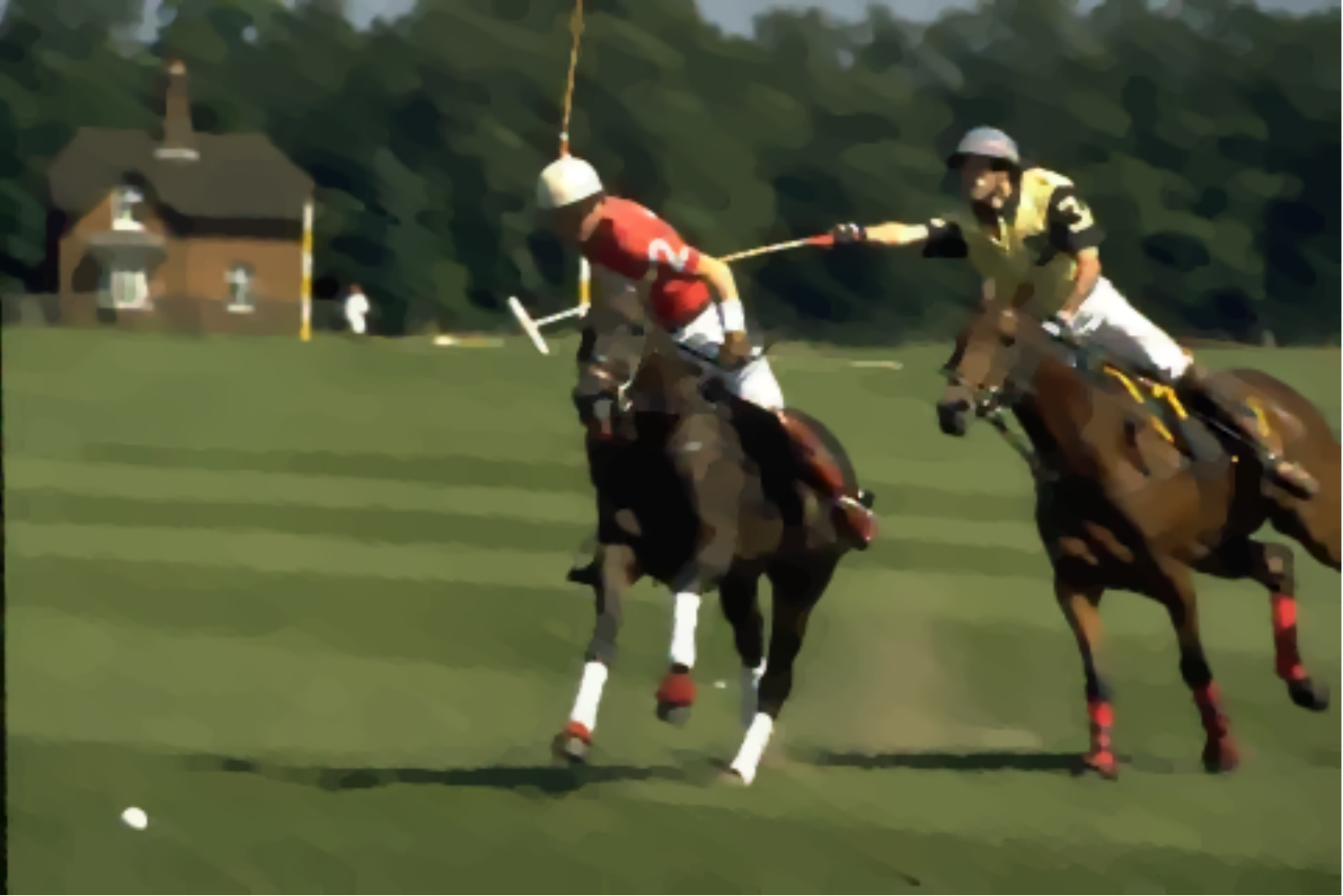}\\[.05em]
      \includegraphics[width=.495\linewidth,viewport=20 200 135 255,clip]{jockey_blur_subimg5}\hfill%
      \includegraphics[width=.495\linewidth,viewport=270 160 385 215,clip]{jockey_blur_subimg5}\\[-.7em]
    \end{minipage}%
  }\\
  \subfigure[QD {\scriptsize (29.31)}]{%
    \begin{minipage}{.322\linewidth}
      \includegraphics[width=1\linewidth]{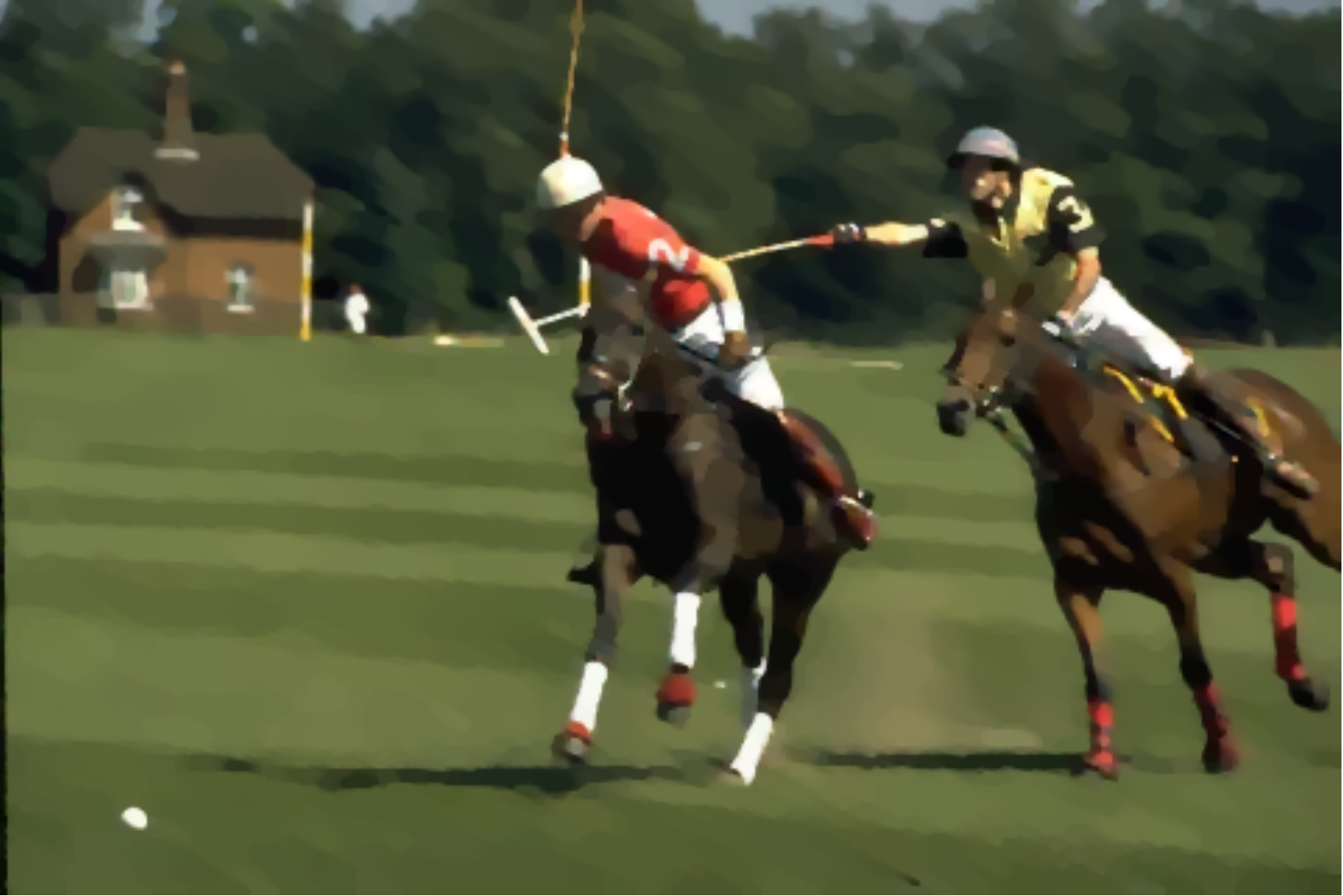}\\[.05em]
      \includegraphics[width=.495\linewidth,viewport=20 200 135 255,clip]{jockey_blur_subimg6}\hfill%
      \includegraphics[width=.495\linewidth,viewport=270 160 385 215,clip]{jockey_blur_subimg6}\\[-.7em]
    \end{minipage}%
  }\hfill
  \subfigure[SO {\scriptsize (29.34)}]{%
    \begin{minipage}{.322\linewidth}
      \includegraphics[width=1\linewidth]{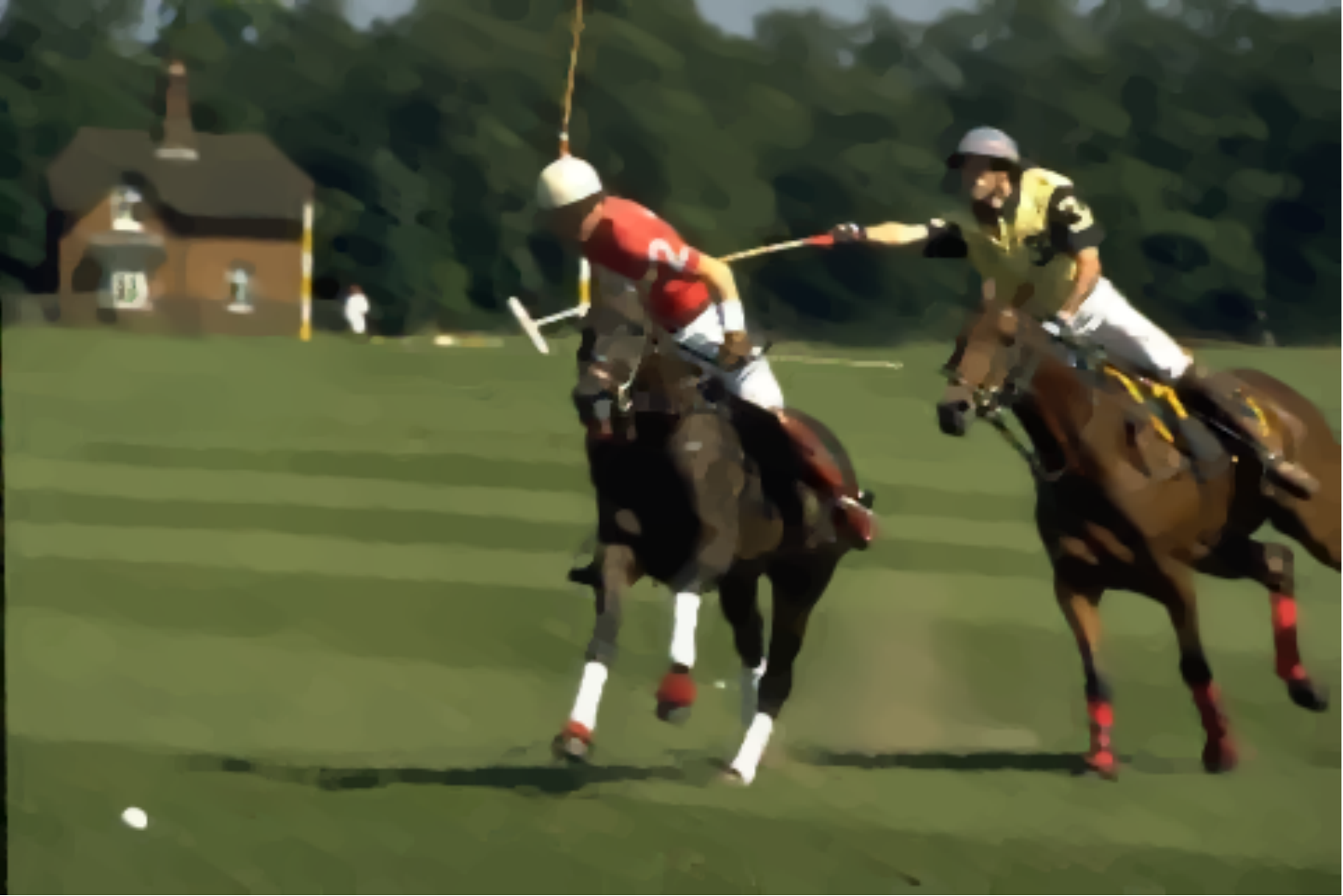}\\[.05em]
      \includegraphics[width=.495\linewidth,viewport=20 200 135 255,clip]{jockey_blur_subimg7}\hfill%
      \includegraphics[width=.495\linewidth,viewport=270 160 385 215,clip]{jockey_blur_subimg7}\\[-.7em]
    \end{minipage}%
  }\hfill
  \subfigure[SD {\scriptsize (30.18)}]{%
    \begin{minipage}{.322\linewidth}
      \includegraphics[width=1\linewidth]{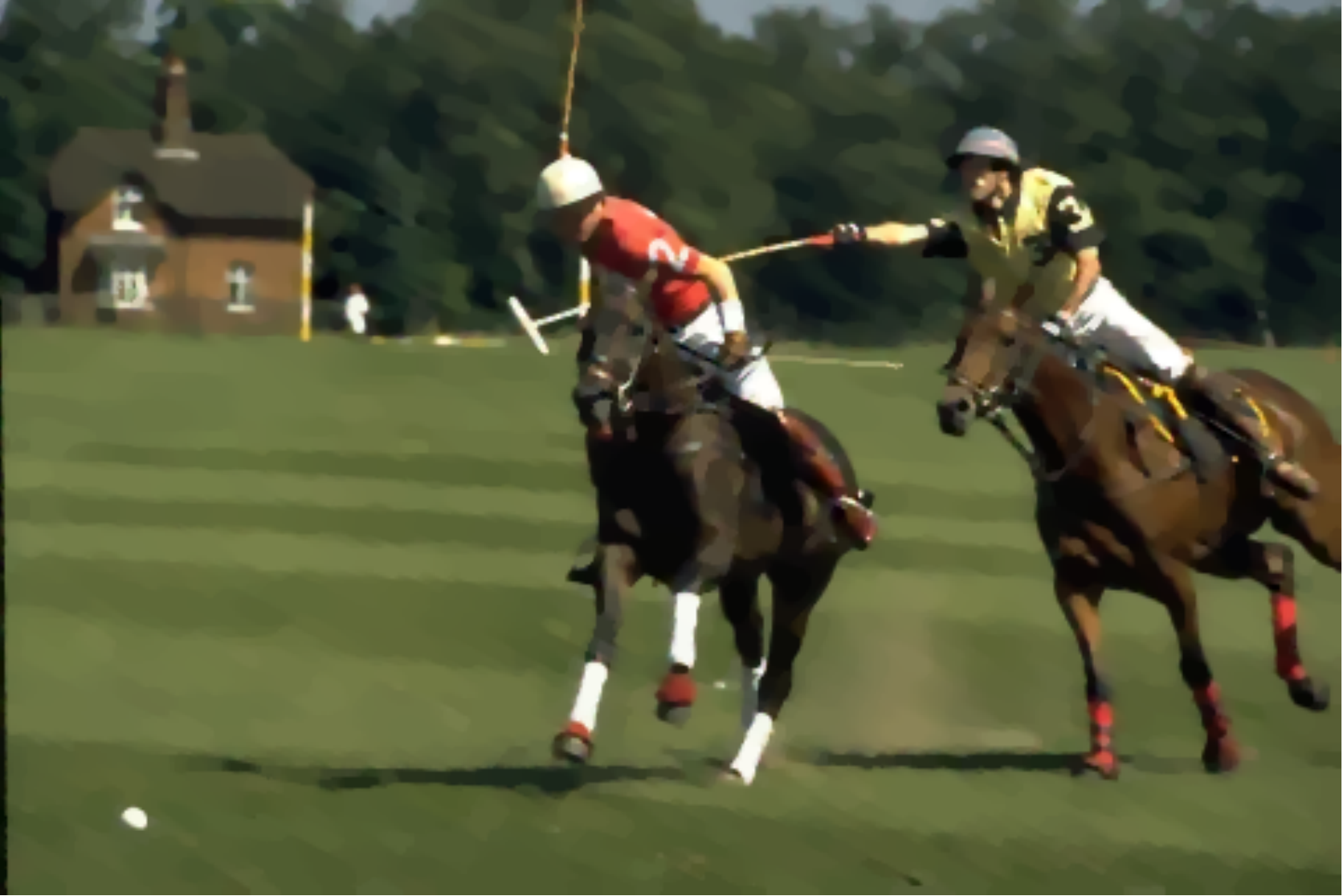}\\[.05em]
      \includegraphics[width=.495\linewidth,viewport=20 200 135 255,clip]{jockey_blur_subimg8}\hfill%
      \includegraphics[width=.495\linewidth,viewport=270 160 385 215,clip]{jockey_blur_subimg8}\\[-.7em]
    \end{minipage}%
  }\\
  \caption{
    (a) A color image. (b) A corrupted version by a directional blur and
    Gaussian noise with standard deviation $\sigma=2$.
    (c) Solution of TViso.
    Debiased solution with (d) HD, (e) QO and (f) SD.
    The PSNR is indicated in brackets bellow each image.}
  \label{fig:deblurring}
\end{figure*}

\subsection{Experiments with second order TGV}\label{sec:tgv}

We finaly consider the second order Total Generalized Variation (TGV)
regularization \cite{bredies2010total} of grayscale images
that can be expressed, for an image $x \in \RR^n$ and regularization
parameters $\lambda > 0$ and $\zeta \geq 0$, as
\begin{align}\label{eq:tgv_iso}
  &\hat{x} \in \uargmin{x \in \RR^n}
  \frac12 \norm{\Phi x - y}^2_2
  \nonumber
  \\
  & \quad + \lambda \min_{z \in \RR^{n \times 2}}
  \norm{\nabla x - \zeta z}_{1,2}
  + \norm{\Ee(z)}_{1,F}
\end{align}
where $\nabla = (\nabla^1, \nabla^2) : \RR^n \to \RR^{n \times 2}$,
$\nabla^d$ denotes the forward discrete gradient in the direction $d \in \{ 1, 2 \}$,
and $\Ee : \RR^{n \times 2} \to \RR^{n \times 2 \times 2}$ is a 
symmetric tensor field
operator defined for a 2d vector field $z = (z^1, z^2)$ as
\begin{align}
  \Ee(z)
  =
  \begin{pmatrix}
    \bar \nabla^1 z^1 & \frac12(\bar \nabla^1 z^2 + \bar \nabla^2 z^1)\\
    \frac12(\bar \nabla^1 z^2 + \bar \nabla^2 z^1) & \bar \nabla^2 z^2 \\
  \end{pmatrix}
\end{align}
where $\bar \nabla^d$ denotes the backward discrete gradient in the direction $d \in \{ 1, 2 \}$,
and $\norm{\cdot}_{1,F}$ is the pointwise sum of the Frobenius norm of all matrices of a field.
One can observe that for $\zeta = 0$ this model is equivalent to TViso.
Interestingly the solution of the second order TGV  can be obtained by
solving a regularized least square problem with an $\ell_{12}$
sparse analysis term as
\begin{align}\label{eq:tgv_iso_block}
  \hat{X} \in \uargmin{X \in \RR^{n \times 3}}
  \frac12 \norm{\Xi X - y}^2_2
  +
  \lambda
  \norm{\Gamma X}_{1,2}
\end{align}
where for an image $x \in \RR^n$ and a vector field
$z \in \RR^{n \times 2}$, we consider
$  X =
  \begin{pmatrix}
    x, z
  \end{pmatrix} \in \RR^{n \times 3}$ and
 $ \Xi : (x, z) \to \Phi x
  $
and we define $\Gamma : \RR^{n \times 3} \mapsto \RR^{2 n \times 3}$ as
\begin{align}
  \Gamma(x, z)\hspace{-0.05cm}
  =\hspace{-0.1cm}
  \begin{pmatrix}
    \nabla^1 x - \zeta z^1 && \nabla^2 x - \zeta z^2 && 0\\
    \bar \nabla^1 z^1 && \bar \nabla^2 z^2 &&
    \tfrac1{\sqrt{2}} (\bar \nabla^1 z^2 + \bar \nabla^2 z^1)\\
  \end{pmatrix}\hspace{-0.08cm}.
\end{align}
After solving \eqref{eq:tgv_iso_block}, the solution  of \eqref{eq:tgv_iso} is obtained
as the first component of $\hat{X}=(\hat{x},\hat{z})$.
We use Chambolle-Pock algorithm \cite{CP} for which we also need to implement the
adjoint $\Gamma^t : \RR^{2 n \times 3} \to \RR^{n \times 3}$
of $\Gamma$ given for fields $z \in \RR^{n \times 2}$ and $e \in \RR^{n \times 3}$ by
\begin{align}
  \Gamma^t
  \begin{pmatrix}
    z^1, & z^2, & \cdot\\
    e^2, & e^2, & e^3
  \end{pmatrix}
  &=
  -
  \begin{pmatrix}
    (\bar \nabla^1 z^1 + \bar \nabla^2 z^2)^t\\
    (\zeta z^1 + \nabla^1 e^1 +\tfrac1{\sqrt{2}} \nabla^2 e^3)^t\\
    (\zeta z^2 + \nabla^1 e^2 +\tfrac1{\sqrt{2}} \nabla^2 e^3)^t\\
  \end{pmatrix}^t ~.
\end{align}

We focused on a denoising problem $y = x + w$ where
$x$ is a simulated elevation profile, ranging in $[0, 255]$, and $w$ is an additive white Gaussian
noise with standard deviation $\sigma=2$.
We chose $\lambda=15$ and $\zeta = 0.45$.
We applied the iterative primal-dual algorithm with our joint-refitting
(Algorithm \eqref{algo_final}) for $4,000$ iterations,
with $\tau = 1/\norm{\Gamma}_2$, $\kappa = 1/\norm{\Gamma}_2$ and $\theta = 1$,
and where we estimated $\norm{\Gamma}_2 \approx 3.05$ by power iteration
(note that the quantity $\norm{\Gamma}_2$ depends on the value of $\zeta$).

Refitting comparisons between  our proposed SD block penalty,
and  the five other alternatives are provided on Fig.~\ref{fig:tgv}.
Using the SD model offers the best refitting
performances in terms of both visual and quantitative measures.
The bias of TGV is well-corrected,
elevations (see the chimneys) and slopes (see the sidewalks)
are enhanced while smoothness and sharpness of TGV is
preserved. Meanwhile, the approach does not  create artifacts,
invert contrasts, or reintroduce information
that was not recovered by TGV.

\begin{figure*}
  \subfigure[Original]{\includegraphics[width=.32\linewidth,viewport=50 30 360 230,clip]{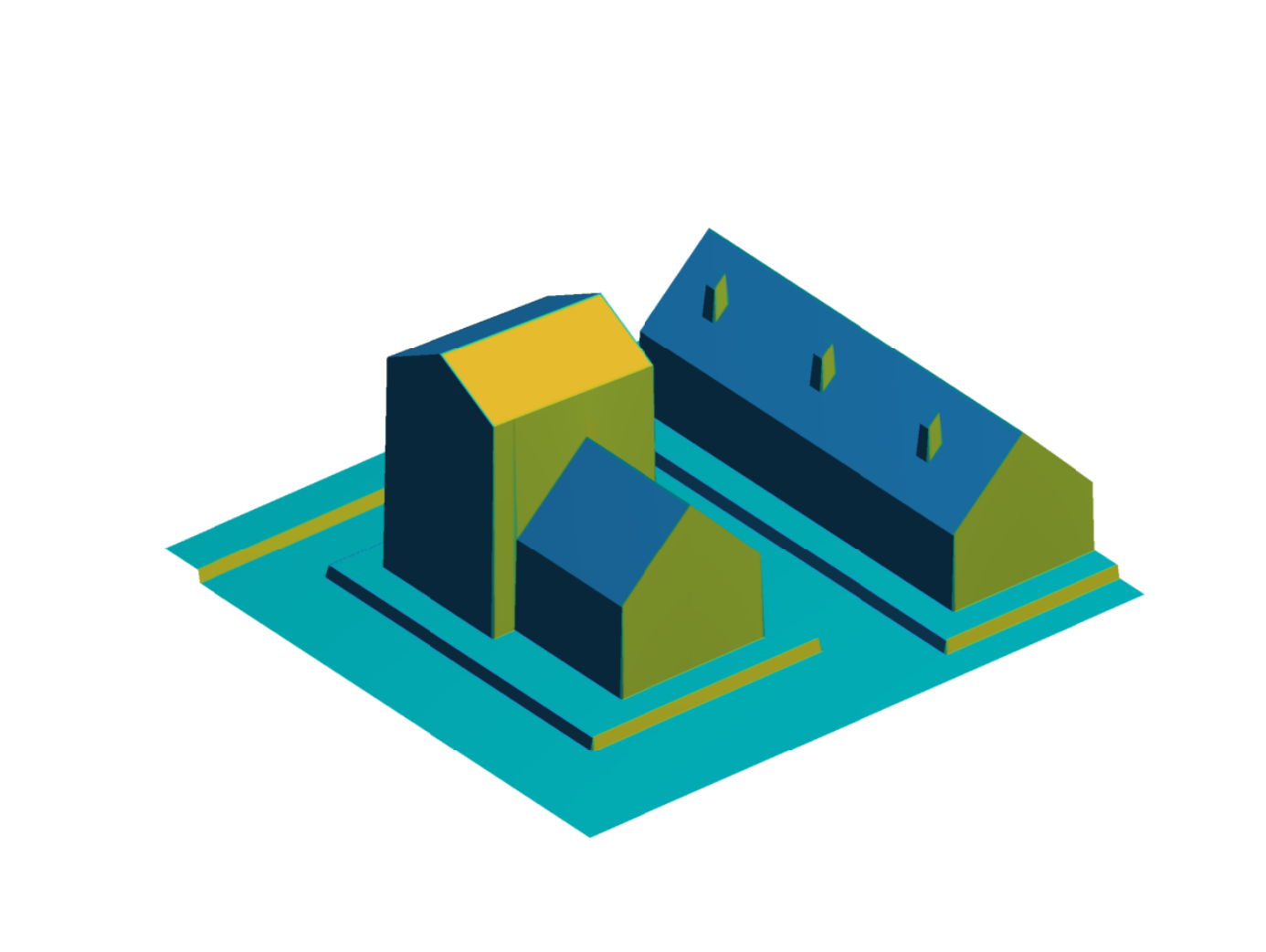}}\hfill%
  \subfigure[Noisy (42.14)]{\includegraphics[width=.32\linewidth,viewport=50 30 360 230,clip]{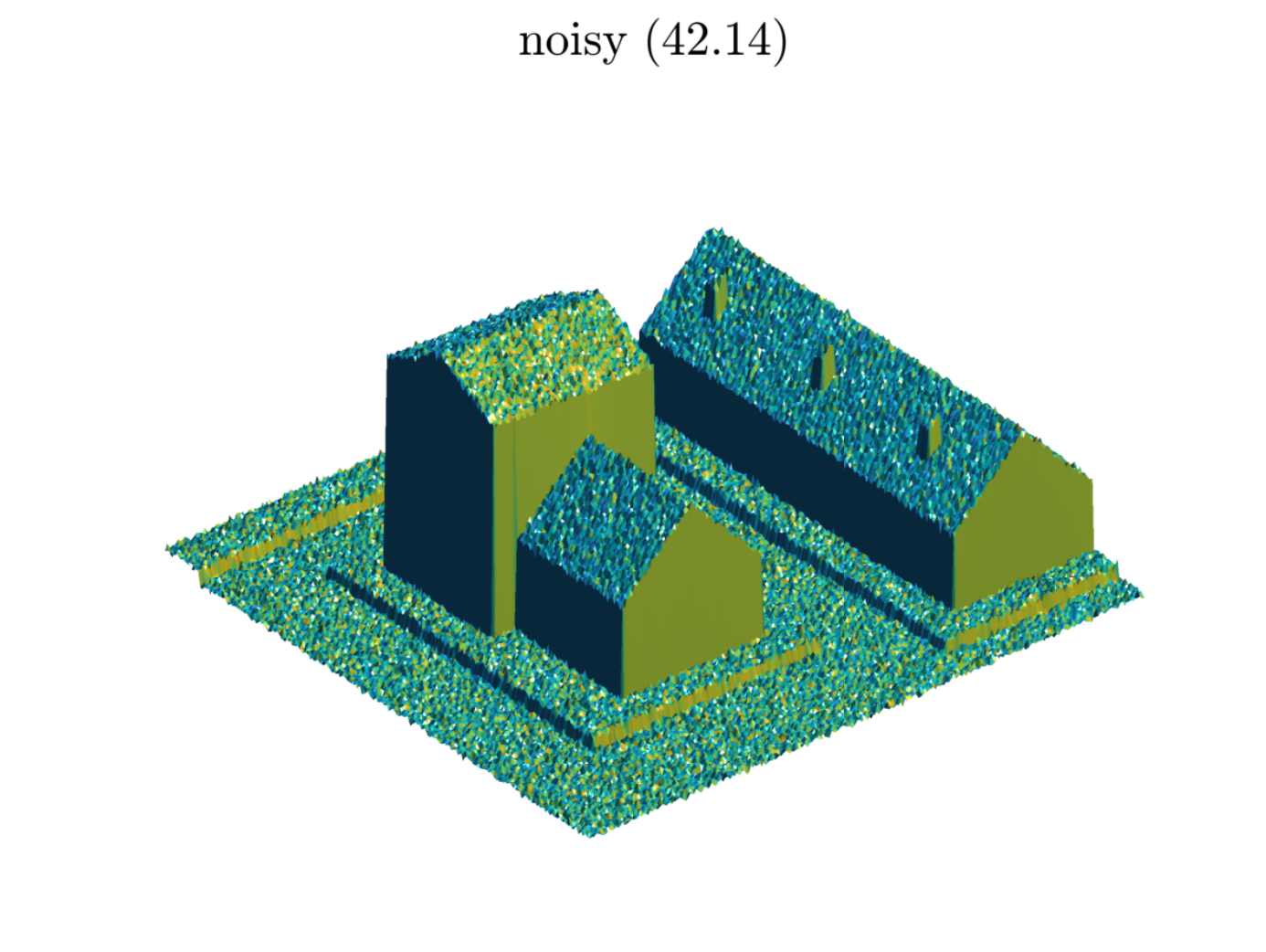}}\hfill%
  \subfigure[TGV (43.16)]{\includegraphics[width=.32\linewidth,viewport=50 30 360 230,clip]{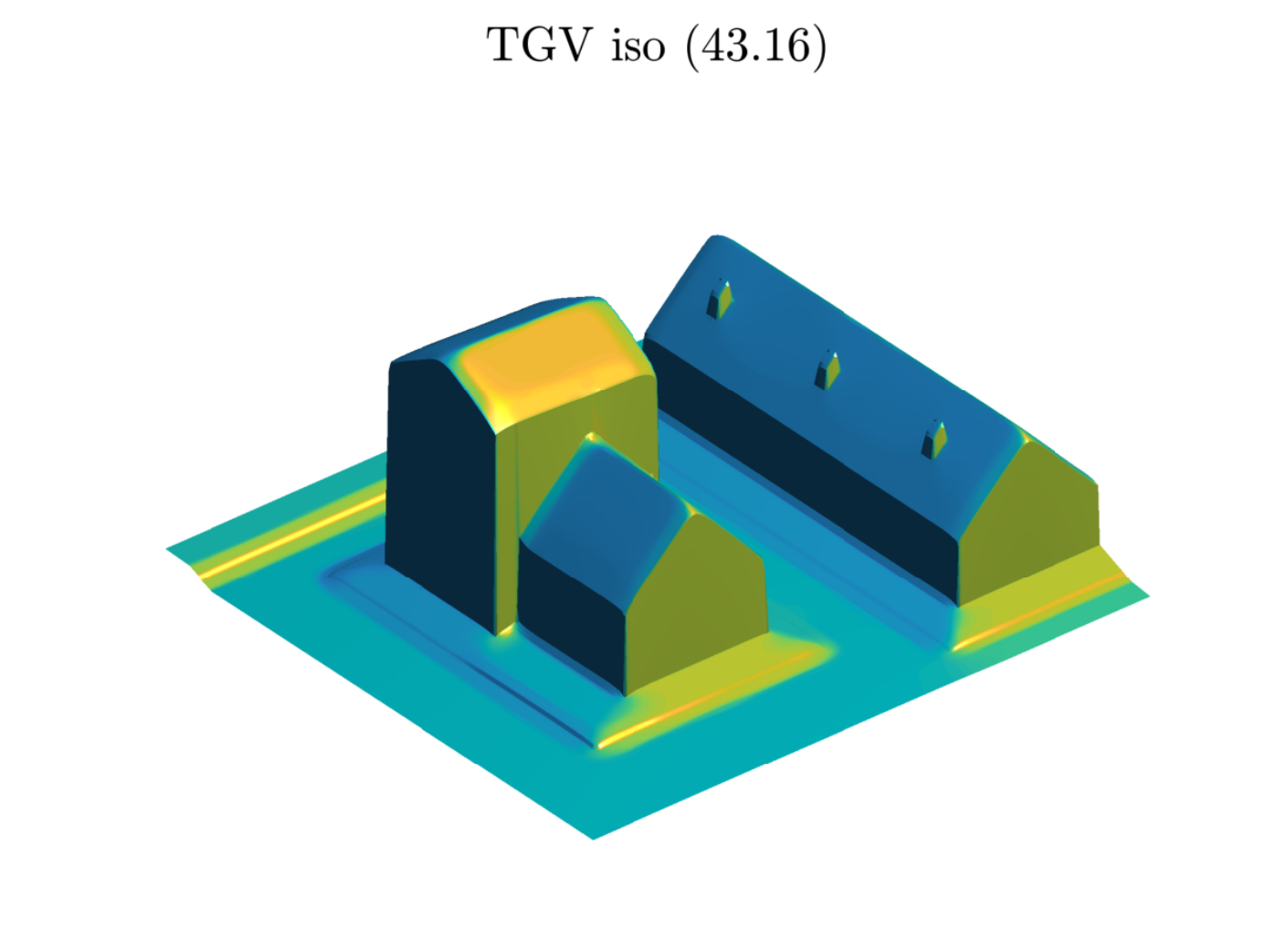}}\\
  \subfigure[HO (43.60)]{\includegraphics[width=.32\linewidth,viewport=50 30 360 230,clip]{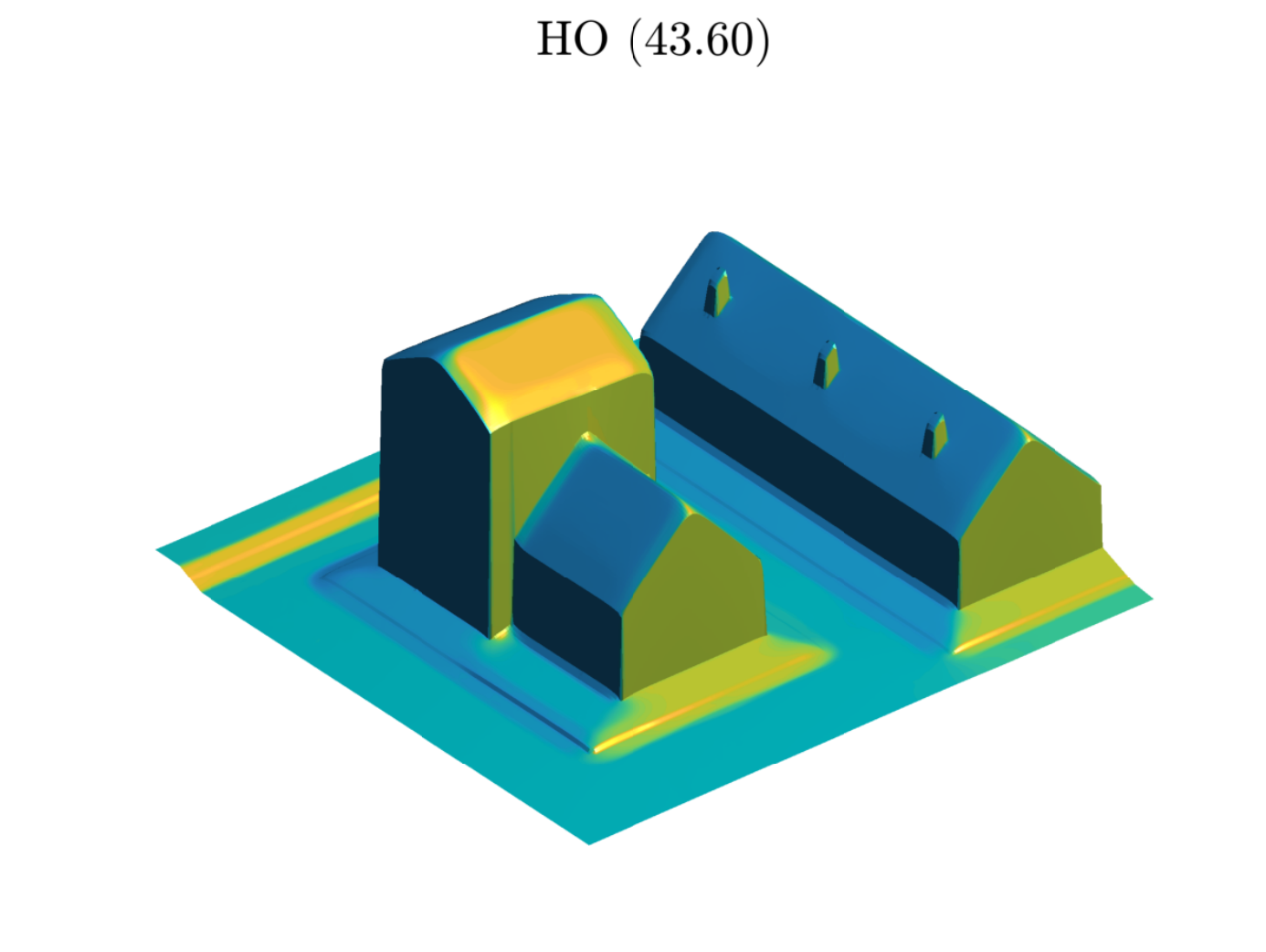}}
  \subfigure[HD (43.55)]{\includegraphics[width=.32\linewidth,viewport=50 30 360 230,clip]{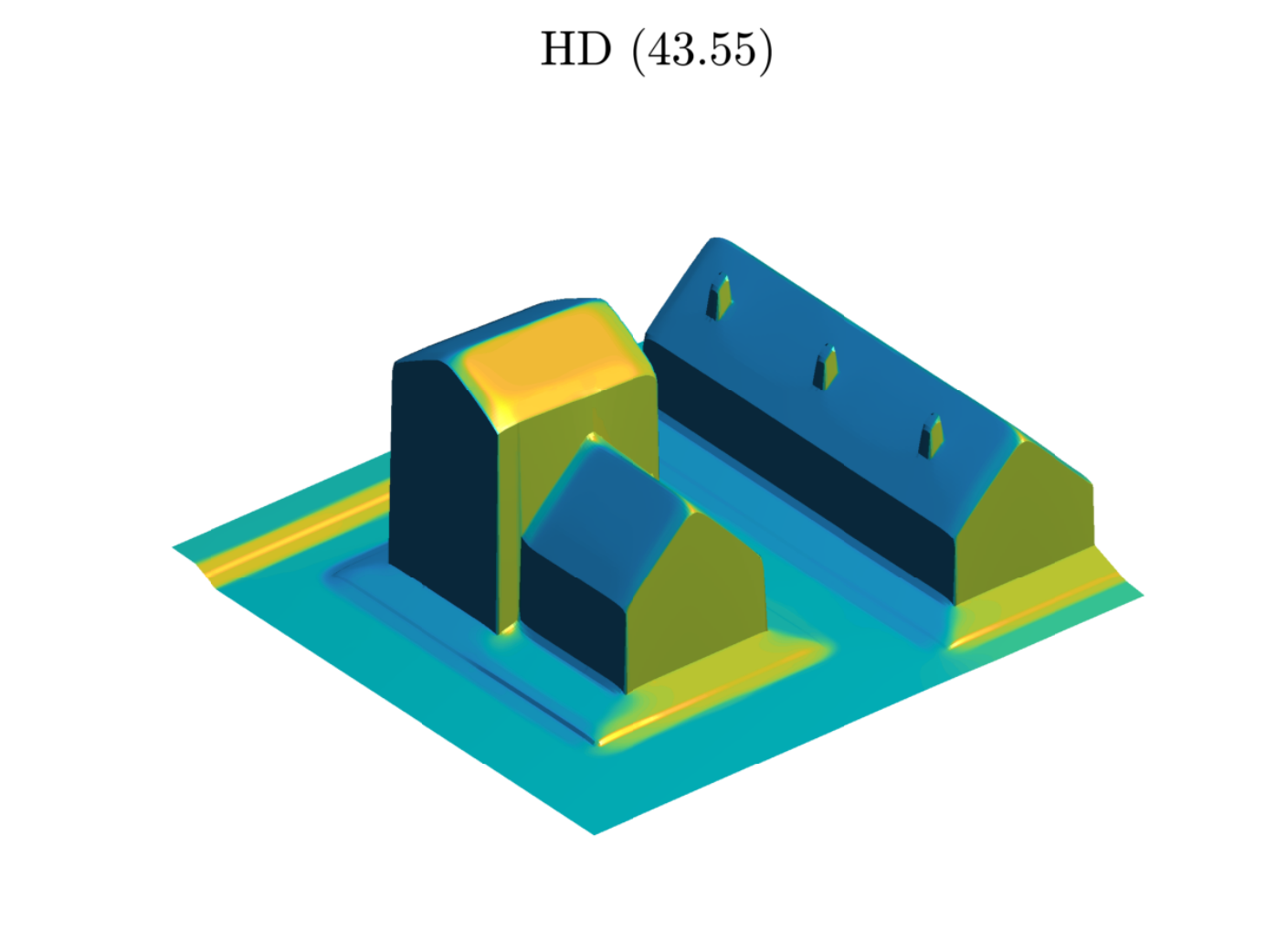}}\hfill%
  \subfigure[QO (58.31)]{\includegraphics[width=.32\linewidth,viewport=50 30 360 230,clip]{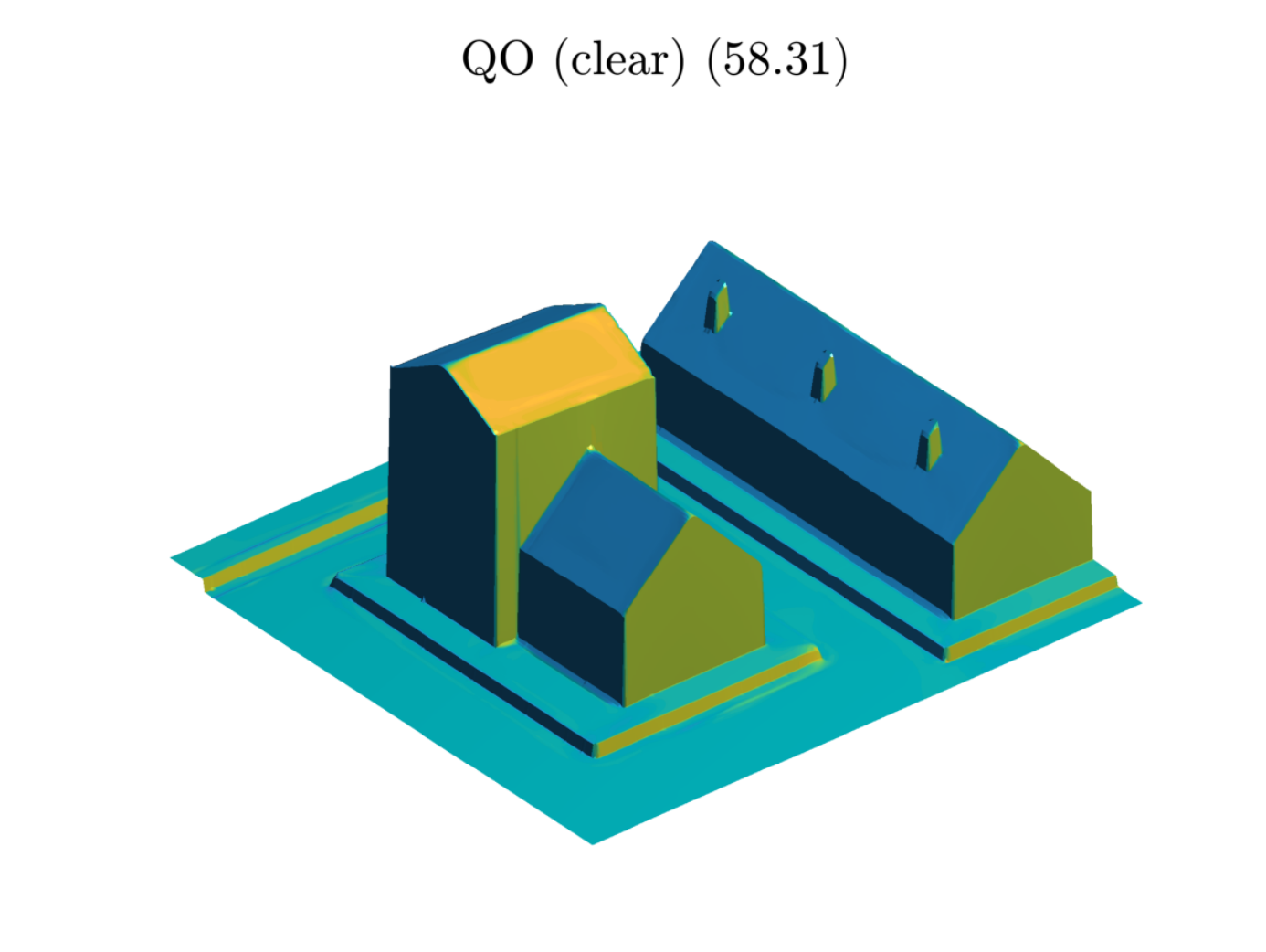}}\hfill%
  \subfigure[QD (58.00)]{\includegraphics[width=.32\linewidth,viewport=50 30 360 230,clip]{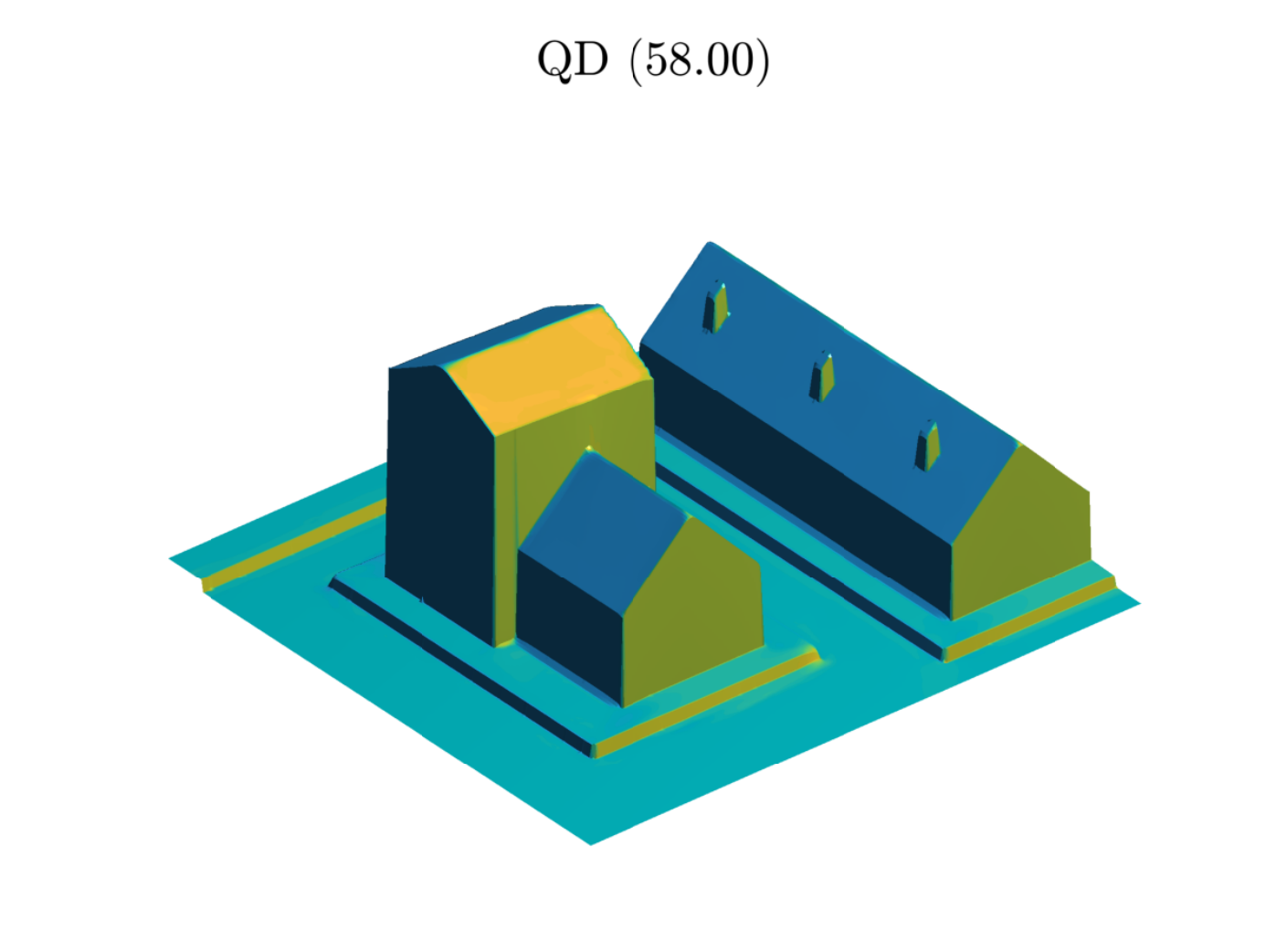}}\hfill%
  \subfigure[SO (58.31)]{\includegraphics[width=.32\linewidth,viewport=50 30 360 230,clip]{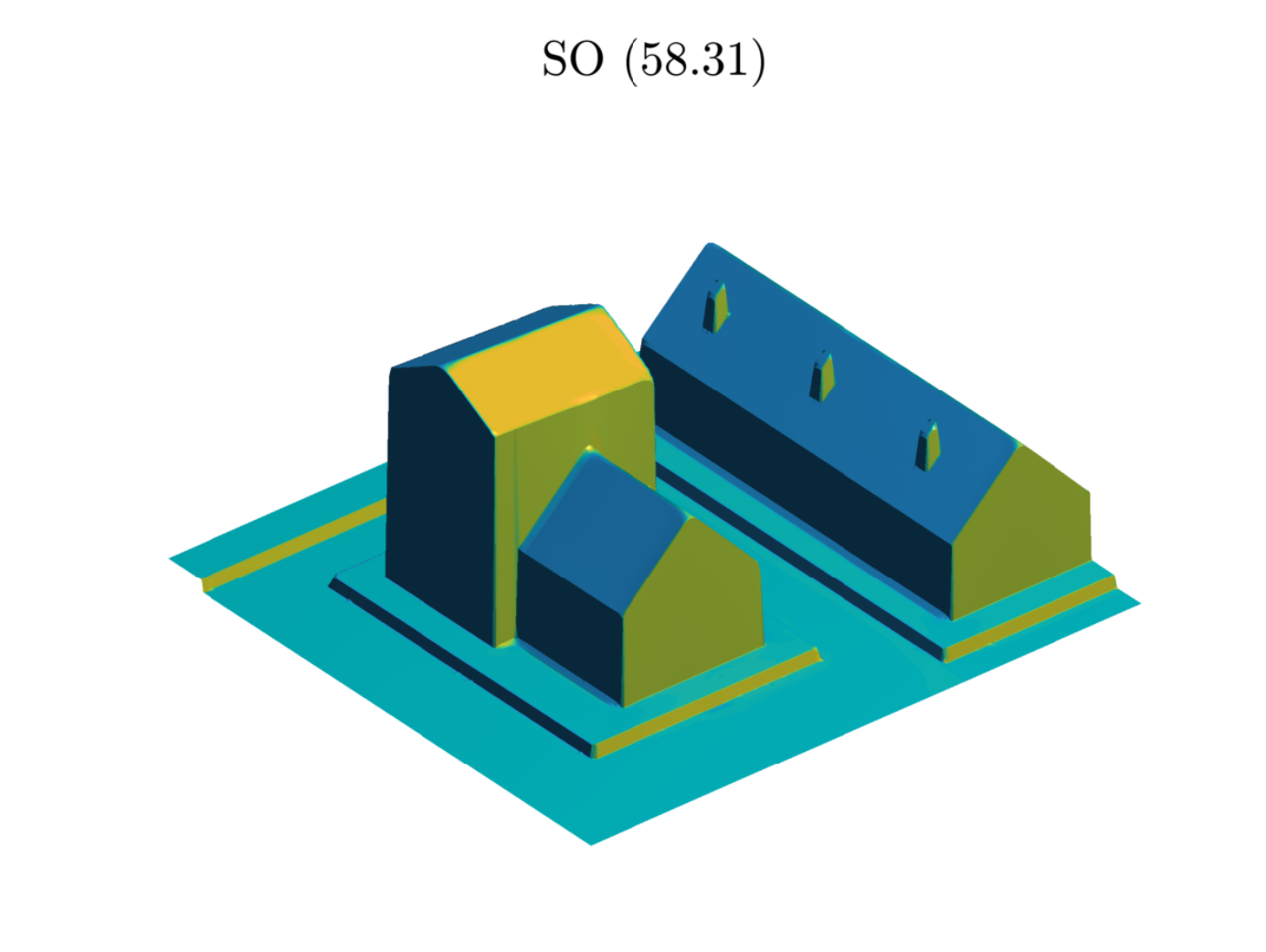}}\hfill%
  \subfigure[SD (60.33)]{\includegraphics[width=.32\linewidth,viewport=50 30 360 230,clip]{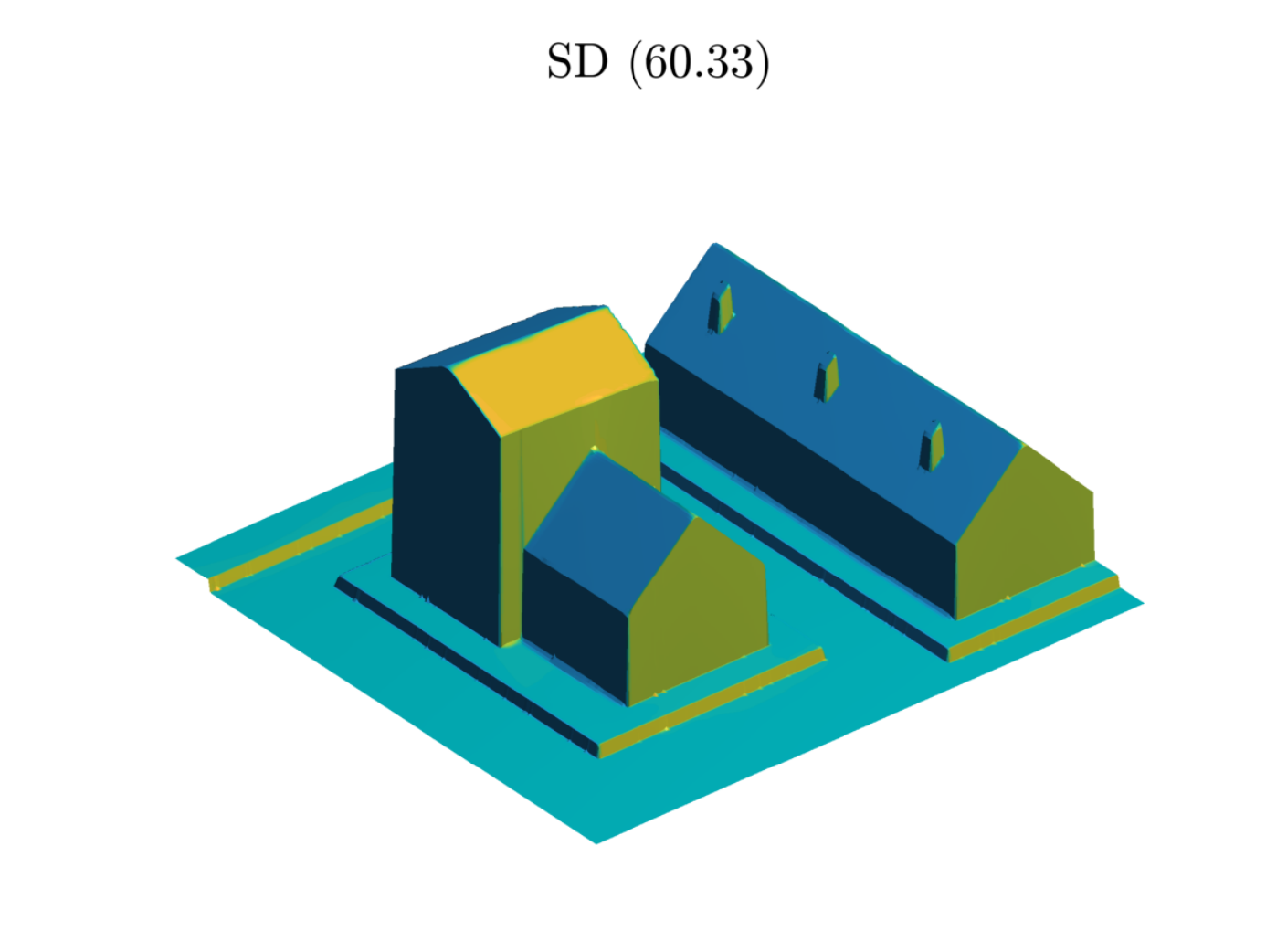}}\\
  \caption{
    (a) An elevation profile (range 0 to 255).
    (b) A corrupted version by Gaussian noise
    with standard deviation $\sigma=0.5$.
    (c) Solution of TGV.
    Debiased solution with (d) HO, (e) HD, (f) QO, (g) QD, (h) SO and (i) SD.
    The PSNR is indicated in brackets bellow each profile.
  }
  \label{fig:tgv}
\end{figure*}

\section{Conclusion}
We have presented a block penalty formulation for the refitting of solutions obtained with $\ell_{12}$ sparse regularization.
Through this framework, desirable properties of refitted solutions can  easily be promoted.
Different  block penalty functions  have been proposed and their properties discussed.  We have namely introduced the SD block penalty that interpolates between Bregman iterations \cite{Osher} and direction preservation \cite{brinkmann2016bias}.

Based on standard optimization schemes, we have also defined stable numerical strategies to jointly estimate a biased solution and its refitting.
In order to take advantage of our efficient joint-refitting algorithm,  we underline that it is important to consider  {\em simple} block penalty functions, which proximal operator  can be computed explicitly. This is the case for all the  presented block penalties,  more  complex ones having been discarded from this paper for this reason.

Initially designed for  TViso regularization, the approach  has finally been extended to a generalized TGV model. Experiments show how the block penalties are able to preserve different structures of the biased solutions, while recovering the correct  signal amplitude.

The refitting strategy is also robust to the numerical artifacts due to classical TViso discretization, and it allows to recover sharp discontinuities in any direction (see  the circle in Fig. \ref{ex:TViso-cc}), without considering  advanced schemes \cite{condat2017discrete,chambolle2020crouzeix,Caillaud}.

For the future, a challenging problem would be to define similar simple numerical schemes able to consider global properties on the refitting, such as the  preservation of the tree of shapes \cite{weiss2019contrast} of the biased solution.

\paragraph{Acknowledgements}
This work was supported by the European Union’s Horizon 2020 research and innovation programme under the Marie Skłodowska-Curie grant agreement No 777826.

\appendix

\section{Proximity operators of block penalties}
\label{sec:prox_block_proof}

\subsection{Convex conjugates $\phi^*$}
We  here compute the convex conjugate $\phi^*$ of the different  block penalties $\phi(z,\hat z)$ that only depends on $z\in \RR^b$ and where $\hat z\in \RR^b$ is a given fixed non null vector.
We consider the following representation of the vectors $z$ with respect to the $\hat z$ axis: $z=\xi \frac{\hat z}{\norm{\hat z}_2}+\zeta \frac{\hat z^\perp}{\norm{\hat z}_2}$.
This expression is valid for the case $b=2$. If $b=1$ then $z$ is only parameterized by $\xi$. When $b>2$, $\hat z^\perp$ must be understood as a subspace $S$ of dimension $b-1$ and $\zeta$ as a vector of $b-1$ components corresponding to each dimension of $S$.

With this change of variables, we have $\norm{z}^2_2=\xi^2+\norm{\zeta}^2_2$ (since $\zeta$ is of dimension $b-1$),  $\cos(z,\hat z)=\xi/\sqrt{\xi^2+\norm{\zeta}^2_2}$, $P_{\hat z}(z)=\xi \hat z/\norm{\hat z}_2$ and $z-P_{\hat z}(z)=\zeta \hat z^\perp/\norm{\hat z}_2$.
We also observe for instance that
$\abs{\cos(z,\hat z)}=1\Leftrightarrow \zeta=0_{b-1}$.
All the block penalties $\phi(z, \hat z)$ can thus be expressed as   $\phi(\xi,\zeta)$.
The convex conjugate reads
\begin{equation}\label{convex_conj}
  \phi^*(\xi_0,\zeta_0)=\sup_{\xi,\zeta} \;
  \xi_0 \xi+\dotp{\zeta_0}{\zeta}-\phi(\xi,\zeta)
  \enspace.
\end{equation}
\noindent
$\bullet$ [HO] It results that the convex conjugate $\phi_{\rm HO}^*(\xi_0,\zeta_0)$ is
\begin{align}\label{convex_conjHO}
  \sup_{\xi,\zeta} \;
  \xi_0 \xi+\dotp{\zeta_0}{\zeta}-
  \iota_{\{ 0_{b-1} \}}(\zeta)
  =
  \iota_{\{ 0 \}}(\xi_0)~.
\end{align}
$\bullet$ [HD] The convex conjugate $\phi_{\rm HD}^*(\xi_0,\zeta_0)$ is
\begin{align}\label{convex_conjHD}
  \sup_{\xi,\zeta} \;
  \xi_0 \xi+\dotp{\zeta_0}{\zeta}-
  \iota_{\RR_+^* \times \{ 0_{b-1} \}}(\xi, \zeta)
  =
  \iota_{\mathbb{R}_{-}}(\xi_0)~.
\end{align}
$\bullet$ [QO] The convex conjugate $\phi_{\rm QO}^*(\xi_0,\zeta_0)$ is
\begin{align}\label{convex_conjQO}
  & \sup_{\xi,\zeta} \;
  \xi_0 \xi+\dotp{\zeta_0}{\zeta}- \frac{\lambda}{2}
  \frac{\xi^2+\norm{\zeta}^2_2}{\norm{\hat{z}}_2}\left(1-
  \frac{\xi^2}{\xi^2+\norm{\zeta}^2_2}\right)\\
  &=
  \sup_{\xi,\zeta} \;
  \xi_0 \xi+\dotp{\zeta_0}{\zeta}- \frac{\lambda}{2}
  \frac{\norm{\zeta}^2_2}{\norm{\hat{z}}_2}\\
  &=
  \choice{
    \frac{\norm{\hat z}_2\norm{\zeta_0}^2_2}{2\lambda} & \ifq \xi_0 = 0 ~,\\
    + \infty & \otherwise~,
  }
\end{align}
since the optimality condition on $\zeta$ gives us
$\zeta^\star=\frac{\zeta_0\norm{\hat z}_2}{\lambda}$.
\medskip

\noindent
$\bullet$ [QD]
The convex conjugate $\phi_{\rm QD}^*(\xi_0,\zeta_0)$ is
\begin{align}\label{convex_conjQD}
  &
  \sup_{\xi,\zeta} \;
  \xi_0 \xi+\dotp{\zeta_0}{\zeta}-  \frac{\lambda}{2}
  \frac{\norm{\zeta}^2_2}{\norm{\hat{z}}_2}-\choice{
    \frac{\lambda}{2} \frac{\xi^2}{\norm{\hat{z}}_2} & \ifq \xi\leq 0\\
    0&\otherwise .
  }
  \\
  &=
  \choice{
    \frac{\norm{\hat z}_2(\xi_0^2+\norm{\zeta_0}^2_2)}{2\lambda} & \ifq \xi_0\leq 0 ~,\\
    + \infty & \otherwise~,
  }
\end{align}
where we used that if $\xi_0>0$, taking $\zeta=0$ and
letting $\xi\to\infty$ leads to $\phi^*_{\rm QD}(\xi_0,\zeta_0)=+\infty$,
otherwise, we used the optimality conditions on $\zeta$ and $\xi$ giving us
$\zeta^\star=\frac{\zeta_0\norm{\hat z}_2}{\lambda}$ and
$\xi^\star=\frac{\xi_0\norm{\hat z}_2}{\lambda}$.
\medskip

\noindent
$\bullet$ [SO]
The convex conjugate $\phi_{\rm SO}^*(\xi_0,\zeta_0)$ is
\begin{align}\label{convex_conjSO}
  & \sup_{\xi,\zeta} \;
  \xi_0 \xi+\dotp{\zeta_0}{\zeta} -
  \lambda \norm{\zeta}_2
  =
  \iota_{\{ 0 \} \times B_2^\lambda}(\xi_0, \zeta_0)~,
\end{align}
since the optimality condition on $\zeta$ gives us
$\norm{\zeta_0}_2 \leq \lambda$ if $\zeta^\star = 0$,
and $\zeta_0 = \lambda \frac{\zeta^\star}{\norm{\zeta^\star}_2}$
otherwise.
\medskip
\noindent
$\bullet$ [SD]
The convex conjugate $\phi_{\rm SD}^*(\xi_0,\zeta_0)$ is
\begin{align}\label{convex_conjSD}
  & \sup_{\xi,\zeta} \;
  \xi_0 \xi+\dotp{\zeta_0}{\zeta}-  \frac\lambda2
  \left(  \sqrt{\xi^2+\norm{\zeta}_2^2 }-\xi\right)
  \\
  &=\sup_{\xi,\zeta} \;
  (\xi_0+\lambda/2) \xi+\dotp{\zeta_0}{\zeta}-  \frac\lambda2
  \sqrt{\xi^2+\norm{\zeta}_2^2 }
  \\
  &=
  \choice{
    0& \ifq  \sqrt{\norm{\zeta_0}^2_2+(\xi_0+\lambda/2)^2}\leq\lambda/2~,\\
    + \infty & \otherwise~,
  }
\end{align}
since if $\sqrt{\norm{\zeta_0}^2_2+(\xi_0+\lambda/2)^2}>\lambda/2$,
then  letting  $\xi\to\sign{(\xi_0+\lambda/2)}\times \infty$ and
$\zeta\to\sign{\zeta_0}\times \infty$ leads to $\phi_{\rm SD}^*(\xi_0,\zeta_0)=+\infty$.

\subsection{Computing $\prox_{\kappa \phi^*}$}
We here give the computation of the proximal operator
$\prox_{\kappa\phi^*}(\xi_0,\zeta_0)$
of the different $\phi^*$ that is given at point $(\xi_0,\zeta_0)$ by
\begin{equation}
  \uargmin{\xi,\zeta}
  \frac1{2\kappa} \normb{
    \begin{pmatrix}\xi\\ \zeta \end{pmatrix}
    -
    \begin{pmatrix}\xi_0\\ \zeta_0 \end{pmatrix}
  }^2_2+
  \phi^*(\xi,\zeta).
\end{equation}

\noindent
$\bullet$ [HO]
The proximal operator $\prox_{\kappa\phi_{HO}^*}(\xi_0,\zeta_0)$ is given by
\begin{align}
  \uargmin{\xi,\zeta}
    \frac1{2\kappa} \normb{
    \begin{pmatrix}\xi\\ \zeta \end{pmatrix}
    -
    \begin{pmatrix}\xi_0\\ \zeta_0 \end{pmatrix}
  }^2_2+
  \iota_{\{ 0 \}}(\xi)
  =(0,\zeta_0).
\end{align}
\medskip

\noindent
$\bullet$ [HD]
The proximal operator $\prox_{\kappa\phi_{HD}^*}(\xi_0,\zeta_0)$ is given by
\begin{align}
  \uargmin{\xi,\zeta}
    \frac1{2\kappa} \normb{
    \begin{pmatrix}\xi\\ \zeta \end{pmatrix}
    -
    \begin{pmatrix}\xi_0\\ \zeta_0 \end{pmatrix}
  }^2_2+
  \iota_{\RR_{-}}(\xi)
  =(\min(0,\xi_0),\zeta_0).
\end{align}

\noindent
$\bullet$ [QO]
The proximal operator $\prox_{\kappa\phi_{QO}^*}(\xi_0,\zeta_0)$ is given by
\begin{align}
  &\uargmin{\xi,\zeta}
  \frac1{2\kappa} \normb{
    \begin{pmatrix}\xi\\ \zeta \end{pmatrix}
    -
    \begin{pmatrix}\xi_0\\ \zeta_0 \end{pmatrix}
  }^2_2
  +\choice{
    \frac{\norm{\hat z}_2\norm{\zeta}^2_2}{2\lambda} & \ifq \xi = 0 ~,\\
    + \infty & \otherwise~.
  }\\
  &=\frac{\lambda}{\lambda+\kappa \norm{\hat z}_2}\left(0,\zeta_0\right),
\end{align}
since the optimality condition gives
$\lambda(\zeta^\star-\zeta_0)+\kappa \norm{\hat z}_2 \zeta^\star=0$.
\medskip

\noindent
$\bullet$ [QD]
The proximal operator $\prox_{\kappa\phi_{QD}^*}(\xi_0,\zeta_0)$ is given by
\begin{align}
  & \uargmin{\xi,\zeta}
  \frac1{2\kappa} \normb{
    \begin{pmatrix}\xi\\ \zeta \end{pmatrix}
    -
    \begin{pmatrix}\xi_0\\ \zeta_0 \end{pmatrix}
  }^2_2
  +\choice{
    \frac{\norm{\hat z}_2(\xi^2+\norm{\zeta}^2_2)}{2\lambda} & \ifq \xi\leq 0 ~,\\
    + \infty & \otherwise~.
  } \nonumber\\
  &=\frac{\lambda}{\lambda+\kappa \norm{\hat z}_2}\left(\min(0,\xi_0),\zeta_0\right).
\end{align}
\medskip
\noindent
$\bullet$ [SO]
The proximal operator $\prox_{\kappa\phi_{QD}^*}(\xi_0,\zeta_0)$ is given by
\begin{align}
  & \uargmin{\xi,\zeta}
  \frac1{2\kappa} \normb{
    \begin{pmatrix}\xi\\ \zeta \end{pmatrix}
    -
    \begin{pmatrix}\xi_0\\ \zeta_0 \end{pmatrix}
  }^2_2
  +
  \iota_{\{ 0 \} \times B_2^\lambda}(\xi_0, \zeta_0)
  \nonumber
  \\
  &=
  \frac{\lambda}{\max(\lambda, \norm{\zeta_0}_2)} (0, \zeta_0)~.
\end{align}
\medskip

\noindent
$\bullet$ [SD]
The proximal operator $\prox_{\kappa\phi_{QD}^*}(\xi_0,\zeta_0)$ is given by
\begin{align}
  &
  \uargmin{\xi,\zeta}
  \frac1{2\kappa} \normb{
    \begin{pmatrix}\xi\\ \zeta \end{pmatrix}
    -
    \begin{pmatrix}\xi_0\\ \zeta_0 \end{pmatrix}
  }^2_2
  \nonumber
  \\
  & \hspace{1.5cm}
  +\choice{
    0& \ifq  \sqrt{\norm{\zeta}^2_2+(\xi+\lambda/2)^2}\leq\lambda/2~,\\
    + \infty & \otherwise~.
  }\\
  & = \frac\lambda{2}\frac{(\xi_0+\lambda/2,\zeta_0) }{\max(\lambda/2,\sqrt{\norm{\zeta_0}^2_2+(\xi_0+\lambda/2)^2})} -\left(\lambda/{2},0\right),
\end{align}
which just corresponds to the projection of the $\ell_2$ ball of $\RR^b$ of radius $\lambda/2$ and center $(-\lambda/2,0)$.

%
%

\bibliographystyle{spmpsci}      

\end{document}